\newtheoremstyle{shortspace}
 {2pt}
  {2pt}
  {\itshape} 
  {} 
  {\bfseries} 
  {.} 
  {5pt plus 1pt minus 1pt} 
  {} 
\theoremstyle{shortspace}
\algnewcommand\algorithmicforeach{\textbf{for each}}
\algrenewcommand\algorithmicrequire{\textbf{Input}}
\algrenewcommand\algorithmicensure{\textbf{Output}}
\newtheorem{theorem}{Theorem}
\newtheorem{lemma}{Lemma}
\newtheorem{corollary}{Corollary}
\newtheorem{example}{\quad Example}
\def\*#1{\bm{#1}}
\def\btheta{\mathop{\theta\kern-.45em\hbox{$\theta$}}\nolimits}
\def\bfeta{\mathop{\eta\kern-.5em\hbox{$\eta$}}\nolimits}
\def\bxi{\mathop{\xi\kern-.45em\hbox{$\xi$}}\nolimits}
\def\P{\mathbb P}
\def\E{\mathbb E}
\def\A{\mathbf a}
\def\B{\mathbf b}
\def\C{\mathbf c}
\newcommand{\RR}{\mathbb{R}}
\newcommand{\R}{\mathbb{R}}
\newcommand{\diag}{\hbox{diag}}
\newcommand\numberthis{\addtocounter{equation}{1}\tag{\theequation}}
\def\expandafter\normalsize\expandafter{%
    \normalsize%
    \setlength\abovedisplayskip{1.5pt}%
    \setlength\belowdisplayskip{2pt}%
    \setlength\abovedisplayshortskip{-8pt}%
    \setlength\belowdisplayshortskip{-2pt}%
}
\titlespacing*{\section}{0pt}{0pt}{0pt}
\titlespacing*{\subsection}{0pt}{0pt}{0pt}
\let\OLDthebibliography\thebibliography
\renewcommand\thebibliography[1]{
  \OLDthebibliography{#1}
  \setlength{\parskip}{1pt}
  \setlength{\itemsep}{1.5pt plus 0.3ex}
}
\title{\large
Repro Samples Method for a Performance Guaranteed
Inference in General and Irregular Inference Problems}
\author{Minge Xie and Peng Wang\footnote{Min-ge Xie is a Distinguished Professor, Department of Statistics, Rutgers, The State University of New Jersey, Piscataway, NJ 08854. Email: mxie@stat.rutgers.edu. Peng Wang is an Associate Professor, Department of Operations, Business Analytics and Information System,  University of Cincinnati, Cincinnati, OH 45221. Email: wangp9@ucmail.uc.edu.  The research is supported in part by NSF grants DMS2015373, DMS2027855, 
 NSF-DMS2311064, NSF-DMS2319260, NSF-DMS2515766 and NIH-1R01GM157610-1.}}
\date{}
\begin{document}
\def\spacingset#1{\renewcommand{\baselinestretch}%
{#1}\small\normalsize} \spacingset{1}

\maketitle

\thispagestyle{empty}
\begin{abstract} 
Rapid advances in data science require fundamentally new approaches to address prevalent inference problems for which regularity conditions and the large-sample~central limit theorem do not apply, such as those involving discrete or non-numerical~parameters or non-numerical data. This article presents an innovative and effective framework, called {\it repro samples method}, to conduct statistical inference in general settings, including the complex problems described above. The development leverages Fisher inversion techniques and artificial samples generated to mimic the observed data, and is broadly applicable and supported by rigorous theories. 
For commonly encountered irregular inference problems involving mixed types of (discrete/non-numerical and continuous) parameters, a three-step procedure is proposed to dissect the complicated inference task into manageable steps. The article also develops a unique matching scheme that turns parameter discreteness from an obstacle for forming inferential theories into an advantage for improving computational efficiency. The effectiveness of the proposed methodology is demonstrated using examples, including a case study on inference for a Gaussian mixture model with an unknown number of components that resolves a long-standing 
inference problem. Real data and simulation studies, with comparisons to existing approaches, demonstrate the superior performance of the proposed method. 
\end{abstract}

\vspace{-1.5mm}
\noindent 
Key words: Artificial  data; 
Discrete or non-numerical parameters;  Gaussian mixture; Generative model;  Inversion methods;
Parametric and nonparametric models

\spacingset{2}

\newpage
 \setcounter{page}{1}

\vspace{-2mm}
\section{
Introduction}\label{sec:intr}
\vspace{-1mm}

The approaches of 
creating 
artificial data to
help assess uncertainty  
for inference have proven to be an effective method in the literature; 
see, e.g., \citet{EfroTibs93,
fishman1996, 
Robert2016, Hannig2016, dalmasso2022}, 
 and references therein. Except for a few recent works, 
most of the approaches, however, still rely on large-sample central limit theorem (CLT)
to justify their inference validity.
Their applicability to more complex problems, especially those involving discrete or non-numerical parameters and non-numerical data 
is limited. 
In this article, we 
develop a fundamentally new 
and wide-reaching artificial-sample-inspired inferential framework, with 
supporting theories, 
for these complex problems and more.
The 
framework, called {\it repro samples method}, does not need to rely on CLT or likelihood functions.
It is especially effective for difficult {\it irregular inference problems}.
Here, following \cite{wasserman2020universal}, the irregular problems refer to those ``highly non-trivial''  and ``complex'' setups to which the regularity assumptions for large-sample theories do not apply. 
Although our focus is mostly on finite-sample inference, the 
proposed framework
has direct extensions to the situations where the large-sample theorem holds as well. 

Suppose sample data $\*Y \in {\mathcal Y}$ 
are generated from a model (distribution) 
$
\*Y | {\btheta} \sim F_{{\bf \theta}}(\cdot),$ 
where
${\btheta} \in \Theta$ is model parameters, which can either be numerical, non-numerical, or a mixture of both, or even functions or model indices. 
In most publications $\*Y$ typically have $n$ data points $\*Y= (Y_1, \ldots, Y_n)^\top$, but we 
allow $n=1$ or also $\*Y$ to be a summary or function of $\*Y$; they can also be non-numerical types (e.g., image; voice; object; etc.). 
We assume that we know how to simulate $\*Y$ from $F_{\bf \theta}(\cdot)$, given $\btheta$. 
Often,  $\*Y | {\btheta} \sim F_{{\bf \theta}}(\cdot)$ 
can be re-expressed as:
in the form of {\it a generative 
model}, which we adopt in this paper:
\begin{equation}\label{eq:1}
\*Y = G({\btheta}, \*U), 
\end{equation} 
where $G(\cdot, \cdot)$ is a known mapping from $\Theta \times {\mathcal U} \mapsto {\mathcal Y}$ and $\*U = (U_1, \ldots U_r)^\top \in {\mathcal U}$,  for some $r >0$, 
is a random vector 
whose distribution, say $D_U(\cdot)$, is free of $\*\theta$.

The generative model  (\ref{eq:1})  is frequently used in modern data science literature  \citep[e.g.,][among others]{dalmasso2022} and very general. It is natural for problems where $\*Y$~are generated 
(simulated) 
through an algorithm. The classical statistical model specification~using a family of (parametric)
density functions $Y_i \sim f_\theta(y)$,  for $i = 1, \ldots, n$, can also be re-expressed in this form $Y_i = G(\theta, U_i)$, with $G(\theta, \cdot) = F_\theta^{-1}(\cdot)$, $F_\theta(\cdot)$ being the cumulative distribution function  and $U_i \sim U(0,1)$. 
Additionally, for model distributions with a corresponding
probability measure on Borel spaces, it is always possible to separate the model parameters from the stochastic component --- By the Borel isomorphism theorem \cite[][Theorem 4.19]{kallenberg1997foundations}, any probability measure on a Borel space can be represented by a random element on $[0,1]$, with the corresponding distribution generated from a $U(0,1)$ random variable. Thus, for any model with a corresponding probability measure
on a Borel space, its sample random variable(s) $\*Y$  can be expressed as  
$
\*Y = \Psi(U),
$  
where $U \sim U(0,1)$,   $\Psi(\cdot)$ is a deterministic measurable bijection from $[0,1]$ to $\mathcal{Y}$, and the randomness of $\*Y$ is entirely represented by the random component $U$. The model parameters  are typically  characteristics (features) of the mapping function $\Psi$,  
$
\btheta = \btheta(\Psi),
$ 
which can be any types and may even be $\Psi$ itself.
In this paper, for notational convenience and to facilitate 
the inversion techniques to be used, we explicitly separate the model parameters and the random components by rewriting $\*Y = \Psi(U)$ as  
$
 \*Y = G(\mathbf{\theta}, \*U),
$  
where, to better accommodate practical use, we also allow $\*U  \in {\mathcal U} \subset \RR^r$ to be a $r \times 1$ random vector from a known distribution 
on a probability space $(\Omega, {\mathcal F}, \P)$. 
For ease of discussion, we further assume that the target model parameter $\btheta = \btheta(\Psi)$ is well-defined and identifiable, and that the sample data $\*Y$ provide enough information to make inference about $\btheta$. An additional discussion on model identifiability~is~in~Section~\ref{sec:6}.

Let  
$\*y_{obs} = G({\btheta}_0, \*u^{rel})$
be the realized observation with true parameter value $\*\theta_0$ and 
$\*u^{rel}$ be the corresponding (unobserved) 
realization of $\*U$. 
The repro samples framework
uses a simple yet fundamental idea: study  artificial samples 
obtained by mimicking the sampling mechanism; 
then use
them to help
quantify inference uncertainty.
Particularly, 
for any $\btheta \in \Theta$ and 
a copy of artificial  $\*u^* \sim \*U$, we can use 
(\ref{eq:1}) to get 
an artificial sample $\*y^* = G({\btheta}, \*u^*)$, 
which we refer to as a {\it repro sample}.
If $\btheta = \btheta_0$ and $\*u^*$ matches $\*u^{rel}$, the corresponding $\*y^*$  matches $\*y_{obs}$.
Inversely, 
if $\*y_{obs} = \*y^*$,  an artificial sample generated using $\btheta$ and a likely  realization $\*u^*$ from $\*U$, 
then we cannot dismiss the possibility that such a $\btheta$ equals $\btheta_0$.

The above (inversion) idea can be traced back to R.A. Fisher in which one can solve a model equation to obtain $\*\theta_0$ if both $\*y_{obs}$ and $\*u^{rel}$ were completely given (known). However, Fisher's continuously treating the unknown realization $\*u^{rel}$ as random leads to unsolvable obstacles and his version of fiducial inference being viewed as his ``biggest blunder'' 
(e.g., \citealt{Efron1998, Hannig2016, Thornton2022}).   
In this paper, we will use two inversion ideas and artificial samples $\*u^*$ or $\*y^*$ to develop a general {\it frequentist} repro samples framework to construct level-$(1 -\alpha)$ confidence sets, $\alpha \in (0,1)$, for parameters of interest.
The development covers both regular and irregular inference problems under both finite and large-sample settings.
We show that the constructed confidence sets 
have a theoretically guaranteed frequency coverage rate. We also show that, for any confidence~set constructed by inverting a Neyman-Pearson test, 
the proposed method can always ~construct~either the same~or a better confidence set. 
We further provide discussions on handling nuisance~parameters.

We specifically 
focus on complex and difficult inference problems that involve mixed types of parameters $\btheta = (\eta, {\*\beta}^\top)^\top$, where $\eta$ are discrete or non-numerical and, when given $\eta$, the remaining parameters $\*\beta$ may 
be handled by a regular approach. 
An effective three-step approach
is proposed to dissect the difficult inference problems into manageable steps. Moreover, the discrete nature of $\eta$, often an obstacle for regular inference theory, becomes a beneficial feature to help greatly improve computational and inferential efficiency within our framework.
This development is applied to a case study example of a Gaussian mixture model with an unknown number of components, where we resolve a long-standing open question on quantifying uncertainty in estimating the discrete number of components and associated parameters.
Note that, for inference problems involving discrete or non-numerical parameters, 
their point estimators, if exist, do not provide any uncertainty quantification. The often-used bootstrap approaches lack theoretical support. Bayesian methods may construct credible sets, but the sets are highly sensitive to the prior choices and their frequency coverage performance is typically poor even when the sample sizes are large \citep{hastie2012,Kass1995}. 
The difficulties 
in these bootstrap and Bayesian methods are from the fact that the 
large sample 
CLT 
and Bernstein-von-Mises theorem do not apply for estimators of the discrete or non-numerical parameters. On the contrary, our repro samples method does~not~have~such~constraints.

\noindent{\bf Major contributions and significance:}

\vspace{-4mm}
\begin{enumerate}[leftmargin=*]
\item We develop 
a new  
and broadly applicable inferential framework to construct 
performance-guaranteed confidence sets 
    for parameters of interest under  
    generative model settings. 
 In addition to 
  conventional inference problems, 
  it
  is particularly 
  useful for irregular inference problems with  
 discrete or non-numerical parameters and non-numerical data.

\vspace{-4mm}
\item 
We show that 
a confidence set obtained by the repro samples method is 
either the same as or smaller than that obtained by inverting the Neyman-Pearson test.
We obtain~an optimality result that an optimal Neyman confidence set corresponds to an optimal confidence set by the repro samples method; we also provide examples that the repro samples method improves the result by the Neyman method when it is not optimal.

\vspace{-4mm}
\item The repro samples method provides explicit 
solutions in many cases. In other cases when explicit expressions are unavailable, we provide a practical guideline and an  generic algorithm, often with Monte-Carlo simulation, to 
compute confidence sets. Two techniques are provided to significantly reduce the computing costs: (a) for discrete parameters, we utilize a unique many-to-one inversion mapping 
to develop a novel data-driven approach, with supporting theories, to reduce the parameters' search space; (b) for continuous parameters, we adopt a quantile regression technique used  by \cite{dalmasso2022}~to~avoid~grid~search.  

\vspace{-4mm}
\item  
We provide a general profiling method to handle nuisance parameters 
for finite-sample inference in 
the repro samples framework.
Unlike the conventional `hybrid' or `likelihood profiling' methods that often rely on  asymptotics \citep[e.g.,][]{Chuang2000, dalmasso2022}, the proposed 
method is developed based on specifically constructed $p$-values,
and 
it
guarantees the finite-sample performance 
of the repro samples confidence sets. 

\vspace{-2mm}
\item 
In a case study illustration, we provide a solution to an open and “highly nontrivial” inference problem on how to quantify the uncertainty in estimating the unknown number of components (say $\tau_0$) and the associated parameters in a Gaussian mixture model~\citep[cf.,][]{wasserman2020universal}. 
Numerical studies based on a real data setting
show that~the~repro samples method is the only approach among existing frequentist and Bayesian approaches that can cover $\tau_0$ with the desired accuracy and also be effective~for~other~parameters.
\end{enumerate}

\vspace{-2mm}
\noindent
{\bf Relation to other work}: 
 Besides
the classical Neyman-Pearson test and the recent work on {\it simulation-based inference} (SBI) \citep{dalmasso2022},
the repro samples method~stems from and is closely related to several simulation-based approaches across Bayesian, frequentist and fiducial (BFF) paradigms \citep{Berger2020},  namely,  {\it Bootstrap} \citep{EfroTibs93},  {\it Approximate Bayesian Computation} (ABC) \citep{Robert2016}, 
{\it Generalized fiducial Inference} (GFI) \citep{Hannig2016} and {\it Inferential Model} (IM) \citep{Martin2015}. 
Specifically, the two key inversion techniques, the direct {\it Fisher inversion} 
in Section~\ref{sec:candidate_general} and its generalization to operate on Borel sets, the {\it Fisher--Dempster inversion}  
in Section~\ref{sec:general}, are also a key technique used in GFI \citep{Hannig2016} and IM \citep{Martin2015}, respectively, although our formulations and applications are a little different, and our repro samples method is fully frequentist and does not involve any fiducial justification or use of Dempster--Shafer calculus.
Moreover, 
bootstrap, GFI and most other large-sample methods
rely on CLT, except in  
special cases; they often cannot handle irregular inference problems involving discrete/non-numerical parameters or structures. Furthermore, 
the required sufficiency in ABC, the dependence of BvM-type theorems (thus CLT) for inference justification in GFI and ABC \citep{Hannig2016, Li2016} and the unaddressed 
$\epsilon$ (approximation error) question \citep{Li2016} in ABC and GFI have hampered the use of these approaches in practice.  
Finally, 
{\it universal inference} \citep{wasserman2020universal} is  another novel framework 
for performance-guaranteed  
inferences
without classical regularity conditions, although it relies on a Markov inequality to justify its validity (test size) at some expanse of power \citep{Dunn2022, TseDavison2023}.  
A Gaussian mixture example is also studied in \cite{wasserman2020universal} but, unlike our method, it cannot provide a two-sided confidence set for $\tau_0$ and has less power than ours. 
Due to space limit, we provide detailed reviews and comparisons of these related methods in Appendix~\ref{sec:BFFcomprison}.

\noindent
{\bf Organization of the remaining sections:} 
Section 2 develops a basic yet general framework of the repro samples method with supporting theories. A connection and comparison to the classical Neyman inversion approach
is also provided.  
Section 3 contains~further~developments on handling nuisance parameters,  
constructing a data-driven candidate set to improve computing efficiency, and developing an effective three-step procedure tailored 
for problems with mixed types of parameters. A practical guideline for the use of the repro samples method is also provided.
Section 4 is a case study that addresses a long-standing open inference problem in the Gaussian mixture model. 
Section 5 contains real data and simulation studies.
Section 6 includes further discussions.  
Due to space limit, proofs of all theorems,
as well as additional algorithms, examples, and numerical results, are provided in the appendices.

\section{
A general inference framework by repro samples}
\label{sec:general}

Let  $\*U \in {\mathcal U} \subset \RR^r$ be a measurable random vector 
on a probability space $(\Omega, {\mathcal F}, \P)$ with~$\{ \*U \leq \*u\} = \{\omega \in \Omega: \*U(\omega) \leq \*u\}$. 
The randomness of $\*Y = \*Y(\omega)$ is from $\*U = \*U(\omega)$. 
For a given $\btheta$,  
we call a simulated $\*u^* \sim \*U$ a {\it repro sample} of $\*u^{rel}$ and
$\*y^* = G({\btheta}, \*u^*)$  
a {\it repro~sample}~of~$\*y_{obs}$.

\subsection{\hspace{-3mm} Basic version: a level-(1 - $\alpha$) confidence set by Fisher-Dempster inversion}\label{sec:2.1}

Let $T(\cdot, \cdot)$ be a (user specified) mapping function from ${\mathcal U} \times \Theta \to$ ${\mathcal T} \subseteq \RR^{q}$, for some $q \leq n$, which we refer to as a {\it nuclear mapping}.
We first quantify the  uncertainty of the random $\*U$ via the function $T(\cdot, \cdot)$ and a Borel set $B_{1 -\alpha}(\btheta) \subset  {\mathcal T}$: 
\begin{equation}
    \label{eq:B}
    \P \left\{T(\*U, \btheta)  \in B_{1 -\alpha}(\btheta) \right\}\ge 1 - \alpha,
\end{equation}
for a given $\btheta$. Then, let ``s.t.'' be an abbreviation for ``such that,'' we construct a set in $\Theta$:
\begin{equation}
\label{eq:G1}
\Gamma_{1-\alpha}(\*y_{obs}) = \big\{\btheta: \exists \, \*u^* \in {\mathcal U} \mbox{ s.t. }  \*y_{obs} = 
G({\btheta}, \*u^*),
\,  
T(\*u^*, \btheta)  \in B_{1 - \alpha}(\btheta)  \big\} \subset \Theta. 
\end{equation}
That is, for a potential $\btheta \in \Theta$, if there exists a  realization $\*u^*$, with $T(\*u^*, \btheta)  \in B_{1 - \alpha}(\btheta)$,~such that the repro sample ${\*y}^* = G( {\btheta}, {\*u}^*)$ matches with  $\*y_{obs}$, then we keep this $\btheta$ in set $\Gamma_{1-\alpha}(\*y_{obs})$. Here, the observed $\*y_{obs}$ can be produced by any ${\btheta} \in \Gamma_{1-\alpha}(\*y_{obs})$, with a corresponding potential realization $\*u^*$ (satisfying (\ref{eq:B})), so we cannot rule out the possibility of these $\btheta = \btheta_0.$
Theorem~\ref{thm:1} below states that $\Gamma_{1-\alpha}(\*y_{obs})$ is a level $1 - \alpha$ confidence set in both finite-sample and large-sample (or other) approximations, followed by a corollary on hypothesis testing. 
\begin{theorem}\label{thm:1} Assume model (\ref{eq:1}) holds with $\*\theta = \*\theta_0$. If inequality (\ref{eq:B}) holds exactly for any fixed $\*\theta$, then the following inequality holds exactly 
\begin{equation}
\label{eq:v1}
\P\left\{ {\btheta}_0 \in \Gamma_{1-\alpha}(\*Y)\right\} \ge 1 - \alpha 
\,\,\, \hbox{for $ 0< \alpha <1$}.
\end{equation}
Furthermore, if the inequality (\ref{eq:B}) holds approximately 
 with $\P \big\{T(\*U, \btheta)  \in B_{1 -\alpha}(\btheta) \big\}\ge \alpha\{1 + o(\delta^{'})\}$, then (\ref{eq:v1}) holds approximately 
 with $\P\left\{ {\btheta}_0 \in \Gamma_{1-\alpha}(\*Y)\right\} \ge \alpha\{1 + o(\delta^{'})\}$, for $0< \alpha <1$, where $\delta^{'} > 0$ is 
a small value 
that may or may not
depend on sample size $n$. 
\end{theorem}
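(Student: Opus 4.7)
The plan is to reduce the statement to the event-inclusion observation already sketched in the paragraph immediately preceding (4), now formalized at the level of the nuclear mapping $T$. First I would fix the true $\btheta_0 \in \Theta$ and observe that, under model~(1), the random observation $\*Y$ and the latent random vector $\*U$ are tied together by $\*Y = G(\btheta_0, \*U)$ pointwise in $\omega \in \Omega$. Consequently, in the defining condition of $\Gamma_\alpha(\*Y)$ in~(4), the realization $\*U(\omega)$ itself is always a legitimate witness $\*u^*$ for the matching constraint $\*Y = G(\btheta_0, \*u^*)$. The only additional requirement for placing $\btheta_0$ into $\Gamma_\alpha(\*Y)$ is the nuclear-mapping condition $T(\*u^*, \btheta_0) \in B_\alpha(\btheta_0)$, evaluated at $\*u^* = \*U$.

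This immediately yields the pointwise set-theoretic inclusion
\begin{equation*}
\big\{T(\*U, \btheta_0) \in B_\alpha(\btheta_0)\big\} \;\subseteq\; \big\{\btheta_0 \in \Gamma_\alpha(\*Y)\big\}.
\end{equation*}
Monotonicity of $\P$ then gives
\begin{equation*}
\P\big\{\btheta_0 \in \Gamma_\alpha(\*Y)\big\} \;\ge\; \P\big\{T(\*U, \btheta_0) \in B_\alpha(\btheta_0)\big\},
\end{equation*}
and invoking hypothesis~(2) at the specific value $\btheta = \btheta_0$ delivers the exact inequality (5). For the approximate assertion, the inclusion displayed above is purely deterministic and does not depend on which lower bound $B_\alpha(\btheta_0)$ satisfies, so the same monotonicity step combined with the approximate version $\P\{T(\*U, \btheta_0) \in B_\alpha(\btheta_0)\} \ge \alpha\{1 + o(\delta')\}$ transfers the $o(\delta')$ remainder directly to the coverage probability on the left.

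The main obstacle I anticipate is not in the logic, which is essentially a one-line witness substitution, but in verifying that the event $\{\btheta_0 \in \Gamma_\alpha(\*Y)\}$ is measurable. Because $\Gamma_\alpha(\*y_{obs})$ is defined via an existential quantifier over $\*u^* \in \mathcal{U}$, its indicator is a projection onto $\Theta$ of a set cut out in $\mathcal{U} \times \Theta$ by the Borel constraints in~(4), and projections of Borel sets are in general only analytic. I would handle this either by imposing mild regularity conditions (Polish $\mathcal{U}$ and $\Theta$, Borel-measurable $G$ and $T$, and Borel $B_\alpha(\btheta)$), under which the projection is universally measurable and the bound holds with $\P$ extended to its completion, or by recording the conclusion in terms of the inner probability $\P_*\{\btheta_0 \in \Gamma_\alpha(\*Y)\} \ge \alpha$, which is automatic from the inclusion and is sufficient for statistical purposes. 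With that caveat out of the way, the remainder of the argument is pure bookkeeping.
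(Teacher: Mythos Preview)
Your argument is correct and matches the paper's proof essentially line for line: both establish the event inclusion $\{T(\*U,\btheta_0)\in B_\alpha(\btheta_0)\}\subseteq\{\btheta_0\in\Gamma_\alpha(\*Y)\}$ by taking $\*u^*=\*U$ as the witness, then apply monotonicity of $\P$ and hypothesis~(\ref{eq:B}) at $\btheta=\btheta_0$. Your additional discussion of measurability is more careful than the paper, which simply takes measurability of $\{\btheta_0\in\Gamma_\alpha(\*Y)\}$ for granted.
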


\begin{corollary}\label{col:test}
For $\Theta_0 \subsetneq \Theta$ and 
a test $H_0: \*\theta \in \Theta_0$ vs $H_1: \*\theta \not\in \Theta_0$, we can define a p-value 
\begin{align*} 
p(\*y_{obs})=   1- \inf\nolimits_{\*\theta \in \Theta_0}\left[\inf\left\{{1- \gamma}: 
\*\theta \in \Gamma_{1 - \gamma}
(\*y_{obs})\right\}\right],
\end{align*}
where $\Gamma_{1-\alpha}(\*y_{obs})$ is by  \eqref{eq:G1}.
Rejecting $H_0$ when
$p(\*y_{obs}) \leq \alpha$ 
leads to a size $\alpha$ test, $0 < \alpha < 1$. 
\end{corollary}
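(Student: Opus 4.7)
The plan is to exploit the duality between the constructed confidence sets $\Gamma_{\alpha}(\*y_{obs})$ and hypothesis testing. Writing $\alpha^*(\btheta; \*y) = \inf\{\alpha : \btheta \in \Gamma_{\alpha}(\*y)\}$, the definition of $p(\*y_{obs})$ can be rewritten as $p(\*y_{obs}) = 1 - \inf_{\btheta \in \Theta_0} \alpha^*(\btheta; \*y_{obs})$. Thus the rejection event $\{p(\*Y) \le \gamma\}$ is identical to $\{\inf_{\btheta \in \Theta_0} \alpha^*(\btheta; \*Y) \ge 1-\gamma\}$. I would then bound the probability of this event under any $\btheta_0 \in \Theta_0$ (i.e., under $H_0$) by invoking Theorem~\ref{thm:1}.

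The key reduction is: if $\btheta_0 \in \Theta_0$, then $\alpha^*(\btheta_0; \*Y) \ge \inf_{\btheta \in \Theta_0} \alpha^*(\btheta; \*Y)$ trivially, and therefore
\begin{equation*}
\{p(\*Y) \le \gamma\} \;\subseteq\; \{\alpha^*(\btheta_0; \*Y) \ge 1-\gamma\}.
\end{equation*}
Next, from the definition of $\alpha^*(\btheta_0; \*Y)$ as an infimum, the event $\{\alpha^*(\btheta_0; \*Y) \ge 1-\gamma\}$ is contained in $\{\btheta_0 \notin \Gamma_{\alpha}(\*Y)\}$ for every $\alpha < 1-\gamma$. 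Applying Theorem~\ref{thm:1} at such an $\alpha$ gives $\P(\btheta_0 \notin \Gamma_{\alpha}(\*Y)) \le 1-\alpha$, hence $\P(p(\*Y) \le \gamma) \le 1-\alpha$ for every $\alpha < 1-\gamma$.

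Taking $\alpha \uparrow (1-\gamma)$ in this bound yields $\P_{\btheta_0}(p(\*Y) \le \gamma) \le \gamma$ for any $\btheta_0 \in \Theta_0$, which is exactly the size-$\gamma$ claim. The main subtlety I anticipate is the boundary behavior at $\alpha = 1-\gamma$: because the coverage statement of Theorem~\ref{thm:1} is a one-sided inequality and $\alpha^*(\btheta_0;\*Y)$ is defined via an infimum that need not be attained, one cannot directly plug in $\alpha = 1-\gamma$. Handling this through the limit from below (using that $\{\alpha^*(\btheta_0; \*Y) \ge 1-\gamma\} = \bigcap_{\alpha < 1-\gamma} \{\btheta_0 \notin \Gamma_{\alpha}(\*Y)\}$) is the only delicate step; no monotonicity of $\Gamma_{\alpha}$ in $\alpha$ is actually needed, since for every fixed $\alpha < 1-\gamma$ the containment in $\{\btheta_0 \notin \Gamma_{\alpha}(\*Y)\}$ already follows from the definition of the infimum $\alpha^*$. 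The rest of the argument is a one-line application of Theorem~\ref{thm:1}.
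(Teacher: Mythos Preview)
Your proposal is correct and follows essentially the same route as the paper: reduce to the single parameter $\btheta_0\in\Theta_0$ via the trivial infimum bound, rewrite $\{\alpha^*(\btheta_0;\*Y)\ge 1-\gamma\}$ as $\{\btheta_0\notin\Gamma_{\alpha}(\*Y)\ \forall\,\alpha<1-\gamma\}$, and invoke Theorem~\ref{thm:1}. Your treatment of the boundary step (passing to the limit $\alpha\uparrow 1-\gamma$) is in fact more explicit than the paper's, which compresses the entire argument into a single chain of inequalities and simply asserts the final $\le\gamma$.
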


In Theorem~\ref{thm:1},
 $\delta^{'}$ 
 may depend  on $n$ with $\delta^{'} \to 0$ as $n \to 0$ for a large-sample~approximation, but there are also examples in which $\delta^{'}$ does not involve $n$.
For example,~if~$Y | \theta  = \lambda \sim $ $  \text{\small Poisson}(\lambda)$, then 
$U =$ $ \frac{Y -  \lambda}{\sqrt{\lambda}}
\to N(0,1)$ for a large $\lambda$. So, 
taking $T(U, \lambda) = U$, we~have~$\P \{T(U,  \lambda) $ $ \in B \} =  \int_{{t} \in B}
\phi({t}) d {t} \{1 + o(\frac1\lambda)\}$ for any set $B$; thus $\delta^{'} = \frac1\lambda$. 
Here, 
 $\phi(t)$ is $N(0,1)$ density~function. Also, in (\ref{eq:B}) and (\ref{eq:v1}), 
we keep 
the inequalities ``$\geq$'' to cover the situations when either~$\*Y$~or~$\*\theta_0$ is discrete. These two ``$\geq$'' can be replaced by
 ``='' in other situations and the method does~not suffer systemic power loss (unlike the universal inference method --- see an elaboration~in Appendix~\ref{sec:universal-power-loss}).  
 Further discussions on optimality in comparison with Neyman-Pearson test is in Section~\ref{sec:NeymanInversion}, where we show the repro samples method is broader and 
 can achieve the same or better results. Asymptotic discussions may help but are not needed for efficiency. 
 
In the repro samples development, the role of the nuclear mapping  $T(\cdot, \cdot)$ is similar to, but broader than, that of a test statistic in the classical Neyman-Pearson framework. We~assume throughout the paper that  $T(\cdot, \cdot)$ is provided (user-specified), and    
a general guide on its choice is given in  Section~\ref{sec:guide}.
Regardless of its choice, the repro samples method always has the desired coverage rate, but the power may be impacted.
Two obvious special cases~are: (i) 
$T(\*U,\btheta) = \*U$, unrelated to $\btheta$, and (ii) $T(\*U,\btheta) = \widetilde T(\*Y, \btheta)$ being a test statistic.   
The special case (i) 
is closely related to the IM development of \cite{martin2013inferential}, although IM attempts to achieve a higher level goal of producing a  probabilistic inference~for~$\btheta$~that requires random sets (not a single fixed Borel set) and use of imprecise probability system {\it Dempster-Shafer calculus}.
Our development also allows $T(\*U,\btheta)$ depending on $\btheta$ under consideration, and it provides a great  flexibility which is necessary for many complex inference problems.
The special case (ii)  is closely related to the classical Neyman inversion method, although $T(\cdot, \cdot)$ does not need to be a test statistic and is more general. Section~\ref{sec:NeymanInversion} next investigates in detail the special case (ii) and its connection to the classical Neyman-Pearson~method.

To illustrate the proposed repro samples method, we consider below a binomial example. 

\vspace{-2mm}
\begin{example}[Binomial sample with success probability $\theta_0$] \label{ex:bin}
Assume $Y \sim Binomial(r,\theta)$ $0 < \theta < 1$.  
In the form of  (\ref{eq:1}), $Y = \sum_{i =1}^r {\bf 1}{(U_i \leq \theta_0)}$ for $U_1, \ldots, U_r \sim U(0,1)$. Correspondingly, $y_{obs}$ $=  \sum_{i =1}^r {\bf 1}{(u_i^{rel} \leq \theta_0)}$, for a realized $y_{obs}$ from $\theta_0$ and $\*u^{rel} = (u_1^{rel} , \ldots, u_r^{rel} )$.~We~consider a nuclear mapping function
$T(\*u,  \theta) =  \sum_{i =1}^r {\bf 1}(u_i $ $\leq \theta)$, so $T(\*U,  \theta) =   \sum_{i =1}^r {\bf 1}{(U_i \leq \theta)} \sim Binomial(r, \theta)$, for each given $\theta \in \Theta = (0,1)$. Let $B_{1 -\alpha}(\theta) = [a_L(\theta)  , a_U(\theta)]$ in (\ref{eq:B}) be the shortest interval whose   
bounds are $(i,j)$ pairs in $\{(i,j): 
    \sum^j_{k=i} {r \atopwithdelims( ) k} \theta^k (1 - \theta)^{(r-k)} \geq 1 - \alpha\}$;~i.e.,  
\begin{equation}
    \label{eq:B-bounds}
\left(a_L(\theta), a_U(\theta)\right) = 
\arg\min\nolimits_{\big\{(i,j): \sum^j_{k=i} {r \atopwithdelims( ) k} \theta^k (1 -  \theta)^{(r-k)} \geq 1 - \alpha \big\}}
    |j-i|.
\end{equation} 
Equation (\ref{eq:B}) can be  directly verified  
and, after simplification, 
the confidence set (\ref{eq:G1}) becomes 
\begin{eqnarray}
\label{eq:BinExample}
\Gamma_{1-\alpha}(y_{obs}) &=& 
 \left\{ \theta \big |a_L(\theta) \leq y_{obs} \leq  a_U(\theta) \right\}.
\end{eqnarray}
Due to space limit, the derivation to get (\ref{eq:BinExample}) and also a numerical study are placed in~Appendix~\ref{sec:bin_example}. The numerical study shows the finite-sample interval 
(\ref{eq:BinExample}) has the desired coverage rate in all settings (including $n \theta_0 < 5$ cases). It performs similarly~to~(slightly~better than) Stern's exact method and outperforms other four existing, both exact and asymptotic,~methods. 
\end{example}

\subsection{Test statistic as nuclear mapping function and Neyman inversion of test}\label{sec:NeymanInversion}

Under the classical hypothesis testing setup $H_0: \btheta_0 = \btheta$ vs $H_1: \btheta_0 \not = \btheta$, we often construct a test statistic, say $\widetilde T(\*y_{obs}, \btheta)$, and assume we know the distribution of $\widetilde T(\*Y, \btheta)$ for $\*Y = G(\btheta,\*U)$ generated under $H_0: \btheta_0 = \btheta$. 
By the classical testing method, we can derive 
a level $1 - \alpha$ acceptance region $A_{1-\alpha}(\btheta) = \big\{\*y_{obs} \big| \widetilde T(\*y_{obs}, \btheta) \in B_{1 -\alpha}(\btheta)\big\}$, where set $B_{1 -\alpha}(\btheta)$ satisfies~$\P\big\{\widetilde T(\*Y, \btheta) \in   
B_{1 -\alpha}(\btheta)\big\} \ge $ $ 1 - \alpha$. Then, by the {\it Neyman's inversion} (test duality) method,  
a level $1 - \alpha$ confidence~set~is $\widetilde \Gamma_{1-\alpha}(\*y_{obs}) =  \left\{\btheta:  \*y_{obs} \in  A_{1-\alpha}(\btheta) \right\}
= \left\{\btheta:  \widetilde T(\*y_{obs}, \btheta)  \in B_{1 - \alpha}(\btheta) \right\}. $

Now suppose our nuclear mapping $T(\*u, \btheta)$
is defined through
the same test statistic as:
\begin{equation} \label{eq:tt1}
T(\*U, \btheta) = \widetilde T(\*Y, \btheta) = \widetilde T(G({\btheta}, \*U), \btheta), 
\end{equation} 
where $\*Y = G(\btheta,\*U)$ is from 
the same $\btheta$. Since $\P\left\{T(\*U, \btheta) \in  B_{1 -\alpha}(\btheta)\right\} = \P\big\{\widetilde T(\*Y, \btheta) \in  B_{1 -\alpha}(\btheta)\big\}$ 
$\ge 1 - \alpha$ for the same $B_{1 -\alpha}(\btheta)$, the repro samples confidence set (\ref{eq:G1}) becomes:
\begin{eqnarray*}
\Gamma_{1-\alpha}(\*y_{obs}) 
 =  \left\{\btheta: \exists \*u^* \in {\mathcal U} \mbox{ s.t. }  \*y_{obs} = G({\btheta}, \*u^*), \widetilde T(G({\btheta}, \*u^*), \btheta)  \in B_{1 - \alpha}(\btheta)  \right\}.
\end{eqnarray*}
Theorem~\ref{thm:NP} below states that 
$\widetilde \Gamma_{1-\alpha}(\*y_{obs})$ by  Neyman inversion contains the above
$\Gamma_{1-\alpha}(\*y_{obs})$. 
 \begin{theorem}\label{thm:NP}
 If the nuclear mapping function is defined through a test statistic as in (\ref{eq:tt1}), then we have $\Gamma_{1-\alpha}(\*y_{obs}) \subseteq  \widetilde \Gamma_{1-\alpha}(\*y_{obs})$. The two sets are equal  
$\Gamma_{1-\alpha}(\*y_{obs}) = \widetilde \Gamma_{1-\alpha}(\*y_{obs})$, when $\widetilde \Gamma_{1-\alpha}(\*y_{obs}) \subseteq \big\{\btheta:  \*y_{obs} = G({\btheta}, \*u^*), 
\exists \, \*u^* \in {\mathcal U} \big\}$.
 \end{theorem}
 
It is possible that $\Gamma_{1-\alpha}(\*y_{obs}) \subsetneq \widetilde \Gamma_{1-\alpha}(\*y_{obs})$ strictly. The following is such an example.

\begin{example}\label{ex:3}
  Let $\*y_{obs} = (y_1, \ldots, y_n)^\top$ be an observed sample with $\theta_0$ from the model~$Y_i$ $ = \theta + U_i,$ for  $U_i \sim U(-1,1),$  $i = 1, \ldots, n$, and $\Theta 
  = {\mathcal Y} = (-\infty, \infty)$. 
 An unbiased~(also $\sqrt{n}$~consistent) point estimator 
 is $\bar Y = \frac 1n \sum_{i=1}^n Y_i$. Suppose we
   use the test statistic~$\widetilde T(\*y_{obs}, \theta) = \bar y - \theta$. Since $n \{(\bar Y  - \theta) $ $ + 1\}/2 = \sum_{i=1}^n \frac{U_i + 1}2$ follows an Irwin-Hall distribution under $H_0: \theta_0 = \theta$,  
 a level $95\%$ confidence interval by the Neyman inversion 
 method  is $\widetilde \Gamma_{.95}(\*y_{obs}) = (\bar y - \frac2n q_{.975} + 1, \bar y + \frac2n q_{.975} - 1)$. Here, $q_{.975}$ is the $97.5\%$ quantile~of the Irwin-Hall distribution. 
 If we use the repro samples method, our confidence interval
 $\Gamma_{.95}(\*y_{obs})$ $ =  \big\{\theta: $ $ y_{i} = \theta +  u_i^*,
 i = 1, $ $\ldots,$ $ n; \, \bar y \in (\theta - \frac2n q_{.975} + 1, \theta + \frac2n q_{.975} - 1);
\exists  \*u^* \in (-1, 1)^n \big\} = \left\{\theta: \theta \in \cap_{i=1}^n (y_i -1, y_i+1) \right\} \cap \widetilde \Gamma_{.95}(\*y_{obs})$. Since
the constraint $\{\theta \in \cap_{i=1}^n (y_i -1, y_i+1)\}$ is often in tack,
the strictly smaller statement $\Gamma_{.95}(\*y_{obs}) \subsetneq \widetilde \Gamma_{.95}(\*y_{obs})$ holds for many~realizations of $\*y_{obs}$. 
Our numerical study
(due to space limit placed in Appendix~\ref{sec:example_sec2}) also confirms this conclusion empirically --- 
both intervals have desired empirical coverage rates $95.1\%$, but among the $1,000$ repetitions,  $219$ times 
$\Gamma_{.95}(\*y_{obs}) = \widetilde \Gamma_{.95}(\*y_{obs})$ and $781$ times $\Gamma_{.95}(\*y_{obs}) $ $\subsetneq \widetilde \Gamma_{.95}(\*y_{obs})$. The average length  of the repro samples intervals $|\Gamma_{.95}(\*y_{obs})| = 0.915$ is shorter than that of $|\widetilde \Gamma_{.95}(\*y_{obs})| = 1.292$. 
\end{example}

There are two immediate implications from the special case with a nuclear mapping $T(\cdot, \cdot)$ defined via a test statistic as in  (\ref{eq:tt1}).
First, by Theorem~\ref{thm:NP},the repro samples confidence set $\Gamma_{1 - \alpha}(\*y_{obs})$ 
is never worse than the one obtained by the classical testing approach, thus is~more desirable. Second, if the test statistic is optimal (in the sense it leads to a powerful test or an optimal confidence set), the corresponding repro samples confidence set $\Gamma(\*y_{obs})$~is~also optimal. 
Corollary~\ref{cor:UMA-NP} below is the formal statement,
where a level $1 - \alpha$ {\it uniformly most accurate} (UMA) confidence set (a.k.a. {\it Neyman shortest}) refers to a level $1 - \alpha$ confidence~set that minimizes the probability of false coverage (i.e., probability of covering a wrong parameter value) over a class of level $1 - \alpha$ confidence sets \citep[][\S 9.3.2]{tCAS90a}.  

\begin{corollary}\label{cor:UMA-NP}
(a) If a test statistic $\widetilde T(\*y_{obs}, \btheta)$  corresponds to the uniformly most powerful test and  $\widetilde \Gamma(\*y_{obs})$ is a level $1 - \alpha$ UMA confidence set, then the set $\Gamma(\*y_{obs})$ by the corresponding repro sample method is also a level $1 - \alpha$ UMA confidence set. \\
(b) If a test statistic $\widetilde T(\*y_{obs}, \btheta)$ corresponds to the uniformly most powerful unbiased test and  $\widetilde \Gamma(\*y_{obs})$ is a level $1 - \alpha$ UMA unbiased confidence set, then the confidence set $\Gamma(\*y_{obs})$ by the corresponding repro sample method is also a level $1 - \alpha$ UMA unbiased confidence set.
\end{corollary}

Finally, we would like to 
stress that the nuclear mapping $T(\cdot, \cdot)$ does not need to be a test statistic.
One such example, on inference for a nonparametric quantile with noise-perturbed observations (for privacy consideration), is provided in Example~\ref{ex:quantile_privacy} of Appendix~\ref{sec:example_sec2}, in which a nuclear mapping $T(\cdot, \cdot)$ is readily available but an effective test statistic is difficult to construct. 
Furthermore, Neyman-Pearson lemma suggests that the likelihood ratio~test~(LRT) is uniformly most powerful for a simple-versus-simple hypothesis, but it does not necessarily hold for a two-sided test. Thus, an interval constructed by inverting an LRT is not necessarily optimal. Example~\ref{ex:3a} of Appendix~\ref{sec:example_sec2}, a continuation of Example~\ref{ex:3}, demonstrates that the LRT confidence interval is not optimal, and a repro samples interval can improve it.

\subsection{Beyond a fully specified generative model and defining oracle parameters} \label{sec: extendedsetup}

The repro samples method is also applicable even in cases  when 
model \eqref{eq:1} is not fully given.
For instance, consider a nonparametric inference problem on the $\zeta$-th quantile  
of an 
unknown distribution $F$,  
say, 
$\theta_0 = F^{-1}(\zeta)$. Since sample
$Y \sim F$, we have $I(Y < \theta_0) \sim \text{Bernoulli}(\zeta)$~and  equation
$I(Y < \theta_0) = U$, where $U \sim \text{Bernoulli}(\zeta)$. Now suppose we observe data $\*y_{obs} =  (y_1^{obs}, $ $ \ldots, y_n^{obs})^\top$, for a fixed $n$, then the equation (treated as a given model specification) becomes 
\begin{equation} \label{eq:quantile}
    \sum\nolimits_{i=1}^{n} I(Y_i - \theta_0 <0) - \sum\nolimits_{i=1}^n U_i = 0,
\end{equation}
where $Y_i \overset{iid}{\sim} F$ and $U_i \overset{iid}{\sim} \text{Bernoulli}(\zeta)$. 
Equation (\ref{eq:quantile}) 
cannot be re-expressed in the form of (\ref{eq:1}).
Although based on (\ref{eq:quantile}),
we cannot generate a repro sample ${\*y}^*$ that is directly comparable to the observed  $\*y_{obs}$, we can still use the repro samples method. 
Particularly, let us
consider a {\it generalized generative 
model}, of which (\ref{eq:quantile}) is a special case: 
\vspace{-2mm}\begin{equation}
    \label{eq:AA}
    g(\*Y, \btheta, \*U) = \*0,
\end{equation} 
where $(\*Y, \btheta, \*U)$ is the same as in (\ref{eq:1}) and $g$ is a given  
function from ${\mathcal Y} \times \Theta \times {\mathcal U}  \to \R^s$.
The 
realization version is
  $g(\*y_{obs}, \btheta_0, \*u^{rel}) = \*0$.
In this case, 
we modify $\Gamma_{1-\alpha}(\*y_{obs})$ in (\ref{eq:G1})~as~follows: 
\begin{eqnarray}
\label{eq:G2}
\Gamma_{1-\alpha}(\*y_{obs}) = \big\{\btheta:  \exists 
\,\*u^* \in {\mathcal U} \mbox{ s.t. } 
 g(\*y_{obs}, \btheta, \*u^*) = 0, T(\*u^*, \btheta)  \in B_{1 - \alpha}(\btheta)
\big\} \subset \Theta.
\end{eqnarray}

A special case of (\ref{eq:AA}) that is relevant to machine learning and modern data science research is: 
Irrespective of the  assumed model structure, 
completely given or not, 
we only assume that the sample data $\*Y$  contain sufficient information to recover our target model parameter $\boldsymbol{\theta}$, in the sense that the parameter $\btheta$ can be recovered by an algorithm 
when given $\*Y$ and $\*U$:

\vspace{-4mm}
\begin{equation}
\label{eq:invertedH}
\btheta
= H(\*Y,\*U).
\end{equation}

\vspace{-2mm} \noindent
That is, when given observed $\*y_{obs}$ and its corresponding realized $\*u^{rel}$, we can express our target parameter $\btheta_0$ as $\btheta_0 = H(\*y_{obs}, \*u^{rel}).$
Here, $H(\cdot, \cdot)$ is a given algorithm or function. Under the generative model (\ref{eq:1}), for instance, 
$H(\cdot,\cdot)$ is typically an inversion  algorithm~that~solves~for 
\begin{equation}
\theta_0 = H(\*y_{obs}, \*u^{rel}) \overset{\rm def}{=} \arg \min\nolimits_{\theta} L\big(\*y_{obs}, G(\*\theta, \*u^{rel})\big),
    \label{eq:model-alg}
\end{equation}
where $L\big(\*y, \*y'\big)$ is a loss function that measures the difference between two copies of data $\*y$ and $\*y'$. Similarly, under (\ref{eq:AA}), 
$H(\cdot, \cdot)$ is the algorithm that solves $g(\*y_{obs}, \btheta, \*u^{rel}) = \*0$ for $\btheta_0$, though mathematically (\ref{eq:invertedH}) is a special case of (\ref{eq:AA}) with $g(\*y, \btheta, \*u) \overset{\text{def}}{=} H(\*y, \*u) - \btheta$. 
Under model (\ref{eq:invertedH}), we can modify the repro sample confidence set $\Gamma_{1-\alpha}(\*y_{obs})$ in (\ref{eq:G1}) as: 
\begin{eqnarray}
\label{eq:G2h}
\Gamma_{1-\alpha}(\*y_{obs}) = \big\{\btheta = H(\*y_{obs}, \*u^*):  \exists 
\,\*u^* \in {\mathcal U} \mbox{ s.t. } 
  T(\*u^*, \btheta)  \in B_{1 - \alpha}(\btheta)
\big\} \subset \Theta.
\end{eqnarray}

Note that, in more complex problems, especially when it involves nonconvex optimization, we may get a local solution from (\ref{eq:model-alg}) instead of the $\btheta_0$ that generated $\*y_{obs}$.
In this~case, we refer to such a local solution $\btheta_0 = H(\*y_{obs}, \*u^{rel})$ as an {\it oracle parameter}. Our~repro samples inference is then for this algorithm $H(\cdot, \cdot)$ specific parameter. Here, with slight abuse of notation, we still use $\btheta_0$ for the oracle parameter, and we further assume it is uniquely defined given the algorithm $H(\cdot, \cdot)$. 
The same model assumption  (\ref{eq:invertedH}) is also used in the development of {\it extended fiducial inference} \citep{liang2025extended}, although the authors imposed a further requirement that $\btheta_0 = H(\*y_{obs}, \*u^{rel})$ is the model parameters $\btheta_0$ that generated~$\*y_{obs}$.

In this paper, we treat $g(\cdot, \cdot)$ or $H(\cdot, \cdot)$ as part of given model specification. The coverage results~in Theorem~\ref{thm:1}~can be directly extended to the sets  
defined in (\ref{eq:G2}) and~(\ref{eq:G2h}),~where the statements also cover the oracle parameter defined above. Due to space limit,~the formal theorem, with its proof, is provided in Appendix~\ref{sec:the_G2}. Additionally, Appendix~\ref{sec:example_sec2} contains two 
illustrative examples of  \eqref{eq:G2}, where   
Example~\ref{ex:crq}  demonstrates how to make~a~nonparametric 
inference for the 
quantile $\theta_0 = F^{-1}(\zeta)$ with a completely unknown distribution $F$ and it shows that the repro samples method
works well even for $\zeta$ near $0$ or $1$, whereas~the conventional bootstrap method has issues on coverage.
The method in 
Example~\ref{ex:crq}~is~also further utilized 
to provide a robust approach to estimating location and scale parameters in Gaussian mixture models. Further details are in Example~\ref{ex:crq_robust} of Appendix~\ref{sec:example_sec2} and~Section~\ref{sec:mu-sigma}.   

Unless specified otherwise, in the rest of the paper,  
$\Gamma_{1-\alpha}(\*y_{obs})$ refers to that in (\ref{eq:G1}), instead of (\ref{eq:G2}) or (\ref{eq:G2h}). All results developed for (\ref{eq:G1}) have direct extensions for~(\ref{eq:G2})~and~(\ref{eq:G2h}).

\section{Further developments: repro samples method with mixed types of parameters}\label{sec:3}

Inference problems 
in contemporary statistics

often involves nuisance parameters and mixed types of parameters. 
In this section,~we partition the parameters by $\btheta = $ $(\eta, {\*\beta}^\top)^\top$, for $\eta \in {\it \Upsilon}$ and $\*\beta \in {\cal B}$.  
We start with Section~\ref{sec: nuisance_control} on a profile method for finite-sample inference for a general $\eta$. Then, we focus~on~the~case of mixed types of parameters in Sections~3.2 and 3.3, where $\eta$ is discrete or non-numeric and the remaining $\beta$ may or may not depend on $\eta$. A practical guideline on use of repro samples methods is provided in~Section~\ref{sec:guide}.

\subsection{A generic profile approach for finite inference}
\label{sec: nuisance_control}

The generic profile method below does not need a plug-in estimator of nuisance parameter nor large-sample results; it guarantees finite-sample inference performance of target~parameter. 

 Let $\*y = G(\btheta, \*u)$ be a copy of sample data and  $\Gamma_{1 - \gamma}(\*y)$ be a level ${1 -\gamma}$ confidence set for $\btheta = (\eta, {\*\beta}^\top)^\top$. 
Let us consider another copy of parameter vector $\widetilde \btheta = (\eta, \widetilde {\*\beta}^\top)^\top \in \Theta$ that has the same $\eta$ but perhaps a different $\widetilde {\*\beta} \not= \*\beta$. To explore the impact of a potentially misspecified nuisance parameter $\widetilde {\*\beta} \not= \*\beta$, we define
\begin{equation}
    \label{eq:nu}
\nu(\*u; \eta, \*\beta, \widetilde {\*\beta}) = \inf\nolimits \left\{1-\gamma: \widetilde\btheta \in \Gamma_{1-\gamma}(\*y)\right\} = \inf\nolimits\left\{1-\gamma: \widetilde\btheta \in \Gamma_{1-\gamma}(G(\btheta, \*u))\right\}. 
\end{equation}
From 
Corollary~\ref{col:test}, we can interpret $1-\nu(\*u; \eta, \*\beta, \widetilde {\*\beta})$ as a `$p$-value' when we test the null hypothesis that the data $\*y$ is from $\widetilde \btheta = (\eta, \widetilde {\*\beta}^\top)^\top$ but in fact $\*y = G(\btheta, \*u)$ is from $\btheta = (\eta, {\*\beta}^\top)^\top$. 

To make inferences for the target $\eta$, we propose a {\it profile nuclear mapping~function}: 
\begin{align}
\label{eq:T_p}
T_p(\*u, \btheta) = \min\nolimits_{\widetilde{\*\beta}} \,\, \nu\left(\*u; \eta, \*\beta, \widetilde {\*\beta}\right). 
\end{align}
Ideally, we would seek that  the profile nuclear mapping $T_p(\*u, \btheta) = T_p(\*u, \eta)$ free of $\*\beta$. However, even if 
 $T_p(\*u, \btheta) = T_p(\*u, (\eta, {\*\beta}^\top)^\top)$ still depends on  the unknown nuisance  $\*\beta$, its random version $T_p(\*U, \btheta)$
is dominated by a $U(0,1)$ random variable as stated in Lemma~\ref{lemma:nuisance}. 
\begin{lemma}
\label{lemma:nuisance}
Suppose $T_p(\*u, \btheta)$ is a profile nuclear mapping function from ${\mathcal U} \times \Theta \to [0,1]$ defined in (\ref{eq:T_p}). Then, we have $\P\left\{T_p(\*U, \btheta) \leq 1 - \alpha \right\} \ge 1 - \alpha.$ 
\end{lemma}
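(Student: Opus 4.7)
The plan is to reduce Lemma~\ref{lemma:nuisance} to the coverage guarantee of Theorem~\ref{thm:1} applied to the artificial sample $\*Y = G(\btheta, \*U)$ at its own generating parameter. The key observation is that the minimum over $\widetilde{\*\beta}$ defining $T_p$ can only push things below the correctly specified choice $\widetilde{\*\beta} = \*\beta$, and at that choice the quantity $\nu$ is essentially the $p$-value obtained by inverting the confidence set $\Gamma_{\alpha'}$, for which Theorem~\ref{thm:1} already supplies a tail bound.

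First, I would record the pointwise inequality obtained by plugging $\widetilde{\*\beta} = \*\beta$ into~(\ref{eq:T_p}):
\[
T_p(\*u, \btheta) \;\le\; \nu(\*u; \eta, \*\beta, \*\beta) \;=\; \inf\{\alpha' : \btheta \in \Gamma_{\alpha'}(G(\btheta, \*u))\},
\]
where the equality is simply the definition~(\ref{eq:nu}) with $\widetilde\btheta = \btheta$. I would then translate the coverage event into an event on $\nu$: by the definition of the infimum, the event $\btheta \in \Gamma_\alpha(G(\btheta, \*u))$ forces $\nu(\*u; \eta, \*\beta, \*\beta) \le \alpha$, and combining with the pointwise bound above yields
\[
\{\btheta \in \Gamma_\alpha(G(\btheta, \*U))\} \;\subseteq\; \{\nu(\*U; \eta, \*\beta, \*\beta) \le \alpha\} \;\subseteq\; \{T_p(\*U, \btheta) \le \alpha\}.
\]
No monotonicity of $\Gamma_{\alpha'}$ in $\alpha'$ is needed for this one-sided direction.

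Finally, I would invoke Theorem~\ref{thm:1} on the model $\*Y = G(\btheta, \*U)$: since $\Gamma_{\alpha'}$ is built from a nuclear mapping $T_a$ satisfying~(\ref{eq:B}), it covers its generating parameter with probability at least $\alpha$, so $\P\{\btheta \in \Gamma_\alpha(G(\btheta, \*U))\} \ge \alpha$. Chaining this with the inclusions above gives $\P\{T_p(\*U, \btheta) \le \alpha\} \ge \alpha$, as required. The only real obstacle is conceptual: one must recognise that although $T_p$ is engineered to screen out misspecified nuisance values, it is the correctly specified $\widetilde{\*\beta} = \*\beta$ that controls its distribution from above, and that at this value the construction is simply an inverted confidence set. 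After that observation, the remaining steps are one-line manipulations that require no asymptotic or regularity hypotheses, which is exactly the finite-sample guarantee the paper wishes to emphasise.
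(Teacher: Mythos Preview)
Your proposal is correct and essentially identical to the paper's own proof: both bound $T_p$ above by $\nu(\*U;\eta,\*\beta,\*\beta)$ via the choice $\widetilde{\*\beta}=\*\beta$, then use the inclusion $\{\btheta\in\Gamma_\alpha(G(\btheta,\*U))\}\subseteq\{\nu\le\alpha\}$ and invoke Theorem~\ref{thm:1} for the final probability bound. Your explicit remark that no monotonicity of $\Gamma_{\alpha'}$ in $\alpha'$ is needed is a helpful clarification, but the argument is otherwise the same one-line chain of inequalities the paper gives.
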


 By Lemma~\ref{lemma:nuisance}, the Borel set for our repro samples method is 
$B_{1 -\alpha} =(0, \alpha]$.  
We construct 
\begin{equation}
\label{eq:CIeta}
{\small \Xi_{1 -\alpha}(\*y_{obs}) = }\resizebox{.785\textwidth}{!}{${\small \bigg\{\eta: \exists \, \*u^* \in {\mathcal U} \, \text{and} \, \*\beta, 
\text{s.t.}  {\footnotesize \begin{pmatrix}\eta  \\ \*\beta
\end{pmatrix}}\in \Theta,  \*y_{obs} = 
G\left({\footnotesize \begin{pmatrix}\eta \\ \*\beta
\end{pmatrix}}, \*u^*\right),
\,
T_p\left(\*u^*, {\footnotesize \begin{pmatrix}\eta \\ \*\beta
\end{pmatrix}}\right) \leq 1 - \alpha   \bigg\}.} 
$}
\end{equation}

\noindent
Theorem~\ref{the:nuisance} below states that $\Xi_{1 -\alpha}(\*y_{obs})$ above
is a level $1 - \alpha$ confidence set for the target $\eta$.
\begin{theorem}\label{the:nuisance}
Suppose $\*Y = G(\btheta, \*U)$ with $\btheta = \btheta_0,$ where $\btheta_0 = (\eta_0, \*\beta_0^\top)^\top$ is the true parameter, and $\Xi_{1 -\alpha}(\cdot)$ is defined in \eqref{eq:CIeta}. Then,  we have $\P\left\{\eta_0 \in \Xi_{1 -\alpha}(\*Y)\right\} \ge 1 - \alpha$. 
\end{theorem}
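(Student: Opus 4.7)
The plan is to mirror the proof strategy already used for Theorem~\ref{thm:1}, with Lemma~\ref{lemma:nuisance} playing the role of the assumption $\P\{T(\*U,\btheta)\in B_\alpha(\btheta)\}\ge\alpha$. First I would invoke Lemma~\ref{lemma:nuisance}, which guarantees that $\P\{T_p(\*U,\btheta)\le\alpha\}\ge\alpha$ for every $\btheta\in\Theta$, and in particular for the true $\btheta_0=(\eta_0,\*\beta_0^\top)^\top$. The remaining task is to show the event inclusion
\begin{equation*}
\{T_p(\*U,\btheta_0)\le\alpha\}\;\subseteq\;\{\eta_0\in\Xi_\alpha(\*Y)\},
\end{equation*}
after which monotonicity of probability delivers the conclusion.

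To establish the inclusion, I would fix a sample point $\omega$ with $T_p(\*U(\omega),\btheta_0)\le\alpha$ and exhibit explicit witnesses that place $\eta_0$ in $\Xi_\alpha(\*Y(\omega))$. Taking the candidate target value $\eta=\eta_0$, the candidate nuisance value $\*\beta=\*\beta_0$, and the auxiliary variable $\*u^{*}=\*U(\omega)$, each of the three requirements in the definition \eqref{eq:CIeta} is immediately met: the pair $(\eta_0,\*\beta_0)=\btheta_0$ lies in $\Theta$; by the model assumption $\*Y(\omega)=G(\btheta_0,\*U(\omega))=G((\eta_0,\*\beta_0^\top)^\top,\*u^{*})$; and $T_p(\*u^{*},(\eta_0,\*\beta_0^\top)^\top)=T_p(\*U(\omega),\btheta_0)\le\alpha$ is precisely the defining inequality on the chosen event. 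Hence $\eta_0\in\Xi_\alpha(\*Y(\omega))$, giving the claimed inclusion pointwise.

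Combining the inclusion with Lemma~\ref{lemma:nuisance} yields
\begin{equation*}
\P\{\eta_0\in\Xi_\alpha(\*Y)\}\;\ge\;\P\{T_p(\*U,\btheta_0)\le\alpha\}\;\ge\;\alpha,
\end{equation*}
which completes the proof. The genuine work in this argument is not in the theorem itself but in Lemma~\ref{lemma:nuisance}, where one must verify that the profiled quantity $T_p(\*u,\btheta)=\min_{\widetilde{\*\beta}}\nu(\*u;\eta,\*\beta,\widetilde{\*\beta})$, taken over all admissible $\widetilde{\*\beta}$, remains stochastically dominated by $U(0,1)$ in $\*U$ even though $\nu$ is built from the joint confidence-set machinery and the true nuisance $\*\beta$ is unknown. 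Once Lemma~\ref{lemma:nuisance} is in hand, Theorem~\ref{the:nuisance} is a direct specialization of Theorem~\ref{thm:1} to the profile nuclear mapping with Borel set $B_\alpha=(0,\alpha]$, so no additional technical apparatus is needed at this step.
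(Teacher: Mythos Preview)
Your proposal is correct and follows essentially the same approach as the paper's proof: establish the event inclusion $\{T_p(\*U,\btheta_0)\le\alpha\}\subseteq\{\eta_0\in\Xi_\alpha(\*Y)\}$ by exhibiting the witnesses $\eta=\eta_0$, $\*\beta=\*\beta_0$, $\*u^*=\*U$, then apply Lemma~\ref{lemma:nuisance}. The paper's version is terser, simply noting that the inclusion follows ``similar to the proof of Theorem~\ref{thm:1}'' and then invoking Lemma~\ref{lemma:nuisance}; your spelled-out verification of the witnesses is a faithful expansion of that reference.
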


In (\ref{eq:CIeta}), we only collect $\eta$. 
If collecting both $\eta$ and $\*\beta$, we obtain a joint confidence set for $\btheta = (\eta, \*\beta^\top)^\top$, but this joint set is unsuitable for joint inference because it imposes~no constraints on the elements of $\*\beta$. Nevertheless, since our interest is solely in $\eta$, we profile out the unconstrained
$\*\beta$ to obtain the confidence set of $\eta$ in (\ref{eq:CIeta}). In fact, we can obtain a confidence set for $\eta$ from any joint confidence set of $\btheta$ by profiling out the nuisance $\*\beta$. However, profiling out a set that places no restrictions on the nuisance parameter $\*\beta$ is preferable to profiling a well-defined joint confidence set that does impose constraints on $\*\beta$. For example, in a simple bivariate independent Gaussian case $(x_{i1},x_{i2})^\top \sim N\big((\nu_1, \mu_2)^\top, \text{diag}(1,1)\big)$, $i = 1, \ldots,n$,  profiling out nuisance parameter $\mu_2$ in the best $95\%$ joint confidence set $\big\{ (\mu_1, \mu_2): (\mu_1 - \bar{x}_1)^2 + (\mu_2 - \bar{x}_2)^2 \le \frac{\chi^2_{2,0.95}}{n} \big\}$ leads to a 95\% confidence interval of $\mu_1$, $\big( \bar{x}_1 - \sqrt{\frac{\chi^2_{2,0.95}}{n}}, \bar{x}_1 + \sqrt{\frac{\chi^2_{2,0.95}}{n}} \big)$, which is worse than $\big( \bar{x}_1 - \frac{1.96}{\sqrt{n}}, \bar{x}_1 + \frac{1.96}{\sqrt{n}} \big)$, 
by profiling out the joint set $\{ (\mu_1, \mu_2): $ $ |\mu_1 - \bar{x}_1| \le \frac{1.96}{\sqrt{n}} \}$ with no constraint on $\mu_2$. See also \cite{Michael2019} for a related~discussion.

Finally, 
note that 
$T_p(\*u, \btheta)$ in (\ref{eq:T_p}) is defined through $\nu(\cdot; \eta, \*\beta, \widetilde {\*\beta})$ in (\ref{eq:nu}), which often can be computed either directly or by a Monte-Carlo method. Due to space limit, we refer readers to Appendix~\ref{sec:depth_profile} and Lemma~\ref{lem:nuisance_depth} for the Monte-Carlo method and its justification.

\subsection{\hspace{-3mm} Fisher inversion and
space reduction for discrete/non-numerical parameters}
\label{sec:candidate_general}

In this subsection,  
 the target parameter $\eta \in {\it \Upsilon}$ is  discrete or non-numerical,  
and $\*\beta$~is~nuisance~parameters (of any types) that may even depend on $\eta$. We take advantage of an inherent ``many-to-one" inversion mapping to get a data-dependent candidate set of $\eta$
that is drastically smaller than ${\it \Upsilon}$. 
 By doing so, the discreteness of $\eta,$ a hurdle for many large~sample based inference methods, becomes
an advantage for computation and~theory~derivation.

We first note that $\*y_{obs} = G((\eta_0, \*\beta_0^\top)^\top,  \*u^{rel}),$ 
so we can rewrite $\eta_0$ as a solution of an optimization problem: 
$\eta_0 = \arg \min\nolimits_{\eta} \min\nolimits_{\*\beta}\|\*y_{obs} - G((\eta, \*\beta^\top)^\top, \*u^{rel})\|$ or, more generally, 
\begin{equation}
    \eta_0 = \arg \min\nolimits_{\eta} \min\nolimits_{\*\beta}L\big(\*y_{obs}, G((\eta, \*\beta^\top)^\top, \*u^{rel})\big),
    \label{eq:tau0}
\end{equation}
where $L\big(\*y, \*y'\big)$ is a continuous loss function that measures the difference between two copies of data $\*y$ and $\*y'$. However, since we do not observe the unknown  $\*u^{rel}$, we replace $\*u^{rel}$ with a repro copy $\*u^*$ in (\ref{eq:tau0}). It leads to a function that maps a value $\*u^* \in {\mathcal U}$ to a value $\eta^* \in {\it \Upsilon}$,
\begin{equation}
   \eta^* = \arg \min\nolimits_{\eta} \min\nolimits_{\*\beta}L\big(\*y_{obs}, G((\eta, \*\beta^\top)^\top, \*u^*)\big).
    \label{eq:tau-star} 
\end{equation} 
We refer this operation as {\it Fisher inversion}, which can be traced back to Fisher's fiducial development \citep[cf.,][]{Thornton2022}. 
Here,
${\mathcal U}$ is often uncountable and ${\it \Upsilon}$ is countable, so (\ref{eq:tau-star}) is ``many-to-one'': many $\*u^* \in {\mathcal U}$ correspond to a single $\eta^* \in {\it \Upsilon}$. Of~particularly~interest~is 
$S =\big\{\*u^*: \eta_0 = \arg \min\nolimits_{\eta} \min\nolimits_{\*\beta}L\big(\*y_{obs}, G((\eta, \*\beta^\top)^\top, \*u^*)\big)\big\} \subset {\mathcal U},$
in which the mapping (\ref{eq:tau-star}) produces $\eta^* = \eta_0$ for any $\*u^* \in S$. Since 
$\*u^{rel} \in S$, we know~$S \not = \emptyset$.

If the subset $S \subset {\mathcal U}$ is a nontrivial set,
i.e., $\P\big\{\*U \in S \big | \*y_{obs}\big\} > 0$ has a positive~probability, we can use a Monte-Carlo method to get a  (data-dependent) 
candidate set $\widehat {\it \Upsilon} = \widehat {\it  \Upsilon}(\*y_{obs})$
such that $\eta_0 \in \widehat {\it \Upsilon}(\*y_{obs})$ with a high (Monte-Carlo) probability. 
The idea is to simulate a sequence of $\*u^s \sim \*U$,~say, 
${\mathcal V} = $ $\big\{\*u_1^s, \ldots, $ $\*u_N^s \big\}$. If $|{\mathcal V}| = N$ is large enough, we have ${\mathcal V}\cap S \not = \emptyset$ and thus $\eta_0 \in \widehat{\it \Upsilon}_{{\mathcal V}}(\*y_{obs})$, with a high (Monte-Carlo) probability. Specifically, we define our candidate set as
\begin{equation}
\label{eq:cand-hat}
   \widehat {\it \Upsilon} =  \widehat{\it \Upsilon}_{{\mathcal V}}(\*y_{obs}) = \left\{\eta^s = \arg \min\nolimits_{\eta} \min\nolimits_{\*\beta}L\big(\*y_{obs}, G((\eta, \*\beta^\top)^\top, \*u^s)\big) \,\, \big| \,\, \*u^s \in {\mathcal V} \right\}.
\end{equation}

Formally,
let 
$\*U^*$ be an independent copy of $\*U$.
We assume 
the following~condition~holds:

\vspace{-4mm}
\begin{itemize}
    \item[(N1)] 
\it For 
any $\*u^{rel} \in \mathcal U$, there exists a 
neighborhood of $\*u^{rel}$, say $S_{nb}(\*u^{rel})$, 
such that
\begin{equation*}
    \resizebox{.84\hsize}{!}{$S_{nb}(\*u^{rel})\subseteq S =\big\{\*u^*: \eta_0 = \arg \min\nolimits_{\eta} \min\nolimits_{\beta}L\big(G((\eta_0, \*\beta_0^\top)^\top, \*u^{rel}), G((\eta, \*\beta^\top)^\top, \*u^*)\big)\big\}$},
\end{equation*}
and   $\P_{\*U}\left[\P_{\*U^*|\*U}\left\{\*U^* \in S_{nb}(\*U) \middle | \*U\right\}
   \geq c_{nb}\right] = 1$
   for a positive number $c_{nb} > 0$.

\end{itemize}

\vspace{-4mm}

\noindent
Although the neighborhood $S_{nb}$ and the number $c_{nb}$ are problem specific,  Condition (N1) is often satisfied 
when $L(\cdot, \cdot)$ is a continuous loss function and ${\mathcal U}$ is uncountable. 
Lemma~\ref{lemma:p_bound_C_D_general} below states that, 
under Condition (N1), 
$\widehat{\it \Upsilon}_{{\mathcal V}}(\*Y)$ 
in (\ref{eq:cand-hat}) contains $\eta_0$ with high probability for a large $|\mathcal V|$.  
Here, 
$\P_{\*U, {\mathcal V}}(\cdot)$ refers to the joint distribution of $\*U$ and  Monte-Carlo $\*U^*$'s in ${\mathcal V}$, and $\*Y = G(\btheta, \*U)$ is generated with $\*\theta = \*\theta_0 = (\eta_0, \*\beta_0^\top)^\top.$

\begin{lemma}
\label{lemma:p_bound_C_D_general} 
If Condition (N1) holds, 
then $
    \P_{\*U, {\mathcal V}}\big\{\eta_0 \not\in \widehat{\it \Upsilon}_{{\mathcal V}}(\*Y)\big\} \leq (1-c_{nb})^{|{\mathcal V}|}.
$
\end{lemma}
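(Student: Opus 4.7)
The plan is to exploit the set containment dictated by Condition (N1) and then reduce the probability to a product via the conditional independence of the Monte-Carlo draws in $\mathcal V$ from the latent $\*U$ that generated $\*Y$.

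First, I would unpack what it takes for the bad event $\{\eta_0 \notin \widehat{\it \Upsilon}_{{\cal V}}(\*Y)\}$ to occur. Writing $\*Y = G((\eta_0,\*\beta_0^\top)^\top, \*U)$ so that $\*U$ plays the role of $\*u^{rel}$, Condition (N1) supplies a neighborhood $S_{\cal N}(\*U)$ contained in the set $S$ on which the inner arg-min map returns $\eta_0$. Consequently, if even a single Monte-Carlo draw $\*u^s\in{\cal V}$ happens to lie in $S_{\cal N}(\*U)$, then the definition \eqref{eq:cand-hat} of $\widehat{\it \Upsilon}_{{\cal V}}(\*Y)$ places $\eta_0$ into the candidate set. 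Contrapositively, I obtain the key inclusion
\begin{equation*}
\{\eta_0 \notin \widehat{\it \Upsilon}_{{\cal V}}(\*Y)\} \;\subseteq\; \{\*u^s \notin S_{\cal N}(\*U) \text{ for every } \*u^s \in {\cal V}\}.
\end{equation*}

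Second, I would take probabilities across this inclusion and condition on $\*U$. Since the Monte-Carlo copies $\*u^1,\dots,\*u^{|{\cal V}|}$ are drawn i.i.d.\ from the law of $\*U$ independently of $\*U$ itself, conditional on $\*U=\*u^{rel}$ the events $\{\*u^s\notin S_{\cal N}(\*u^{rel})\}$ are independent across $s$. Writing $p(\*u^{rel})=\P\{\*U^*\in S_{\cal N}(\*u^{rel})\}$, this yields
\begin{equation*}
\P_{\*U,{\cal V}}\bigl\{\*u^s \notin S_{\cal N}(\*U)\ \forall\, \*u^s\in{\cal V}\bigm| \*U=\*u^{rel}\bigr\} \;=\; \bigl(1-p(\*u^{rel})\bigr)^{|{\cal V}|}.
\end{equation*}

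Finally, I would invoke the quantitative bound in Condition (N1), reading $\P_{\*U^*,\*U}\{\*U^*\in S_{\cal N}(\*U)\}\ge c_{\cal N}$ as the uniform statement $p(\*u^{rel})\ge c_{\cal N}$ that is needed to close the argument (the ``for any $\*u^{rel}\in{\cal U}$'' quantifier in the condition is what I would lean on for this reading). Plugging in gives $(1-p(\*u^{rel}))^{|{\cal V}|}\le(1-c_{\cal N})^{|{\cal V}|}$ for every $\*u^{rel}$, and taking the expectation over $\*U$ yields the claimed bound. The main subtlety, and the step I would be most careful about, is precisely this last invocation of (N1): if the joint-probability form is read as only an averaged lower bound, Jensen's inequality goes the wrong way, so I would either make the uniform-in-$\*u^{rel}$ interpretation explicit or verify that the construction of $S_{\cal N}(\cdot)$ supplies the pointwise lower bound that the exponential rate $(1-c_{\cal N})^{|{\cal V}|}$ requires.
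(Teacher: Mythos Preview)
Your approach matches the paper's: both establish the inclusion $\{\eta_0\notin\widehat{\it\Upsilon}_{\cal V}(\*Y)\}\subseteq\bigcap_{\*u^s\in{\cal V}}\{\*u^s\notin S_{\cal N}(\*U)\}$, condition on $\*U$, and use i.i.d.\ structure to factor into $E_{\*U}[(1-p(\*U))^{|{\cal V}|}]$ with $p(\*u)=\P\{\*U^*\in S_{\cal N}(\*u)\}$. The only divergence is at the final inequality. The paper explicitly invokes Jensen's inequality to pass from $E_{\*U}[(1-p(\*U))^{|{\cal V}|}]$ to $(1-E_{\*U}[p(\*U)])^{|{\cal V}|}$ and then applies the joint-probability form of (N1). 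You instead read (N1) as a pointwise lower bound $p(\*u^{rel})\ge c_{\cal N}$ and bound before taking the expectation. Your caution is well placed: $(1-x)^{|{\cal V}|}$ is convex on $[0,1]$, so Jensen actually gives $E_{\*U}[(1-p(\*U))^{|{\cal V}|}]\ge(1-E_{\*U}[p(\*U)])^{|{\cal V}|}$, the reverse of what the paper claims. The uniform-in-$\*u^{rel}$ reading you propose (supported by the ``for any $\*u^{rel}\in{\cal U}$'' quantifier in (N1)) is what is genuinely needed to obtain the stated bound, and your proof is the cleaner one on this point.
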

As long as Condition (N1) holds, a neighborhood to recover the true model with \eqref{eq:tau-star} always exists.
As a result, Lemma~\ref{lemma:p_bound_C_D_general} and the following Theorem~\ref{thm:p_bound_C_D_general} always hold even in challenging cases where the dimension of $\eta_0$ increases with $n.$ However, how small $c_{nb}$~is depends on how many alternative $\eta$ are close to $\eta_0.$ In challenging cases such as when dimension of $\eta_0$ increases rapidly with $n,$ or when weak signals are present in high-dimensional linear models, $c_{nb}$ could be very small, and recovering $\eta_0$ with $\widehat{\it \Upsilon}_{{\mathcal V}}(\*Y)$ may demand a large~$|\mathcal V|.$

In Theorem~\ref{thm:p_bound_C_D_general} below, the probability  $\P_{\*U|{\mathcal V}}(\cdot)$ refers to the condition distribution of $\*U$ given the Monte-Carlo set ${\mathcal V}$ and the ``in probability" statement is regarding to the Monte-Carlo randomness of ${\mathcal V}$. Theorem~\ref{thm:p_bound_C_D_general} 
has a clear 
frequentist statement: 
if we use the proposed method to obtain the candidate set $\widehat{\it \Upsilon}_{{\mathcal V}}(\*Y)$, we have a near $100\%$ chance (a Monte-Carlo probability statement regarding the Monte-Carlo set ${\mathcal V}$) that $\eta_0 \in \widehat{\it \Upsilon}_{{\mathcal V}}(\*Y)$ and in addition 
the frequency coverage probability of the repro sample confidence set $\Gamma_{1-\alpha}'(\*Y) = \Gamma_{1-\alpha}(\*Y) \cap \{ \btheta: \eta \in \widehat{\it \Upsilon}_{{\mathcal V}}(\*Y)\}$ is
at least $1 - \alpha$ (the regular frequentist statement regarding $\*Y$ or equivalently $\*U$). Here again, $\*Y = G(\btheta, \*U)$ is generated with $\*\theta = \*\theta_0 = (\eta_0, \*\beta_0^\top)^\top.$ 
\begin{theorem}
\label{thm:p_bound_C_D_general}
 Suppose Condition (N1) holds and $\widehat{\it \Upsilon}(\*y_{obs})$ is defined in \eqref{eq:cand-hat}. Then, (a) there exists a positive number $c$ such that we have, in probability, 
  $\P_{\*U|{\mathcal V}}\big\{\eta_0 \in \widehat{\it \Upsilon}_{{\mathcal V}}(\*Y)\big\} = 1 - o_p\big(e^{-c|{\mathcal V}|}\big)
 	$.
 (b) Let $\Gamma_{1-\alpha}'(\*Y) = \Gamma_{1-\alpha}(\*Y) \cap \{ \btheta: \eta \in \widehat{\it \Upsilon}_{{\mathcal V}}(\*Y)\}$ where $\Gamma_{1-\alpha}(\*Y)$ is defined in Theorem~\ref{thm:1} or \ref{thm:2}, then
 $\P_{\*U|{\mathcal V}}\big\{\btheta_0 \in \Gamma_{1-\alpha}'(\*Y)\big\} \geq 1- \alpha -o_p\left(e^{-c|{\mathcal V}|}\right)$, in probability.
\end{theorem}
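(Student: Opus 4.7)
\textbf{Proof plan for Theorem \ref{thm:p_bound_C_D_general}.} The plan is to bootstrap Lemma \ref{lemma:p_bound_C_D_general}, which provides a joint-probability bound on the bad event that $\eta_0 \notin \widehat{\it \Upsilon}_{{\cal V}}(\*Y)$, into a conditional (in $\*U$ given ${\cal V}$) probability statement via a Markov-type argument, and then combine it with the unconditional coverage guarantee of $\Gamma_\alpha$ (Theorems \ref{thm:1} or \ref{thm:2}) through a simple union-bound style inequality.

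For part (a), I would set $c_0 = -\log(1-c_{\cal N}) > 0$ so that Lemma \ref{lemma:p_bound_C_D_general} rewrites as $\E_{\cal V}\!\left[\P_{\*U|{\cal V}}\{\eta_0 \notin \widehat{\it \Upsilon}_{{\cal V}}(\*Y)\}\right] \leq e^{-c_0|{\cal V}|}$ by the tower property. Choose any $c \in (0, c_0)$. Then for any fixed $\epsilon > 0$, Markov's inequality yields
\begin{equation*}
\P_{\cal V}\!\left\{e^{c|{\cal V}|}\,\P_{\*U|{\cal V}}\{\eta_0 \notin \widehat{\it \Upsilon}_{{\cal V}}(\*Y)\} > \epsilon\right\} \leq \frac{e^{c|{\cal V}|}\,\E_{\cal V}[\P_{\*U|{\cal V}}\{\eta_0 \notin \widehat{\it \Upsilon}_{{\cal V}}(\*Y)\}]}{\epsilon} \leq \frac{e^{-(c_0-c)|{\cal V}|}}{\epsilon},
\end{equation*}
which tends to $0$ as $|{\cal V}| \to \infty$. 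This is precisely the statement that $\P_{\*U|{\cal V}}\{\eta_0 \notin \widehat{\it \Upsilon}_{{\cal V}}(\*Y)\} = o_p(e^{-c|{\cal V}|})$, giving (a).

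For part (b), I would observe that by construction $\{\btheta_0 \in \Gamma_\alpha'(\*Y)\} = \{\btheta_0 \in \Gamma_\alpha(\*Y)\} \cap \{\eta_0 \in \widehat{\it \Upsilon}_{{\cal V}}(\*Y)\}$, and use the elementary bound $\P(A \cap B) \geq \P(A) - \P(B^c)$ to write
\begin{equation*}
\P_{\*U|{\cal V}}\{\btheta_0 \in \Gamma_\alpha'(\*Y)\} \geq \P_{\*U|{\cal V}}\{\btheta_0 \in \Gamma_\alpha(\*Y)\} - \P_{\*U|{\cal V}}\{\eta_0 \notin \widehat{\it \Upsilon}_{{\cal V}}(\*Y)\}.
\end{equation*}
Since $\Gamma_\alpha(\*Y)$ is constructed independently of the Monte-Carlo draws in ${\cal V}$, the first term on the right equals the unconditional probability $\P_{\*U}\{\btheta_0 \in \Gamma_\alpha(\*Y)\}$, which is at least $\alpha$ by Theorem \ref{thm:1} or \ref{thm:2}. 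Combining this with the bound from part (a) delivers the desired $\alpha - o_p(e^{-c|{\cal V}|})$.

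The main obstacle I anticipate is the bookkeeping in part (a): translating the marginal (joint) bound into a conditional statement at the correct exponential rate requires sacrificing a sliver of the exponent (taking $c < c_0$), and one must be careful that the Markov step is applied to $e^{c|{\cal V}|}\,\P_{\*U|{\cal V}}\{\cdot\}$ rather than to the raw conditional probability. A secondary subtlety is ensuring independence (or at least measurability) between $\Gamma_\alpha(\*Y)$ and the Monte-Carlo set ${\cal V}$ used to build $\widehat{\it \Upsilon}_{{\cal V}}$; if $\Gamma_\alpha$ itself is computed by Monte-Carlo via Algorithm \ref{alg:Ag}, one should either use a separate simulation stream or invoke the finite-sample validity that holds regardless of the conditioning set, which is the angle I would adopt.
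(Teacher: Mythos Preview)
Your proposal is correct and follows essentially the same route as the paper: Lemma~\ref{lemma:p_bound_C_D_general} plus Markov's inequality for part~(a), and the elementary bound $\P(A\cap B)\ge \P(A)-\P(B^c)$ together with the unconditional coverage of $\Gamma_\alpha$ for part~(b). Your Markov step is in fact slightly sharper (yielding any $c<c_0$ rather than the paper's $c<c_0/2$), and your explicit remark that $\Gamma_\alpha(\*Y)$ does not depend on ${\cal V}$---so that $\P_{\*U|{\cal V}}\{\btheta_0\in\Gamma_\alpha(\*Y)\}=\P_{\*U}\{\btheta_0\in\Gamma_\alpha(\*Y)\}\ge\alpha$---makes explicit a step the paper leaves implicit.
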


In Theorem~\ref{thm:p_bound_C_D_general}, the term  
$o_p\left(e^{-c|{\mathcal V}|}\right)$ tends to $0$ as the size of the Monte-Carlo set $|{\mathcal V}| \to \infty$ and the results hold for any fixed (data) sample size $n < \infty$; so it is a finite-sample result.  
We will further discuss and elaborate Condition (N1) and Theorem~\ref{thm:p_bound_C_D_general} in Section~\ref{sec:mixture}.

Often times we can find an upper bound for the size of the candidate set $\widehat{\it \Upsilon}_{{\mathcal V}}(\*Y),$ because if $\eta$ is far away from $\eta_0$, it usually has a small probability of entering $\widehat{\it \Upsilon}_{{\mathcal V}}(\*Y).$ Hence, $\widehat{\it \Upsilon}_{{\mathcal V}}(\*Y)$ usually only contains $\eta$ that are relatively closer to $\eta_0.$ The bound however is problem specific and depends on the loss function used in the inversion~algorithm~(\ref{eq:tau-star}). In Theorem~\ref{the:upper_bound} of Section~\ref{sec:mixture}, we derive an upper bound for the candidate set of the unknown 
number of components in Gaussian mixture model. See also Theorem 3 of \cite{wang2022highdimenional} for a similar upper bound  result on the size of a model candidate set
for high-dimensional~regressions.

\subsection{\hspace{-3mm} A three-step procedure for inference with mixed-type parameters}\label{sec:ThreeSteps}

In many applications, 
the target model parameters $\*\theta = (\eta, \*\beta^\top)^\top$ are mixed types and the parameter $\*\beta$ may depend on $\eta$ (e.g., Section~\ref{sec:mixture}). 
In this subsection, we propose a three-step inference procedure to dissect this difficult inference problem into manageable steps, in~which we make inference for $\eta$, $\*\beta$ and jointly $(\eta, \*\beta^\top)^\top$, respectively. 
The proposed three-step 
procedure is as follows: 
\begin{itemize}[leftmargin=2.5em, labelsep=0.3em, 
align=parleft,   itemsep=1pt,
  parsep=0pt,
  topsep=1pt]
\item[Step]1  [Inference for $\eta_0$]  Use a repro samples method to construct a candidate set $\widehat{\it \Upsilon}_{\mathcal V}(\*y_{obs})$~and then a level $1 - \alpha$ confidence set $\Xi_{1-\alpha, 1}(\*y_{obs})$ for $\eta_0$, treating $\*\beta$ as nuisance parameters.
\item[Step]2: [Inference for $\*\beta_0$] For a given $\eta \in 
\widehat{\it \Upsilon}_{\mathcal V}(\*y_{obs})$, use a repro samples method (or a regular inference approach) to construct a level $1 - \alpha$ confidence set $\Xi_{1-\alpha, 2}(\*y_{obs})$~for~$\*\beta_0$.
\item[Step]3: [Joint Inference for $(\eta_0, \*\beta_0^\top)^\top$]
    Construct a level $1 - \alpha$ joint confidence set, say $\Xi_{\alpha, 3}(\*y_{obs})$, for $\*\theta_0 = (\eta_0, \*\beta_0^\top)^\top$ based on the results in Steps 1 and 2. 
\end{itemize}
The procedure breaks our task into three  manageable 
sub-tasks. 
In some situations, especially when $\*\beta$ depends on $\eta$, it is necessary to handle $\eta$ and $\*\beta$ separately by steps. For instance, in the Gaussian mixture model in Section~\ref{sec:mixture}, $\eta = \tau$ is the unknown number of components, and the dimension of $\*\beta = (\*\mu_\tau, \*\sigma_\tau)$ is a function of $\tau$ --- If $\tau =3$, $\*\beta$ has $6$~parameters, while~$\tau = 4$, $\*\beta$ has $8$~parameters; the $\mu_1$ in $\tau =3$ case may or may not line up with the $\mu_1$ in $\tau =4$ case. It is not clear to us how to make a meaningful inference for $\mu_1$ without first working on $\tau$.

In Step 1,  we first use Section~\ref{sec:candidate_general} and formula (\ref{eq:cand-hat}) to obtain a candidate set $\widehat{\it \Upsilon}_{\mathcal V}(\*y_{obs})$. 
Then, we treat $\*\beta$ as nuisance parameters and use Section~\ref{sec: nuisance_control} and formula (\ref{eq:CIeta}) to construct:
\begin{equation}
\label{eq:CIeta1}
\Xi_{1-\alpha, 1}(\*y_{obs}) = \bigg\{ \resizebox{.61\textwidth}{!}{$\eta: \exists \, \*u^* \in {\mathcal U} \, \text{and} \, \*\beta, 
\text{ s.t. }  {\footnotesize \begin{pmatrix}\eta  \\ \*\beta
\end{pmatrix}}\in \Theta,  \*y_{obs} = 
G\left({\footnotesize\begin{pmatrix}\eta \\ \*\beta
\end{pmatrix}}, \*u^*\right),
\,
T_p\left(\*u^*, {\footnotesize\begin{pmatrix}\eta \\ \*\beta
\end{pmatrix}}\right) \leq 1-\alpha$} \bigg\} \cap \widehat{\it \Upsilon}_{\mathcal V}(\*y_{obs}),      
\end{equation}
where $T_p\left(\*u^*,(\eta, \*\beta)^\top \right) $ is defined following  \eqref{eq:T_p} for the discrete 
 $\eta \in \widehat{\it \Upsilon}_{\mathcal V}(\*y_{obs})$.

In Step 2, for each $\eta \in \widehat{\it \Upsilon}_{\mathcal V}(\*y_{obs})$, we treat it as given so the only parameters now are $\*\beta$. By using the repro samples approach in Section~\ref{sec:2.1} (e.g., formula (\ref{eq:G1})) or using a regular inference approach, we can construct a {\it representative set} of $\*\beta_0$, given a $\eta \in \widehat{\it \Upsilon}_{\mathcal V}(\*y_{obs})$: 
\begin{equation}\label{eq:represent}
\Gamma_{1-\alpha}^{[\eta]}(\*y_{obs}) = \big\{\*\beta: \exists \, \*u^* \in {\mathcal U} \mbox{ s.t. }  \*y_{obs} = 
G^{[\eta]}(\*\beta, \*u^*),
\,  
T^{[\eta]}(\*u^*, \*\beta) \in B_{1 - \alpha}^{[\eta]}(\*\beta)\big\} \subset {\mathcal B}.
\end{equation}
Here, to highlight $\eta$ is given, we denote $G^{[\eta]}(\*\beta, \*u^*)=
G((\eta,\*\beta^\top)^\top, \*u^*)$, $T^{[\eta]}(\*u^*, \*\beta) = T(\*u^*, (\eta,\*\beta^\top)^\top)$ and $B_{1 - \alpha}^{[\eta]}(\*\beta) = B_{1 - \alpha}((\eta,\*\beta^\top)^\top)$. 
If $\eta = \eta_0$, 
$\Gamma_{1-\alpha}^{[\eta]}(\*y_{obs})$ is a level $1 - \alpha$ confidence~set of $\*\beta_0$. However, we do not know whether the given $\eta$ is $\eta_0$, 
so we instead refer to $\Gamma_{1-\alpha}^{[\eta]}(\*y_{obs})$ in (\ref{eq:represent}) as a {\it representative set} of $\*\beta_0$. 
To get a level $1 -\alpha$ confidence set for $\*\beta_0$ with an unknown $\eta_0$ and to accommodating the uncertainty of estimating $\eta_0$, we consider
\begin{align}  \label{eq:conf_beta_combine}
& \Xi_{1-\alpha, 2}(\*y_{obs})  =  \cup_{\*\eta \in \widehat{\it \Upsilon}(\*y_{obs})} \Gamma_{1-\alpha}^{[\eta]}(\*y_{obs}) \nonumber \\
& \qquad = \big\{\*\beta: \exists \, \*u^* \in {\mathcal U} \mbox{ s.t. }  \*y_{obs} = 
G^{[\eta]}(\*\beta, \*u^*),
\,  
T^{[\eta]}(\*u^*, \*\beta) \in B_{1 - \alpha}^{[\eta]}(\*\beta),  \eta \in \widehat{\it \Upsilon}_{\mathcal V}(\*y_{obs}) \big\}.
\end{align}
Alternatively,
we may replace $\eta \in \widehat{\it \Upsilon}_{\mathcal V}(\*y_{obs})$ with $\eta \in \Xi_{1-\alpha', 1}(\*y_{obs})$, where $0< \alpha' < \alpha$ and $\Xi_{1-\alpha, 1}(\*y_{obs})$ is a level-$\alpha'$ confidence set from $\eta_0$ obtained using Step 1. We then construct a representative set of $\*\beta$, say $\Gamma_{1-\alpha''}^{[\eta]}(\*y_{obs})$, as in (\ref{eq:represent}) for $\alpha'' =  \alpha - \alpha'$. 
Now, 
 the set $\Xi_{1-\alpha, 2}(\*y_{obs})$ in 
(\ref{eq:conf_beta_combine}) is replaced by 
\begin{align}
\label{eq:conf_beta_combine-prime}
& \Xi_{1-\alpha, 2}(\*y_{obs})  =  \cup_{\*\eta \in \Xi_{1-\alpha, 1}(\*y_{obs})} \Gamma_{1 -\alpha''}^{[\eta]}(\*y_{obs}) \nonumber \\
& \quad = \big\{\*\beta: \exists \, \*u^* \in {\mathcal U} \mbox{ s.t. }  \*y_{obs} = 
G^{[\eta]}(\*\beta, \*u^*),
\,  
T^{[\eta]}(\*u^*, \*\beta) \in B_{\alpha''}^{[\eta]}(\*\beta),  \*\eta \in \Xi_{1-\alpha', 1}(\*y_{obs}) \big\}. 
\end{align}
Since $\widehat{\it \Upsilon}(\*y_{obs})$ can be viewed as a level $100\%$ confidence set for $\eta_0$, (\ref{eq:conf_beta_combine}) 
is a special case of (\ref{eq:conf_beta_combine-prime}) with $\alpha' = 1$ and $\alpha'' = \alpha$. 
Because $\*\beta$'s length may vary for different $\eta$ values, $\*\beta$'s parameter space ${\mathcal B}$ (also the space $\Gamma_{1-\alpha''}^{[\eta]}(\*y_{obs})$ is in)  
may have different dimensions for different $\eta$ values. Thus, $\Xi_{\alpha,2}(\*y_{obs})$ in (\ref{eq:conf_beta_combine}) and (\ref{eq:conf_beta_combine-prime}) is possibly a union of sets of different~dimensions. 

In Step 3, the construction of a joint confidence set for $\*\theta_0 =(\eta_0, \*\beta_0^\top)^\top$
 is similar to that of (\ref{eq:conf_beta_combine-prime}), but the confidence set is a subset in the space $\Theta = {\it \Upsilon} \otimes {\mathcal B}$, instead of ${\mathcal B}$; thus we use a product instead of a union to construct the joint confident set:
\begin{align}
    \label{eq:conf_beta_combine-prime2}
& \Xi_{\alpha, 3}(\*y_{obs}) = \Xi_{1-\alpha', 1}(\*y_{obs}) 
\otimes \Gamma_{1-\alpha''}^{[\eta]}(\*y_{obs})  
\\ & = \big\{\*\theta = (\eta, \*\beta): \exists \, \*u^* \in {\mathcal U} \mbox{ s.t. }  \*y_{obs} = 
G^{[\eta]}(\*\beta, \*u^*),
\,  
T^{[\eta]}(\*u^*, \*\beta) \in B_{\alpha''}^{[\eta]}(\*\beta),  \*\eta \in \Xi_{1-\alpha', 1}(\*y_{obs}) \big\}. 
\nonumber 
\end{align}
Here, the product $\otimes$ refers to that an element in ${\it \Upsilon}$ and a (corresponding) element in ${\mathcal B}$ jointly form an element in $\Theta$.  
Also, $(\alpha', \alpha'')$ and other terms involved are defined as in (\ref{eq:conf_beta_combine-prime}). 

Coverage statements of the above confidence sets are below, with a proof in Appendix~\ref{sec:proof_sec3}. 
\begin{corollary}
\label{cor:3steps}
Under this subsection's setup and the conditions in  
Theorems~\ref{thm:1} and~\ref{thm:p_bound_C_D_general},~we have (a) $\P\{ \eta_0 \in \Xi_{1-\alpha, 1}(\*Y) \} \ge 1 -\alpha - o_p(e^{-c_1 |{\mathcal V}|});$
(b) $\P\{ \*\beta_0 \in \Xi_{1-\alpha, 2}(\*Y) \} \ge 1 - \alpha  - o_p(e^{-c_2 |{\mathcal V}|});$ 
(c) $\P\{ \*\theta_0 \in \Xi_{\alpha, 3}(\*Y) \} \ge 1 - \alpha - o_p(e^{-c_3 |{\mathcal V}|}).$ 
Here, $c_i > 0$ is some constant for~$i= 1,2,3$. 
\end{corollary}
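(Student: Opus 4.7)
The plan is to prove all three parts by combining the coverage guarantees already established in Theorem~\ref{thm:1} (or Theorem~\ref{thm:2}), Theorem~\ref{the:nuisance}, and Theorem~\ref{thm:p_bound_C_D_general} via Bonferroni-type (union-bound) arguments. The common template is this: each $\Xi_{\alpha,\cdot}$ is, by construction, the intersection (or an intersection-like combination) of an ``unrestricted'' repro-samples confidence region that already carries an $\alpha$-level guarantee with the event $\{\eta_0\in\widehat{\it\Upsilon}_{\mathcal V}(\*Y)\}$ (either directly or indirectly through $\Xi_{\alpha',1}$), and the latter event has probability $1-o_p(e^{-c|\mathcal V|})$ by Theorem~\ref{thm:p_bound_C_D_general}(a). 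Bonferroni then delivers the stated $\alpha - o_p(e^{-c|\mathcal V|})$ lower bounds.

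For part (a), I note that
\[
\{\eta_0 \in \Xi_{\alpha,1}(\*Y)\} = \{\eta_0 \in \Xi_\alpha(\*Y)\} \cap \{\eta_0 \in \widehat{\it\Upsilon}_{\mathcal V}(\*Y)\},
\]
where $\Xi_\alpha(\*Y)$ is the confidence set of \eqref{eq:CIeta}. Theorem~\ref{the:nuisance} gives $\P\{\eta_0\in\Xi_\alpha(\*Y)\}\ge\alpha$ and Theorem~\ref{thm:p_bound_C_D_general}(a) gives $\P\{\eta_0\in\widehat{\it\Upsilon}_{\mathcal V}(\*Y)\}\ge 1-o_p(e^{-c_1|\mathcal V|})$. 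Applying $\P(A\cap B)\ge \P(A)+\P(B)-1$ yields (a).

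For part (b), the key observation is that on the event $\{\eta_0\in\widehat{\it\Upsilon}_{\mathcal V}(\*y_{obs})\}$ (for version \eqref{eq:conf_beta_combine}) or $\{\eta_0\in\Xi_{\alpha',1}(\*y_{obs})\}$ (for version \eqref{eq:conf_beta_combine-prime}), the ``single-slice'' representative set $\Gamma_\alpha^{[\eta_0]}(\*y_{obs})$ or $\Gamma_{\alpha''}^{[\eta_0]}(\*y_{obs})$ is included in the union $\Xi_{\alpha,2}(\*y_{obs})$. Hence
\[
\{\*\beta_0\in\Gamma_{\alpha''}^{[\eta_0]}(\*Y)\} \cap \{\eta_0\in\Xi_{\alpha',1}(\*Y)\} \subseteq \{\*\beta_0\in\Xi_{\alpha,2}(\*Y)\}.
\]
With $\eta_0$ treated as fixed, $\Gamma_{\alpha''}^{[\eta_0]}(\*Y)$ is a standard level-$\alpha''$ repro-samples set to which Theorem~\ref{thm:1} applies, so $\P\{\*\beta_0\in\Gamma_{\alpha''}^{[\eta_0]}(\*Y)\}\ge\alpha''$. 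By part (a), $\P\{\eta_0\in\Xi_{\alpha',1}(\*Y)\}\ge \alpha' - o_p(e^{-c_2|\mathcal V|})$. Bonferroni then gives a lower bound of $\alpha''+\alpha'-1 - o_p(e^{-c_2|\mathcal V|}) = \alpha - o_p(e^{-c_2|\mathcal V|})$, using the calibration $\alpha''=1-(\alpha'-\alpha)$ baked into the construction. Version \eqref{eq:conf_beta_combine} is the special case $\alpha'=1$, $\alpha''=\alpha$ and is handled identically (invoking Theorem~\ref{thm:p_bound_C_D_general}(a) directly in place of part (a)). Part (c) follows by the very same argument, since $\*\theta_0=(\eta_0,\*\beta_0^\top)^\top\in\Xi_{\alpha,3}(\*Y)$ iff the same two events $\{\eta_0\in\Xi_{\alpha',1}(\*Y)\}$ and $\{\*\beta_0\in\Gamma_{\alpha''}^{[\eta_0]}(\*Y)\}$ both occur; a single Bonferroni step with the same $(\alpha',\alpha'')$ calibration produces the bound $\alpha - o_p(e^{-c_3|\mathcal V|})$.

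The only non-mechanical aspect is bookkeeping of the probability measures, and that is where I expect to be most careful rather than genuinely stuck: Theorem~\ref{thm:p_bound_C_D_general}(a) is phrased as a $\P_{\*U\mid\mathcal V}$ statement that holds ``in probability'' over the Monte-Carlo randomness of $\mathcal V$, while the coverage statements of Theorems~\ref{thm:1} and \ref{the:nuisance} are unconditional in $\*Y$. I would reconcile these by conditioning on $\mathcal V$, applying Bonferroni inside $\P_{\*U\mid\mathcal V}$ to get the bound $\alpha - o_p(e^{-c|\mathcal V|})$ in probability over $\mathcal V$, and then (if an unconditional statement is wanted) taking expectation over $\mathcal V$ using boundedness of the integrand to transfer the $o_p$ to the unconditional $\P$. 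The constants $c_1,c_2,c_3$ can all be taken equal to the $c$ of Theorem~\ref{thm:p_bound_C_D_general}, since each part uses that result exactly once.
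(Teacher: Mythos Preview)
Your proposal is correct and follows essentially the same approach as the paper: both arguments are union-bound/Bonferroni combinations of the $\alpha$-level coverage of the unrestricted repro-samples set (via Lemma~\ref{lemma:nuisance}/Theorem~\ref{the:nuisance} for the profile set, Theorem~\ref{thm:1} for the representative set) with the high-probability containment of $\eta_0$ in $\widehat{\it\Upsilon}_{\mathcal V}$ from Theorem~\ref{thm:p_bound_C_D_general}(a). The paper phrases everything via complements, e.g.\ $\P\{\eta_0\notin\Xi_{\alpha,1}(\*Y)\}\le \P\{T_p(\*U,\btheta_0)>\alpha\}+\P\{\eta_0\notin\widehat{\it\Upsilon}_{\mathcal V}(\*Y)\}$, but this is exactly your $\P(A\cap B)\ge\P(A)+\P(B)-1$ in disguise; your extra care about reconciling $\P_{\*U\mid\mathcal V}$ with the unconditional statements is a point the paper leaves implicit.
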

 
We remark that the three-step procedure can be extended to the more general case of $\*\theta_0 = (\eta_0, \*\beta_0^\top, \*\varsigma_0^\top)^\top$, where the inference targets are $\eta_0$, $\*\beta_0$ and $(\eta_0, \*\beta_0^\top)^\top$, and $\*\varsigma_0$ are additional nuisance parameters.   In this case, we need to slightly modify the formulas in (\ref{eq:represent}) - (\ref{eq:conf_beta_combine-prime2}), and still have results similar to Lemma~\ref{cor:3steps}. 
Due to space limit, the general formulas are not included in the paper; but we instead will consider some specific examples in Section~\ref{sec:mixture}.  
For instance, in Section~\ref{sec:mixture-mu-sigma-brief} (also Section~\ref{sec:mu-sigma}), 
the target parameters are 
$(\tau_0, \mu_{j}^{(0)})$,~for a 
given $j$, $1 \leq j \leq \tau_0$,  where $\tau_0$ is the 
unknown number of components and $\mu_j^{(0)}$ is the location parameter of the $j$th component in a Gaussian mixture model. In the example,~$\eta_0 = \tau_0$, $\*\beta_0 = \mu_{j}^{(0)}$ and $ \*\varsigma_0 =$ $ (\*M_{\tau_0}, \*\mu_{(-j)}^{(0)}, \*\sigma_0)$,  where  
$\*M_{\tau_0}$ is the membership assignment matrix, $(\*\mu_0, \*\sigma_0)$ are the location and scale parameters of the $\tau_0$ components, 
and
$\*\mu_{(-j)}^{(0)}$ is $\*\mu_0$ minus $\mu_{j}^{(0)}$; see Section~\ref{sec:mixture} for more discussions. 
Another example is in \cite{wang2022highdimenional} on the high-dimensional linear model,  in which a repro samples method is used to construct a confidence~set for a single coefficient, say $\*\beta_j$, corresponding to covariate $X_j$. There, $\eta_0$ is the true sparse model, $\*\beta_0 = \beta_{j}^{(0)}$ and $\*\varsigma_0$ is the remaining $p-1$ regression coefficients plus~the~error~variance.

\subsection{A practical guide and useful Monte-Carlo techniques}\label{sec:guide}

Based on the theory, the repro samples method is broadly applicable and covers most settings in which classical inference methods apply. In practice, it is particularly valuable in situations where the classical procedures are infeasible or perform poorly. In this subsection, we provide some guidelines and useful Monte-Carlo techniques for the practical use of the method.

To use a repro samples method, we often begin with a question: whether a classical~inference approach is feasible and, if feasible, whether its performance  is satisfactory.
If the answer is no,  the repro samples method could be a good alternative. As discussed~earlier, most classical approaches rely on asymptotic theories with regularity conditions, and they may not perform well when the conditions are violated. Also, for discrete or non-numerical parameters and non-numerical data, inference approaches are limited. The repro samples method is most helpful in these cases. 

In a classical inference approach, we often start with a probabilistic model, then identify a related test statistic and its distribution. Similarly, in our implementation, we also start with a model, either in a generative model form (\ref{eq:1}) or the more general setup of (\ref{eq:quantile})~or~(\ref{eq:AA}); we then identify 
a nuclear mapping $T(\*U, \btheta)$ for our inference problem and find a corresponding level $1 - \alpha$ Borel set $B_{1 -\alpha}(\btheta).$ The choice of $T(\*U, \btheta)$ is similar to the choice of a test statistic in the classical approach that is problem specific. 
What $T(\*U, \btheta)$ we use may~affect the efficiency and size of confidence set, but the validity of the confidence set is always~guaranteed.

A very simple choice is $T(\*U,\btheta)=T(\*U)$, free of $\btheta$,  as in Examples~\ref{ex:3} and~\ref{ex:crq} (Appenedix~\ref{sec:example_sec2}); however, the suitability 
is problem dependent, and sometimes it may be~inefficient  
or, in extreme cases, even can yield overly large and impractical confidence sets.
Another common choice is $T(\*U, \btheta) = \widehat {\btheta} (\*Y_{\*\theta}),$~a point estimator of $\btheta$ using sample $\*Y_{\*\theta} = $ $G(\*U, \btheta)$,  hopefully an efficient estimator (for efficiency considerations). Moreover, in the case with nuisance parameters (Section~\ref{sec: nuisance_control}), we may use
$T(\*u, \btheta) = \widehat{\eta} (\*Y_{\*\theta})$ to construct $\Gamma_{1 - \gamma}(\*Y_{\*\theta})$
in (\ref{eq:nu}) and obtain the profile nuclear mapping $T_p(\*u, \btheta)$ 
in (\ref{eq:T_p}). Here, and $\widehat{\eta}$ is an estimator of the target parameter $\eta$.  Conditional methods may also be used to deal with nuisance parameters in some cases; cf.,  \cite{wang2022highdimenional}. 
We may also~explore other options when the point-estimator approach is undesirable; cf., Examples~\ref{ex:3a} and \ref{ex:quantile_privacy} of Appendix~\ref{sec:example_sec2}.

After setting up the model and nuclear mapping, a generic implementation algorithm~is:

\vspace{2mm}
\begin{algorithm}
[ht] \caption{A generic implementation  algorithm}
\label{alg:Ag} 
\begin{algorithmic}[1] 
\State For a given value of $\btheta \in \Theta$, (a) Obtain 
a level $1 - \alpha$ Borel set $B_{1 -\alpha}(\btheta)$ in (\ref{eq:B});  
 (b) Check whether there exists a $\*u^* \in {\mathcal U}$ such that 
${\*y_{obs}} = G({\btheta}, \*u^*)$ and $T(\*u^*, \btheta)  \in B_{1 - \alpha}(\btheta)$. If both of the above criteria are satisfied, keep the $\btheta$.
\State Collect all kept $\btheta$ to form a level $1 - \alpha$ confidence set $\Gamma_{1-\alpha}(\*y_{obs})$. 
\end{algorithmic}
\end{algorithm}
\noindent
When the distribution of the nuclear mapping $T({\*U}, \btheta)$ is explicit, the confidence set $\Gamma_{1-\alpha}(\*y_{obs})$ often also has an explicit form; cf., e.g., Examples~\ref{ex:bin}, \ref{ex:3}, \ref{ex:3a} and~\ref{ex:crq}. 
Under more complex settings with the distribution of $T({\*U}, \btheta)$ not explicitly available, we may need to use Monte-Carlo methods. For instance, 
in Step 1(a),
we can typically collect many simulated copies~of  ${\*u^s} \sim \*U$ to form a set ${\mathcal V}$.
We then compute  $\{T({\*u^s}, \btheta), {\*u^s} \in {\mathcal V}\}$, for a given $\btheta \in \Theta$ and derive an empirical distribution of $T(\*U, \btheta)$, based on which we can obtain a level $1 - \alpha$ Borel set $B_{1 -\alpha}(\btheta)$. 
In particular, 
if $T(\*u, {\btheta})$ is a scalar, we can directly 
use its (upper/lower) quantiles of 
to get a level $1 - \alpha$ interval as the Borel set $B_{1 - \alpha}({\btheta})$. 
When $T(\*u, {\btheta})$ is a vector in ${\mathcal T} \subseteq \RR^q$,
we may use the data depth approach 
and follow \citet[][\S 3.2.2]{Liu2021}
to construct 
the Borel set $B_{1 - \alpha}({\btheta})$; see further details in Appendix~\ref{sec:data_depth}. Alternatively, we may also get an (empirical) central region 
$B_{1 -\alpha}({\btheta})$ on ${\mathcal T}$ using  {\it empirical center-outward quantiles} (developed based on optimal transport mapping); cf.,  \cite{hallin2021distribution} for further details.

 When the distribution of $T(\*U, \btheta)$ depends on $\btheta$ and the corresponding $\Gamma_{1-\alpha}(\*y_{obs})$ does not have an explicit form, Algorithm~1 needs to search over $\btheta \in \Theta$, which can be computationally expensive.
Fortunately, in most situations, we can avoid such an exhaustive search over $\btheta$.  
In this paper, we highlight two techniques that can help mitigate this potential computational issue: one is the previously discussed technique in Section~\ref{sec:candidate_general} for discrete or non-numerical parameters where candidate sets are used to reduce search spaces, and the other is for continuous parameters using a quantile regression approach implemented in \cite{dalmasso2022}. In the case of mixed types of parameters, the two techniques can be used together.  

Specifically, on the quantile-regression technique, suppose our target parameter $\btheta$ is in a continuous space $\Theta$ and that 
$T(\cdot, \btheta)$  is continuous in $\btheta$. We use the following  technique to improve our computation efficiency: {\it 
Let $\pi(\btheta)$ be a proposal distribution that has no-zero probability support on $\Theta$. In addition to $|{\mathcal V}|$ copies of $\*u^s \sim \*U$, we  simulate $|{\mathcal V}|$~copies of $\btheta^s \sim \pi(\btheta)$. We then fit a level $1 - \alpha$ nonparametric quantile regression model with~$t^s = T(\*u^s, {\btheta}^s)$ as response and ${\btheta^s}$ as covariates to output the level $1- \alpha$ conditional quantile curve, say $\widehat C_{1-\alpha}(\theta)$. 
The Borel set $B_{1 -\alpha}(\btheta)$ for our repro samples method is either $\big(-\infty, \widehat C_{1 -\alpha}(\btheta)\big)$ or $\big(\widehat C_{\frac\alpha2}(\btheta), $ $\widehat C_{1 -\frac\alpha2}(\btheta)\big)$.}
This method is similar to Algorithm 1 of \cite{dalmasso2022} where the~authors use a quantile regression to get a level $\alpha$ rejection region in a Neyman test problem. The difference is that the $t^s$ in \cite{dalmasso2022} is computed from a test statistic, while~our $t^s$ comes from a nuclear mapping. Therefore we need not generate any data~points $\*y^s = G(\btheta^s, $ $\*u^s)$, making it simpler, especially when generating~$\*y^s$ is computationally expensive. See Appendix~\ref{sec:bin_example} for an illustration of the method for Example~\ref{ex:bin}.
\citet[][Theorems~1]{dalmasso2022} show that, 
as $|{\mathcal V}| \to \infty$, $\widehat C_{q}(\btheta)$ is the conditional level $q$ quantile, if the quantile regression estimator is consistent. Similarly, in our case,  if the quantile regression estimator~is consistent, then $B_{1 -\alpha}(\btheta) = \big(- \infty, \widehat C_{1-\alpha}(\theta)\big)$ or $B_{1 -\alpha}(\btheta) =\big(\widehat C_{\frac \alpha2}(\btheta), \widehat C_{1-\frac\alpha2}(\btheta)\big)$ satisfies (\ref{eq:B}). Furthermore, the  machine learning technique to learn a test statistic in {\it Simulation Based Inference} (SBI) \citep{dalmasso2022, tomaselli2025robustsimulationbasedinference} may also be adopted to learn nuclear mapping $T(\*U, \btheta)$; cf., \cite{ tomaselli2025robustsimulationbasedinference} for further details on SBI.  More detailed discussion of the quantile regression technique can further be found in Appendix~\ref{sec:quantile}.

When $\btheta$ is discrete or non-numerical, we can directly apply the technique in Section~\ref{sec:candidate_general} to significantly reduce the search space of $\btheta$. More commonly when $\btheta = (\eta, \*\beta)$ is mixed~types, the three-step approach in Section~\ref{sec:ThreeSteps} is effective. To save computing costs, in Step~1 we can use the technique from Section~\ref{sec:candidate_general}, and in Steps~2-3, for a given $\eta$, we may use the quantile regression technique described above, especially when the representative set of $\*\beta$ does not have an explicit form. See Sections~\ref{sec:mixture} and~\ref{sec: numerical} below for some implementation~details.

\section{Gaussian mixture model with unknown number of components} 
\label{sec:mixture}

The Gaussian mixture model 
has a long history, 
and frequently used. 
However, 
making inference for the unknown number of mixture components, say $\tau_0$,  
remains  
a ``highly non-trivial'' problem \citep[cf.,][]{wasserman2020universal}.  
Although there exist several 
procedures to obtain point estimators of $\tau_0$, assessing the uncertainty associated with these estimations remains largely open.
This difficulty stems from the fact that the parameter $\tau_0$ is a discrete integer 
and the standard tools relying on CLT are no longer applicable.
Moreover, the uncertainty in $\tau_0$
 propagates to the inference on other model parameters, 
 whose dimension and structure depend on the unknown $\tau_0$.
 This section addressed these challenges by using the repro samples method to construct a confidence set for $\tau_0,$  and we also provide inference solutions for other parameters while 
accommodating the uncertainty in the unknown $\tau_0$.

Consider a $\tau$-component Gaussian mixture model: $Y_i, i=1,\ldots,n$, 
is drawn from one of the component distributions $\{N(\mu_k,\sigma_k^2): k=1,\ldots,\tau\}.$
Let $M_\tau=(m_{ik})\in\{0,1\}^{n\times\tau}$ be the (unobserved) membership matrix, with
$m_{ik}=1$ if $Y_i \sim N(\mu_k,\sigma_k^2)$ and $m_{ik}=0$ otherwise. Then, the model parameter is $\*\theta=(\tau, \*M_\tau, {\*\mu}_\tau, {\*\sigma}_\tau^2)^\top,$
where ${\*\mu}_\tau=(\mu_j)_{\tau \times 1}$ and ${\*\sigma}_\tau^2=(\sigma_j^2)_{\tau \times 1}$. 
And the true values  $\*\theta_0=(\tau_0, \*M_0, {\*\mu}_0, {\*\sigma}_0^2)^\top$, where $\*\mu_0 = (\mu_j^{(0)})_{\tau \times 1}, {\*\sigma}_0^2 = ({\sigma_j^{(0)}}^2)_{\tau \times 1}. $

We can re-express $Y_i$ as a generative model $Y_i = \sum_{k =1}^\tau m_{ik} (\mu_k + \sigma_{k}U_{i})$, $U_i \sim N(0,1)$. In a vector form, ${\*Y} = {\*M} {\*\mu} + \diag(\*U)\*M  {\*\sigma} = {\*M} {\*\mu} + \diag(\*M {\*\sigma}) \*U,$ 
where ${\*Y} = (Y_1, \ldots, Y_n)^{\top}$ 
 and $\*U = (U_1, \ldots, $ $ U_n)^{\top}$. 
The realization 
of $\*Y$ with true $\*\theta_0$ is 
${\bm y_{obs}} = {\bm M}_{0} {\bm \mu}_{0} + \, \diag({\bm  u}^{rel}){\bm M}_{0} \*\sigma_0.$
An alternative  
setup is $Y_i \sim \sum_{k =1}^\tau p_{k} N(\mu_k, \sigma_k^2)$, which further assumes that ${\*m}_i= (m_{i1}, \ldots, $ $m_{i\tau})^{\top}$ is a random draw from multinomial distribution $\text{Multinomial}(\tau; p_1, \ldots,$ $ p_{\tau})$ with unknown proportion parameters $(p_1, \ldots, p_{\tau})$, $\sum_{k =1}^\tau p_k = 1$ \citep{chen_inference_2012}. 
In this case, 
$\*M_0$ is the realized membership assignment 
matrix that generates $\*y_{obs}$.

It is known in the literature that there is an identifiability issue in a mixture model; i.e., multiple sets of parameters could possibly generate the same data \citep{LiuShao2003}.~To account for this  identifiability 
issue, 
we re-define the true parameters as 
the ones with the smallest $\tau$, i.e., 
\begin{align}
\label{eq:C1}
(\tau_0, \*M_0, {\*\mu}_0, \*\sigma_0^2) =  
{\arg \min}_{\{(\tau, \*M_{\tau}, {\*\mu}_{\tau}, \*\sigma^2)|  {\*M}_{\tau} {\*\mu}_{\tau} = {\*M}_{0} {\*\mu}_{0}, \,   {\*M}_{\tau} {\*\sigma}_{\tau} = {\*M}_{0} {\*\sigma}_{0}\} }  \tau.   
\end{align}
Here, $\*M_0$ and ${\bm \mu}_0$ on the right-hand side are the true parameter values that generate $\*y^{obs}.$ 
  With a slight abuse of notation (but for simplicity), we still use the same notations on the left-hand side to denote the ones 
with the smallest $\tau$. 
We further assume that $(\tau_0, \*M_0, {\*\mu}_0, \*\sigma_0^2)$ defined in (\ref{eq:C1})
is unique and 
 $0 <  \sigma_{\max} = \|\*\sigma_0\|_\infty < \infty.$
Thus, among the $\tau_0$ components, 
no $\sigma_k^2$
can completely dominate another $\sigma_{j}^2$, for any $j \not= k$. 
Moreover, we follow the existing  literature to assume that the smallest number of components~in~\eqref{eq:C1}~$\tau_0 = O(\log(n))$, since increasing $\tau_0$ beyond $O(\log(n))$ only delivers indistinguishable gains in the likelihood function and one can confine attention to mixing distributions with no more than $O(\log(n))$ components; cf., \cite{ghosal2001entropies} and \cite{wu2025nonparametric}.
The inference procedure we develop in this paper is for the set of parameters defined in (\ref{eq:C1}). Due to space limit, 
we only include in the main text our inference for $\tau_0$,
with the detailed discussion for the location and scale parameters $\{(\mu_j^{(0)},\sigma_j^{(0)}): j=1,\ldots,\tau_0\}$ placed in  Appendix~\ref{sec:mu-sigma}.

\subsection{A confidence set for the unknown number of components $\tau_0$}  \label{sec:components}

When we work on $\tau$,  the other $(\*M_\tau, \*\mu_\tau, \*\sigma_\tau)$ are nuisance parameters whose dimensions are functions of $\tau$.  Given $ (\tau, \*M_\tau)$, we can obtain sufficient statistics for $(\*\mu_\tau,\*\sigma^2_\tau)$ by projecting~$\*Y$ onto the space expanded by $\*M_\tau$. Thus, $T(\*U, \btheta)= \widetilde T(\*Y, (\tau,{\*M_\tau}))$ is free of $(\*\mu_\tau, \*\sigma_\tau)$.~We~first use the profiling method in Section~\ref{sec: nuisance_control} (setting $\eta=\tau$ and $\*\beta=\*M_\tau$) to handle nuisance parameter $\*M_\tau$, and then use a conditional method to address nuisance parameter $(\*\mu_\tau,\*\sigma_\tau)$. 

Assume, temporarily, we can obtain a nuclear mapping $T(\*U, \btheta)= \widetilde T(\*Y, (\tau,{\*M_\tau}))$
and have a corresponding Borel set $B_{1 -\alpha}$ that satisfies $\P(T(\*U, \btheta) \in B_{1 -\alpha}) \geq 1- \alpha.$
Then,~following Section~\ref{sec: nuisance_control}, 
for $\*y$ that is generated from given parameters $(\tau, \*M_\tau)$, 
we test whether it is~from $ (\tau, \widetilde{\*M}_\tau)$,  $\widetilde{\*M}_\tau \not= \*M_\tau$. As in \eqref{eq:nu} and \eqref{eq:T_p}, we define a profile nuclear mapping~for~$\tau$: 
\begin{align}
\label{eq:nuclear_final_mixed}
    T_p(\*u, \*\theta) = \widetilde T_p(\*y, \tau) = \min\nolimits_{\widetilde{\*M}_\tau} \widetilde \nu(\*y, \tau,  \widetilde{\*M_\tau}), 
\end{align}
where $ \nu(\*u, \tau,  \widetilde{\*M_\tau})= \widetilde \nu(\*y, \tau,  \widetilde{\*M_\tau})  =  \inf\{1-\gamma: \widetilde T(\*y, (\tau,\widetilde{\*M_\tau})) \in B_{1-\gamma}\}.$
By Lemma~\ref{lemma:nuisance}~and~\eqref{eq:CIeta},
\begin{align}
\Xi_{1-\alpha}(\*y_{obs})   & = 
     \big\{\tau: \widetilde T_p(\*y_{obs},\tau)\leq 1-\alpha \big\};
   \label{eq:mix4}
\end{align}
detailed steps of deriving \eqref{eq:mix4} using \eqref{eq:CIeta} are provided in Appendix \ref{sec:27derivation}. 
Corollary~\ref{cor:mix4_new} below states~that $\Xi_{1-\alpha}(\*y_{obs})$ in (\ref{eq:mix4}) is a 
level $1 - \alpha$ confidence set; A proof is in Appendix~\ref{sec:B3-1}.
\begin{corollary}
\label{cor:mix4_new}
Let $\*y =\*M_{\tau} \*\mu + \diag(\*M_\tau  {\*\sigma}) \*u$,  for fixed $\btheta = (\tau,{\*M_\tau}, \*\mu, \*\sigma)$, and $T(\*U, \btheta) = \widetilde T(\*y,  (\tau,{\*M_\tau}))$. Suppose $\P(T(\*U, \btheta) \in B_{1 -\alpha}) \geq 1- \alpha$ for a Borel set $B_{1 -\alpha}$, given $\*\theta.$  If  $\widetilde T_p(\*y, \tau)$ is defined by \eqref{eq:nuclear_final_mixed} and $\Xi_{1-\alpha}(\*y_{obs})$  is defined by \eqref{eq:mix4}, then $\P(\tau_0 \in \Xi_{1-\alpha}(\*Y)) \geq 1-\alpha$ for a random sample $\*Y$ that is generated with the set $\*\theta = \*\theta_0 = (\tau_0, {\*M_0}, \*\mu_0, \*\sigma_0)$. 
\end{corollary}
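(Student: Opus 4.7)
The plan is to view this corollary as a direct specialization of Lemma~\ref{lemma:nuisance} and Theorem~\ref{the:nuisance} to the Gaussian mixture setup, with the target parameter playing the role of $\eta=\tau$, the discrete membership matrix $\*M_\tau$ playing the role of the nuisance $\*\beta$, and the continuous $(\*\mu_\tau,\*\sigma_\tau)$ already handled by the assumed statistic $\widetilde T$. The entire argument reduces to a single distributional fact: whenever $T_a(\*U,\btheta_0)$ lands in $B_\alpha$, the value $\tau_0$ must belong to $\Xi_\alpha(\*Y)$.

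First I would verify that the chain of equalities displayed inside \eqref{eq:mix4} really does collapse the existence quantifiers into the clean condition $\{\tau:\widetilde T_p(\*y_{obs},\tau)\le\alpha\}$. This is essentially bookkeeping: for any fixed $\btheta=(\tau,\*M_\tau,\*\mu_\tau,\*\sigma_\tau)$, the explicit choice $\*u^*=(\diag(\*M_\tau\*\sigma_\tau))^{-1}(\*y_{obs}-\*M_\tau\*\mu_\tau)$ makes the generative matching constraint automatically satisfied, so only the profile-nuclear inequality $\widetilde T_p(\*y_{obs},\tau)\le\alpha$ is binding, and this inequality already depends on $\*y_{obs}$ alone (not on any $\*u^*$ or $\*M_\tau$), as reflected in \eqref{eq:nuclear_final_mixed}.

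Next, I would examine what happens at the truth. Plugging the true $\*M_{\tau_0}=\*M_0$ into the definition of $T_a$ and using \eqref{eq:mixture_matrix} with $\btheta=\btheta_0$ gives $\widetilde T(\*Y,(\tau_0,\*M_0))=T_a(\*U,\btheta_0)$. By the assumption $\P\{T_a(\*U,\btheta_0)\in B_\alpha\}\ge\alpha$, the event $\mathcal{E}=\{\widetilde T(\*Y,(\tau_0,\*M_0))\in B_\alpha\}$ has probability at least $\alpha$. On $\mathcal{E}$, the standard (nested) interpretation of the family $\{B_{\alpha'}\}$ yields
\[
\inf\nolimits_{\alpha'}\{\alpha': \widetilde T(\*Y,(\tau_0,\*M_0))\in B_{\alpha'}\}\;\le\;\alpha.
\]
Since $\widetilde T_p(\*Y,\tau_0)$ is defined in \eqref{eq:nuclear_final_mixed} as the minimum of that infimum over all candidate $\widetilde{\*M}_{\tau_0}$, evaluating at the particular choice $\widetilde{\*M}_{\tau_0}=\*M_0$ already upper bounds it by $\alpha$. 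Therefore $\mathcal{E}\subseteq\{\widetilde T_p(\*Y,\tau_0)\le\alpha\}=\{\tau_0\in\Xi_\alpha(\*Y)\}$, and taking probabilities gives $\P\{\tau_0\in\Xi_\alpha(\*Y)\}\ge\P(\mathcal{E})\ge\alpha$, which is the claim.

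The main obstacle I anticipate is not conceptual but notational: one must be careful that the $B_{\alpha'}$ family is used in the natural nested sense (so that membership at level $\alpha$ implies $\inf\{\alpha':\cdot\in B_{\alpha'}\}\le\alpha$), and that the $\min$ over $\widetilde{\*M}_\tau$ in \eqref{eq:nuclear_final_mixed} is attained or else replaced by an infimum with a small slack that vanishes; this is the same subtlety handled in the proof of Lemma~\ref{lemma:nuisance}. Apart from that, everything is a specialization of the general profiling argument already established, and no new machinery is required.
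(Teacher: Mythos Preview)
Your proposal is correct and follows essentially the same approach as the paper's proof. The paper's argument is the one-line chain $\P(\tau_0\in\Xi_\alpha(\*Y))\ge\P(\widetilde T_p(\*Y,\tau_0)\le\alpha)\ge\P(\widetilde T(\*Y,(\tau_0,\*M_0))\in B_\alpha)\ge\alpha$, and your write-up unpacks each of these inequalities exactly as you describe: the first from the characterization in \eqref{eq:mix4}, the second by plugging $\widetilde{\*M}_{\tau_0}=\*M_0$ into the minimum and using nestedness of $\{B_{\alpha'}\}$, and the third from the hypothesis on $T_a$. Your flagged subtlety about nestedness is real but is handled implicitly in the paper (and explicitly in the later simplification $B_\alpha=[0,\alpha]$).
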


We now move on to find  $T(\*U, \btheta) = \widetilde T(\*y,  (\tau,{\*M_\tau}))$ that satisfies the requirement in~Corollary~\ref{cor:mix4_new}. 
To do so, let $\hat\tau(\*y) $ be an 
estimator of $\tau$; in our paper we a BIC-type estimator:
\begin{align}
    & \hat \tau(\*y)  = \arg\min_{\tau  \in {\mathcal T}}\left[\left\{-2 \max\nolimits_{(\tau, \*M _\tau, {\bm \mu }, \*\sigma ) } \log \ell(\*M _\tau, {\*\mu }, \*\sigma |\*y)\right\} + 2\tau  \log(n)\right], \nonumber \\
    & \,\,\,\, =  \arg\min_{\tau  \in {\mathcal T}} \min_{\*M _\tau}\left\{2\sum\nolimits_{k=1}^{\tau } n_k\log\left(\|(I - \*H_{\tau}^{(k)})\*y^{ (k)}\|\right) - \sum\nolimits_{k=1}^{\tau }n_k\log n_k + 2\tau  \log(n) \right\}, \label{eq:BIC-est}
\end{align}
where 
$\ell(\tau, \*M _\tau, {\bm \mu} , \*\sigma |\*y ) = \sum\nolimits_{k=1}^{\tau } \{-n_k\log(\sigma^{(k)})  -\frac{1}{2(\sigma^{(k)})^2} \|\*y^{ (k)} - \mu^{(k)}\|^2\}$ is the log-likelihood function of {\small $(\tau, \*M _\tau, {\bm \mu }, \*\sigma )$} given {\small $\*y$,}   $\*H_\tau^{(k)}$$=$$ \*H_\tau[\mathcal I_k, \mathcal I_k]$ is the submatrix of $\*H_\tau = {\*M}_{\tau} ({\*M}_{\tau}^\top {\*M}_{\tau})^{-1}{\*M}_{\tau}^\top$ corresponding to the $k$th component, and ${\*y}^{ (k)} = \*y [\mathcal I_k]$ is the corresponding subvector of $\*y $. Here, $\mathcal I_k=\{i: m_{ik}=1\}$ is the index set
for component $k$ and, following the common practice in the literature \citep[cf.,][]{Chen2017, FraleyRaftery1998, FraleyRaftery2002}, we set ${\mathcal T} = \{1, \dots, K_{\max}\}$ for a fixed $K_{\max}$.
In the next paragraph,  we define our $T(\*U, \btheta) = \widetilde T(\*y,  (\tau,{\*M_\tau}))$ via this point estimator ${\hat\tau}(\*Y )$ given $ (\tau, \*M_\tau)$, 
while accommodating the unknown~$(\*\mu_\tau, \*\sigma_\tau)$. 

Suppose for the moment that we can compute $P_{{\*Y} }(\bar \tau) = \P_{{\*Y} }\{\hat\tau({\*Y} ) = \bar \tau\}$, 
for any $\bar \tau \in {\mathcal T}$, where $\P_{{\*Y} }\{\cdot\}$ 
refers the probability of the random sample $\*Y$ that is generated using $\*U$ and the set of fixed $\*\theta = (\tau, {\*M_\tau}, \*\mu, \*\sigma)$. 
Then, following the practice of defining a $p$-value, we assess how `extreme'  the realized $\hat\tau({\*y} )$ is, 
i.e., 
${\mathcal P}(\*y ) =  \sum\nolimits_{\{\bar \tau: \, P_{{\*Y} }(\bar \tau) > P_{{\*Y} }(\hat \tau(\*y )) \}} P_{{\*Y} }(\bar \tau).$
We expect the distribution of ${\mathcal P}(\*Y )$ is closely related to $U(0,1)$. However, 
$(\*\mu_\tau, \*\sigma_\tau)$ values are unknown, so we cannot directly compute the distribution $P_{{\*Y} }(\bar \tau)$, nor simulate $\*Y  =  \*M_{\tau} \*\mu_\tau + \diag(\*M_\tau  {\*\sigma_\tau}) \*U$, when we are only given $(\tau, \*M_\tau)$. 
 Nevertheless, when $(\tau, \*M_\tau)$ is given, $\big(\A_{\tau}(\*y ), $ $\B_{\tau}(\*y )\big)$ are sufficient statistics 
 for the unknown $(\*\mu_\tau, \*\sigma_\tau)$, where $\A_{\tau}(\*y ) = 
(\*M_{\tau}^\top \*M_{\tau})^{-1}\*M_{\tau}^\top\*y $ and
 $\B_{\tau}(\*y ) = \big(\|(I - \*H_{\tau}^{(1)})\*y^{ (1)}\|, $ $\cdots, \|(I - \*H_{\tau}^{(\tau)})\*y^{ (\tau)}\|\big)^T$
are the ``point estimators'' of $(\*\mu_\tau, \*\sigma_\tau)$ with observing sample data $\*y $ and given $(\tau, \*M_\tau)$. Thus, we can generate a repro sample $\*Y'$ using $(\tau, M_\tau, \A_{\tau}(\*y ), \B_{\tau}(\*y ))$ instead. Specifically, we simulate an independent copy ${\*U}' \sim \*U$
and let ${\*Y}' =  \*M_{\tau} \A_{\tau}(\*y) + \diag\left\{\*M_\tau\B_{\tau}(\*y)\right\}\C_{\tau}({\*U'}),$ where $\C_{\tau}(\*U') = \big(\frac{\{(I - \*H_{\tau}^{(1)})\*U'^{(1)}\}^T}{\|(I - \*H_{\tau}^{(1)})\*U'^{(1)}\|}, \cdots,$ $\frac{\{(I - \*H_{\tau}^{(\tau)})\*U'^{(\tau)}\}^T}{\|(I - \*H_{\tau}^{(\tau)})\*U'^{(\tau)}\|}\big)^T$ and $\*U'^{(k)} = \*U'[\mathcal I_k].$ 
We then define our mapping as 
\begin{align}
\label{eq:mix_nuc_new}
    T(\*U, \btheta) = \widetilde T(\*y, (\tau,{\*M_\tau})) =  \sum\nolimits_{\{\bar \tau: P_{{\*Y}'}(\bar \tau) > P_{{\*Y}'}(\hat \tau(\*y)) \}} P_{{\*Y}'}(\bar \tau). 
\end{align}
Since $\C_{\tau}({\*U'}) \sim \C_{\tau}(\*U)$ and $\C_{\tau}(\*U)$ is independent of $(\A_{\tau}(\*Y), \B_{\tau}(\*Y)),$ the repro samples 
${\*Y}'$ and ${\*Y}$
have the same conditional distribution, i.e.,  
${\*Y}' \big| \{\A_{\tau}(\*y),  \B_{\tau}(\*y)\} 
\sim \,\, \*Y \big| \{\A_{\tau}(\*Y) \B_{\tau}(\*Y)\}=\{\A_{\tau}(\*y),  \B_{\tau}(\*y)\}.$ 
Based on this fact, we have the theorem below; a proof is in  Appendix~\ref{sec:B3-1}. 
\begin{theorem}
\label{the: nuc_new_unif} Given $\*\theta$ and $T(\*U, \btheta)$  in (\ref{eq:mix_nuc_new}), we have
$\P(T(\*U, \btheta) \leq 1-\alpha) \geq 1-\alpha.$
\end{theorem}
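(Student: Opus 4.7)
The plan is to recognize $T_a(\*U, \btheta)$ as a conservative discrete $p$-value built from a conditional distribution given the sufficient statistics, and then to invoke the tower property.

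First, I would exploit the fact that once $(\A_\tau(\*Y), \B_\tau(\*Y))$ is fixed, the unknown nuisance parameters $(\*\mu_\tau, \*\sigma_\tau)$ drop out of the conditional law of $\*Y$. This is precisely the content of the displayed distributional identity immediately preceding the theorem: the synthetic $\*Y'$ and the original $\*Y$ share the same conditional distribution given $(\A_\tau, \B_\tau)$. Equivalently, this is a Basu-style fact that $\C_\tau(\*U)$ is ancillary and independent of $(\A_\tau(\*Y), \B_\tau(\*Y))$. Consequently, the pmf $P_{\*Y'}(\cdot)$, constructed from the observed sufficient-statistic values $(\A_\tau(\*y), \B_\tau(\*y))$, coincides with the conditional pmf of $\hat\tau(\*Y)$ given $\{\A_\tau(\*Y), \B_\tau(\*Y)\} = \{\A_\tau(\*y), \B_\tau(\*y)\}$.

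Second, I would fix a value of the sufficient statistics and let $q(\cdot)$ denote this conditional pmf of $\hat\tau(\*Y)$. Conditionally on $(\A_\tau(\*Y), \B_\tau(\*Y))$,
\begin{align*}
T_a(\*U, \btheta) = \sum_{\bar\tau:\, q(\bar\tau) > q(\hat\tau(\*Y))} q(\bar\tau),
\end{align*}
so the problem reduces to an elementary discrete-$p$-value fact: if $Z$ has pmf $q$, then $Q := \sum_{z:\, q(z) > q(Z)} q(z)$ satisfies $\P(Q \le \alpha) \ge \alpha$ for every $\alpha \in (0,1)$. To check this, order the support so that $q(z_1) \ge q(z_2) \ge \cdots$; on $\{Z = z_k\}$ one has $Q \le \sum_{j<k} q(z_j)$. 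Letting $k^{*}$ be the largest index with $\sum_{j<k^{*}} q(z_j) \le \alpha$, we obtain $\P(Q \le \alpha) \ge \sum_{j \le k^{*}} q(z_j) > \alpha$ by the defining property of $k^{*}$ (ties only push additional mass into the low-$Q$ region, which preserves the inequality). Combining this with Step~1 via the tower property,
\begin{align*}
\P\{T_a(\*U, \btheta) \le \alpha\} = \mathbb{E}\!\left[\P\{T_a(\*U, \btheta) \le \alpha \mid \A_\tau(\*Y), \B_\tau(\*Y)\}\right] \ge \alpha.
\end{align*}

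The main obstacle I anticipate is the first step: formally justifying that synthesizing $\*Y'$ through an independent $\*U'$ and the plug-in sufficient statistics $(\A_\tau(\*y), \B_\tau(\*y))$ truly reproduces the correct conditional law of $\*Y$, regardless of the unknown $(\*\mu_\tau, \*\sigma_\tau)$. This hinges on verifying the ancillarity of $\C_\tau(\*U)$ under the decomposition $\*Y = \*M_\tau \*\mu + \diag(\*M_\tau \*\sigma)\*U$ and on a careful use of location-scale equivariance within each mixture component indexed by $\mathcal{I}_k$. Once this identification is secured, the discrete $p$-value bound of Step~2 is classical and the tower-property closing step is immediate.
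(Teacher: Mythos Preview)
Your proposal is correct and follows essentially the same route as the paper's proof: establish via Basu's theorem that $\C_\tau(\*U)$ is ancillary and independent of the sufficient statistics $(\A_\tau(\*Y),\B_\tau(\*Y))$, so that $P_{\*Y'}(\cdot)$ coincides with the conditional pmf of $\hat\tau(\*Y)$ given those statistics; then show conditionally that the discrete $p$-value $\sum_{\bar\tau: q(\bar\tau)>q(\hat\tau(\*Y))}q(\bar\tau)$ is sub-uniform; and finish with the tower property. The paper phrases the discrete $p$-value step through the acceptance set $\mathcal T^{[\alpha]}=\{\bar\tau: P_{\*Y'}(\bar\tau)\ge P_{\*Y'}(\tau^*_\alpha)\}$ rather than your ordered-index argument, but the two presentations are equivalent.
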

 By Theorem~\ref{the: nuc_new_unif} and (\ref{eq:mix_nuc_new}),
$\P(\widetilde T(\*Y, (\tau,{\*M_\tau})) \in B_{1 -\alpha}) \geq 1-\alpha$, for $B_{1 -\alpha} = [0, 1-\alpha].$
Thus,  the function
$T_p(\*u, \*\theta)$   
in (\ref{eq:nuclear_final_mixed}) can be simplified to be $T_p(\*u, \*\theta)  
    = \min_{\widetilde{\*M}_\tau}\inf\{1-\alpha': \widetilde T(\*y, (\tau,\widetilde{\*M_\tau})) \leq $ $1-\alpha'\}
    = \min_{\widetilde{\*M}_\tau}\widetilde T(\*y, $ $ (\tau,\widetilde{\*M_\tau})).
$
Therefore, the confidence set in \eqref{eq:mix4} becomes 
\begin{equation}
    \label{eq:mix-simple}
\Xi_{1-\alpha}(\*y_{obs}) 
 = \big\{\tau: \min\nolimits_{_{{\*M}_\tau}}\widetilde T(\*y_{obs}, (\tau,{\*M_\tau}))\leq 1-\alpha \big\}.
 \end{equation} 

To obtain \eqref{eq:mix-simple}, it 
 suffices to  compute $\widetilde T(\*y_{obs}, (\tau, {\*M_\tau}))$ for any given $(\tau, {\*M}_\tau)$.
  We use a Monte-Carlo method to compute $\widetilde T(\*y_{obs}, (\tau, {\*M_\tau}))$. Specifically, we simulate many~$\*u^s \sim$~$\*U$ and collect them to form a set $\mathcal V$. For each $\*u^s \in \mathcal V$ and given  $(\tau, {\*M}_\tau),$ we compute
$\*y^s = $ ${\*M}_\tau\A_{\tau}(\*y_{obs})+\diag\{{\*M}_\tau\B_{\tau}(\*y_{obs})\}\C_{\tau}(\*u^s)$
and $\hat\tau (\*y^s)$.  Then $\widehat P_{\mathcal V}(\bar \tau) = $ $\frac{1}{|\mathcal V|}\sum_{\*u^s \in \mathcal V}$ $I(\hat \tau(\*y^s) =\bar \tau)$~is an approximation of $P_{{\*Y}'}(\bar \tau)$ for 
{\small ${\*Y}' 
=  \*M_{\tau} \A_{\tau}(\*y_{obs}) + \diag\left\{\*M_\tau\B_{\tau}(\*y_{obs})\right\}\C_{\tau}({\*U'})$, ${\*U}' \sim \*U$}.~When $|\mathcal V|$ is large enough, we can approximate 
$
    \widetilde T(\*y_{obs}, (\tau, {\*M_\tau})) \approx \sum_{\{\bar \tau: \hat P_{\mathcal V}(\bar \tau) > \hat P_{\mathcal V}(\hat \tau(\*y_{obs})) \}}\hat P_{\mathcal V}(\bar \tau). 
$

\subsection{A candidate set of $(\tau_0,\*M_0)$ and a modified confidence set for $\tau_0$} \label{sec:finding_candidates}

To avoid searching through all values of $(\tau,\*M_\tau)$ in (\ref{eq:mix-simple}), 
we follow  Section~\ref{sec:candidate_general} to construct a data-driven 
candidate set $\widehat {\it \Upsilon}(\*y_{obs})$. In the notation of  Section~\ref{sec:candidate_general}, the target parameter for the development is $\eta = (
\tau, \*M_\tau)$. 
Corresponding to (\ref{eq:tau-star}), we use  a modified BIC as our lost function and define 
our many-to-one mapping for $\eta^* = (\tau^*, \*M^*_\tau)$. Specifically, we define
\begin{equation}
    \label{eq:modefied_BIC}
    \resizebox{.8\hsize}{!}{$(\tau^*, \*M^*_{\tau^*}) =\arg\min\limits_{(\tau, \*M_\tau)} \min\limits_{(\*\mu_\tau, \bar\sigma)}  \left\{n\log \big(\frac{\|\*y_{obs} - {\*M}_{\tau} {\*\mu}_{\tau} - \bar\sigma \*u^*\|^2
+1}{n}\big) + 2\lambda\tau\log(n)\right\}$}
\end{equation}
with a tuning $\lambda > 0$, where we only use a scalar $\bar \sigma$ instead of a vector $\*\sigma_\tau$ (which significantly reduces the complexities of computing and theoretical developments). 
Due to space limit, 
the rational of using (\ref{eq:modefied_BIC}) is provided in Appendix~\ref{sec:BIC}.
By \eqref{eq:cand-hat}, a  candidate set for $(\tau_0, \*M_0)$~is 
\begin{align}
\label{eq:candidate_M}
\resizebox{0.92\hsize}{!}{$\widehat{\it \Upsilon}_{{\mathcal V}}(\*y_{obs}) 
  =  \bigg\{ 
  (\tau^*, \*M^*_{\tau^*})  
  ={\footnotesize \arg\min\limits_{(\tau, \*M_\tau)} \min\limits_{(\*\mu_\tau, \bar\sigma)}  \bigg\{n\log \big(\frac{\|\*y_{obs} - {\*M}_{\tau} {\*\mu}_{\tau} - \bar \sigma \*u^s \|^2
+1}{n}\big) +
\lambda\log(n)(2\tau)
  \bigg\}, 
 \*u^s \in {\mathcal V}
  \bigg\}},$}
\end{align}
where ${\mathcal V} = \{\*u_1^s, \ldots, \*u_N^s\}$ is a collection of simulated $\*u_j^s\sim N(\*0, \*I)$, $j = 1, \ldots, N = |{\cal V}|$. 
Therefore, as in Theorem~\ref{thm:p_bound_C_D_general} and using (\ref{eq:mix-simple}),
we define a modified confidence set for $\tau_0$ as
\begin{align}
\label{eq:cs_mixture}
\Xi'_{1-\alpha}(\*y_{obs}) = \big\{\tau: \min\nolimits_{(\tau, {\*M}_\tau) \in \widehat{\it \Upsilon}_{{\mathcal V}}(\*y_{obs})}\widetilde T(\*y_{obs}, (\tau,{\*M_\tau}))\leq 1-\alpha \big\}.
\end{align}

Theorem~\ref{thm:bound_of_C_d_mixture} below states that $\widehat{\it \Upsilon}_{{\mathcal V}}(\*y_{obs})$ in (\ref{eq:candidate_M}) add $\Xi'_{1-\alpha}(\*y_{obs})$ in~(\ref{eq:cs_mixture}) have the desired coverage rates when $|{\mathcal V}|$ is  large enough. 
One complication preventing us from directly applying Theorem~\ref{thm:p_bound_C_D_general} here to the Gaussian mixture model is: Condition (N1) may not hold for two events of 
 arbitrarily small probabilities.  They are: (i) when $\*u^{rel}$ hits certain~directions ($\diag(\*M_0\*\sigma_0) \*u^{rel}$ and $\{I-\*M_{\tau}(\*M_{\tau}^\top\*M_{\tau})^{-1}$ $\*M_{\tau}^\top\}\*M_0\*\mu_0$ are on the same direction for some $\*M_{\tau} \not = \*M_0$, $\tau \leq \tau_0$); (ii) when $\|\*u^{rel}\|$ is extremely large. Fortunately, we can control~both events so condition (N1) still holds in probability. 
 Thus,  $\widehat{\it \Upsilon}_{{\mathcal V}}(\*Y)$ and 
$\Xi'_{1-\alpha}(\*y_{obs})$ maintain the desired coverage rates except for an arbitrarily small $\delta>0$. 
A proof is in Appendix~\ref{sec:B3-2}. 
\begin{theorem}
\label{thm:bound_of_C_d_mixture}
Suppose $n-\tau_0>4.$ For any arbitrarily small $\delta >0,$ there exist $\gamma_\delta>0, c>0$ and also an interval of positive width $\Lambda_{\delta} =[\frac{\gamma_{\delta}^{1.5}}{\log(n)/n},$ $\frac{\log\{0.5C^2_{\min}\gamma_{\delta} + 1\}}{2 \tau_0 \log(n)/n}]$ such that when $\lambda \in \Lambda_{\delta},$ 
(a) $\P_{\*U | {\mathcal V}}\{(\tau_0, \*M_0) \in \widehat{\it \Upsilon}_{{\mathcal V}}(\*Y)\}$ $\geq 1- \delta - o_p(e^{-c|\mathcal V|})$, and  (b) $\P_{\*U | {\mathcal V}}\{\tau_0 \in \Xi'_{1-\alpha}(\*Y)\}  \geq 1-\alpha- \delta - o_p(e^{-c|\mathcal V|}).$ 
Here, a random sample $\*Y$ is generated with the set $\*\theta = \*\theta_0 = (\tau_0, {\*M_0}, \*\mu_0, \*\sigma_0)$. 
Also, $\Xi'_{1-\alpha}(\cdot)$ is defined in \eqref{eq:cs_mixture} and $C_{\min} = 
{\min}_{\small \{\*M_\tau: \, |\tau| \leq |\tau_0|, \tau \neq \tau_0\}} 
\|(\*I - \*M_\tau)\*M_0\*\mu_0\| > 0$.
 \end{theorem}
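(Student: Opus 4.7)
The plan is to establish Condition~(N1) from Section~\ref{sec:candidate_general} on a high-probability event for $\*u^{rel}$, and then invoke the reasoning behind Theorem~\ref{thm:p_bound_C_D_general} together with Corollary~\ref{cor:mix4_new} to obtain the two coverage statements. The two pathologies flagged just before the theorem---near-alignment of the noise direction with certain model-difference subspaces, and excessively large $\|\*u^{rel}\|$---must be excluded by restricting to a ``good event'' of probability at least $1-\delta$.

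First I would construct a good event $E_\delta$ as the intersection of two pieces: a norm control $\{\|\*U\|^2 \le K_\delta\}$, which by standard $\chi^2_n$ tail bounds has probability at least $1-\delta/2$ for suitable $K_\delta$; and a non-alignment control guaranteeing that for every $\*M_\tau \ne \*M_0$ with $\tau \le \tau_0$ the residual $\|(I-\*H_\tau)\{\*M_0\*\mu_0 + \diag(\*M_0\*\sigma_0)\*u^{rel}\}\|$ is bounded below by a quantity growing in $C_{\min}$. Because $\*U \sim N(\*0, I)$ and the bad directions form a finite union of lower-dimensional tubes indexed by the finitely many competing $\*M_\tau$'s, a Gaussian anti-concentration argument (using $n-\tau_0>4$ to ensure enough residual degrees of freedom) gives the second piece probability at least $1-\delta/2$. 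Hence $\P(E_\delta)\ge 1-\delta$.

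Next I would verify that on $E_\delta$ every Monte-Carlo draw $\*u^*$ satisfying $\rho(\*u^*, \*u^{rel}) \ge 1-\gamma_\delta$ causes the modified-BIC mapping \eqref{eq:modefied_BIC} to return $(\tau_0, \*M_0)$. Since $(\*\mu,\*\sigma)$ are optimized out, only the direction of $\*u^*$ matters, which legitimizes the cosine metric. The comparison of BIC values splits into three regimes. At $(\tau_0, \*M_0)$ the fitted residual is small in $\gamma_\delta$, yielding a controlled log-residual contribution. For $\tau < \tau_0$ (or $\*M_\tau \ne \*M_0$ with $\tau=\tau_0$) the residual is bounded below by a constant involving $C_{\min}^2\sigma_{\max}/\sigma_{\min}$ on the non-alignment event, so the log-term dominates the penalty saving provided $\lambda \le \log\{0.5 C_{\min}^2\sigma_{\max}\gamma_\delta/\sigma_{\min}+1\}/\{2\tau_0\log(n)/n\}$. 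For $\tau > \tau_0$ the extra penalty $2\lambda\log(n)(\tau-\tau_0)$ must outweigh residual gains of order $\gamma_\delta^{1.5} n$, which requires $\lambda \ge \gamma_\delta^{1.5}/(\log(n)/n)$. Sandwiching these bounds produces exactly the interval $\Lambda_\delta$ in the statement, which has positive width for $\gamma_\delta$ small enough. Thus Condition~(N1) holds on $E_\delta$ with $c_{\cal N}$ equal to the positive probability that an independent $N(\*0,I)$ vector lies in a cosine-$\gamma_\delta$ neighborhood of a fixed direction.

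For (a), conditioning on $E_\delta$ and mimicking Lemma~\ref{lemma:p_bound_C_D_general} bounds the unconditional failure probability by $\delta + (1-c_{\cal N})^{|{\cal V}|}$; the upgrade to an $o_p(e^{-c|{\cal V}|})$ statement given ${\cal V}$ follows the same Borel--Cantelli route as in Theorem~\ref{thm:p_bound_C_D_general}. For (b), I would use the inclusion
$$\{\tau_0 \notin \Xi'_\alpha(\*Y)\} \subseteq \{\widetilde T(\*Y,(\tau_0,\*M_0)) > \alpha\} \cup \{(\tau_0,\*M_0) \notin \widehat{\it \Upsilon}_{{\cal V}}(\*Y)\},$$
apply Theorem~\ref{the: nuc_new_unif} (probability $\le 1-\alpha$) to the first piece and part~(a) to the second, and conclude by a union bound. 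The main obstacle is Step~2: producing sharp-enough residual lower bounds on $E_\delta$ uniformly over competing $(\tau, \*M_\tau)$ so that the constants at the endpoints of $\Lambda_\delta$ match the stated expression, which is where the delicate linear-algebra perturbation analysis and the condition $n-\tau_0>4$ both enter.
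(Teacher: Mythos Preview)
Your proposal is correct and follows essentially the same route as the paper: isolate the bad events on $\*u^{rel}$ (large norm and near-alignment with the model-difference directions), argue that on their complement any $\*u^*$ in a cosine-$\gamma_\delta$ neighborhood forces the modified BIC in \eqref{eq:modefied_BIC} to return $(\tau_0,\*M_0)$, split the BIC comparison into the $\tau>\tau_0$ and $\tau\le\tau_0$ regimes to produce the two endpoints of $\Lambda_\delta$, and finish (b) via the union bound with Theorem~\ref{the: nuc_new_unif}. The paper organizes these same steps through three technical lemmas (an angle-probability lemma, a perturbation lemma transferring angle control from $\*U$ to $\*U^*$, and a finite-sample bound summing over all competing $\*M_\tau$), and it upgrades to the $o_p$ statement via Markov's inequality on $\P_{\*U|\mathcal V}$ rather than Borel--Cantelli, but the architecture is the same as yours.
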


Theorem~\ref{thm:bound_of_C_d_mixture} states that we can recover both $\tau_0$ and $\*M_0$ with a high certainty, if we have enough computing resource to allow for a large $|\mathcal V|$ . Nevertheless, getting each element in $\*M_0$ correctly is a much harder task than recovering $\tau_0$, requiring a much larger $|\mathcal V|.$ This is because, if an alternative ${\*M}_\tau$ is not too different from $\*M_0$, a repro sample $\*y^s$ generated using the ${\*M}_\tau$ can very well produce the same $\tau$ estimate as that generated by using $\*M_0.$  
In practice,
 our computing power is limited so we focus only the task of making inference for $\tau_0$. Our numerical studies in Section~\ref{sec: numerical} suggest that the confidence set of $\tau_0$ in \eqref{eq:cs_mixture} can achieve its desirable coverage rate empirically without~$|\mathcal V|$~being~excessively~large.

Finally, to gain further insight, we provides below an upper bound for  $\tau_{up}(\*y_{obs})$ {\small $=\sup\{\tau: \exists \*M_\tau, \mbox{ s.t. } (\tau, {\*M}_\tau) \in \widehat{\it \Upsilon}_{\mathcal V} \},$} 
the largest $\tau$ that can be included in the candidate set $\widehat{\it \Upsilon}_{\mathcal V}.$ 
\begin{theorem}
\label{the:upper_bound}
   If  $\lambda = O(n/\log(n))$ as in Theorem~\ref{thm:bound_of_C_d_mixture},  there exist constants $a, a'$ s.t. 
       $ \P(\tau_{up}(\*Y) \geq  \tau) \leq a \exp\{-a'(n^{O((\tau-\tau_0)/\log(n)) -1/2} - n^{1/2})+ \log |\mathcal V|\}, $ 
    where $\tau \geq \tau_0.$
\end{theorem}
\noindent
When $\tau - \tau_0 \geq O(\log(n)),$ the probability bound decays fast with $n,$ indicating the upper bound of the confidence set is $\tau_0 + O(\log(n))$ with large probability.

\subsection{\hspace{-3mm}
Inference for the component mean $\mu_j^{(0)}$ and standard deviation $\sigma_j^{(0)}$ }
\label{sec:mixture-mu-sigma-brief}

We use the three-step procedure in Section~\ref{sec:ThreeSteps} to conduct inference for $\mu_j^{(0)}$ and $\sigma_j^{(0)}$, while accommodating the uncertainty of the unknown $\tau_0$. To mitigate the impact of potential membership mislabeling issues in the mixture, more robust nonparametric quantile estimators are also employed in Steps~2-3. Due to space limit, details are provided in Appendix~\ref{sec:mu-sigma}.

\section{Numerical studies: real and simulated data}\label{sec: numerical}
We conduct a comprehensive numerical study using both real and simulated data.{\small~}The~real~data consist of activity measurements of red blood cell sodium–lithium countertransport (SLC) 
from 190 individuals \citep{dudley_assessing_1991}. The SLC measurement is known to be correlated with blood pressure and is considered by many researchers to be an important contributor to hypertension. The same SLC data set has previously been analyzed using Gaussian mixture models with an unknown number of components by \citet{roeder_graphical_1994} and \citet{chen_inference_2012}. 

We apply the repro samples method to the SLC data and obtain more comprehensive and informative inference results.
The simulation studies are under two settings that mimic the SLC data and demonstrate the superior empirical performance of our proposed method. Due to space limit, in the main text we present inference results for the discrete parameter $\tau_0$ only. Comparisons with existing approaches, along with additional inference results for $\mu_j^{(0)}$ and $\sigma_j^{(0)}$, are summarized in the main text, with full details provided in Appendix~\ref{sec:numerical-details}.

 \begin{figure}[!t]

     \centering     \includegraphics[width=12cm, height=7cm]{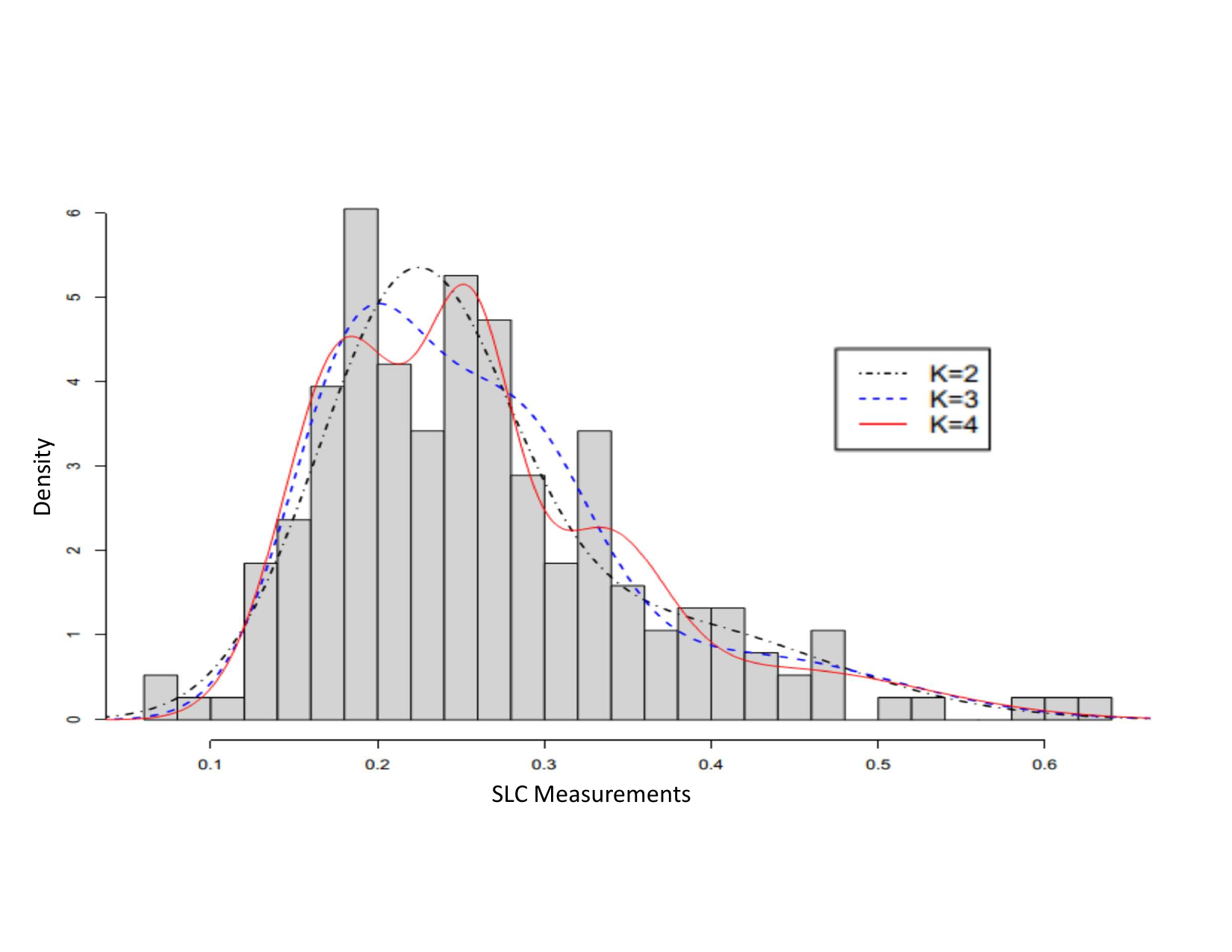}
       \captionsetup{font= small}
     \caption{Histogram of 190 SLC measurements and estimated Gaussian mixture density curves.} 
\label{fig:SLC_mixture} 
 \end{figure}

\subsection{Inference on the discrete unknown number of components $\tau_0$}\label{sec:tau-numerical}

\cite{roeder_graphical_1994} and \cite{chen_inference_2012} used the classical framework to test the unknown number of components $H_0: \tau_0 = k$ vs $H_1: \tau_0>k$~for~a~small integer $k.$ \cite{roeder_graphical_1994} concluded that $\tau_0=3$ is the smallest value of $\tau_0$ not rejected by the data under the assumption that $\sigma_j^2$'s are
the same.  
\cite{chen_inference_2012} concluded instead the smallest 
$\tau_0 =2$ without the equal variance assumption.
Our first goal is to obtain a   
95\% confidence set for the discrete unknown $\tau_0$. 
To do so we first obtain a candidate set for $(\tau_0, \*M_0)$. 
Particularly, for each $\tau$, we solve the objective function in  $\eqref{eq:modefied_BIC}$ via fitting a mixture linear regression model with $\*y_{obs}$ as the response vector and $\*u^*$ as the covariate  
using the R package {\it flexmix} \citep{grun2007fitting}, and we get $(\tau^*, \widehat M_{\tau}^*)$ by (\ref{eq:modefied_BIC}). The collection of these $(\tau^*, \widehat M_{\tau}^*)$'s~forms a candidate set $\widehat{\it \Upsilon}(\*y_{obs})$. To evaluate the nuclear mapping in \eqref{eq:BIC-est}, we set $K_{\max} = \tau_{up}(\*y_{obs}) \vee 10,$ with $\tau_{up}(\*y_{obs})$  defined in Theorem~\ref{the:upper_bound}. We then proceed and use formula (\ref{eq:cs_mixture}) to obtain a 95\% confidence set for $\tau_0$, which is $\Xi'_{1-\alpha}(\*y_{obs}) $ $= \{2, 3, 4\}.$
The confidence set  
 suggests
that both $\tau_0=2$ and~$3$ are plausible, consistent with earlier studies.  In genetics, a two-component mixture 
corresponds to a simple dominance model and a three-component corresponds to an additive model \citep{roeder_graphical_1994}.~Evidently,~we~cannot rule out these two models based on the SLC data alone. Our confidence set also includes
$\tau_0 = 4$.
Figure~\ref{fig:SLC_mixture} is the histogram of the SLC measurements together with estimated Gaussian mixture densities with $\tau =2,3,4$ using~an~EM algorithm (R package {\it ClusterR}).  Notably, only the  
model with $\tau=4$ captures all the three spikes in the histogram, while smoothly fitting the rest of the histogram.
Therefore, the model~with $\tau_0=4$ also demands further investigation, as it appears to 
well represent the data empirically. An advantage of our method is that our
confidence set 
provides both lower and upper bounds, for plausible  $\tau_0$  values supported by the data, 
whereas inverting the test methods proposed by \cite{roeder_graphical_1994} and \cite{chen_inference_2012} can only give us one-sided confidence~sets. 

\begin{figure}[!t]
     \centering     \includegraphics[width=0.75\textwidth, height=8.5cm]{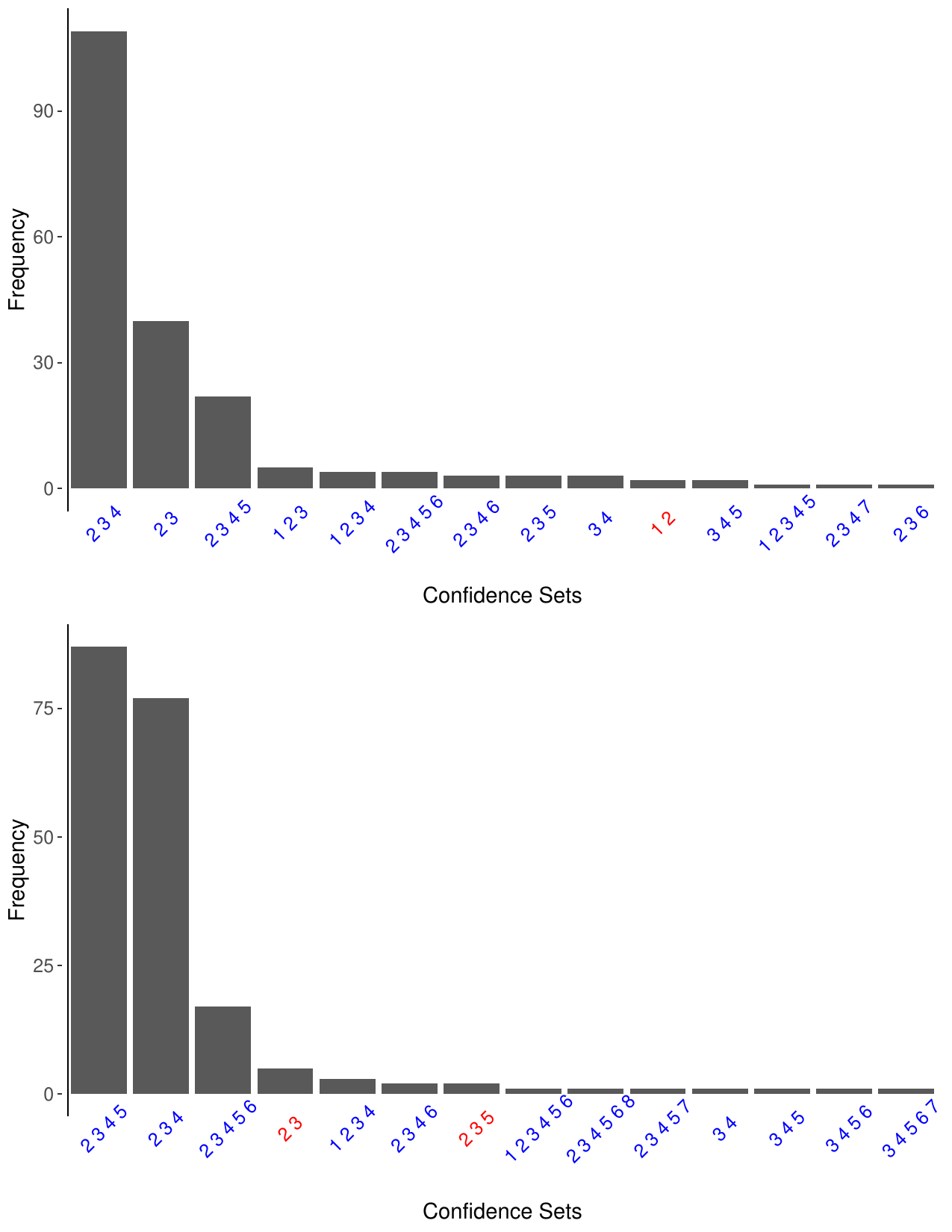}
            
    \caption{Bar plots of 200 level-95\% repro samples confidence sets obtained in the simulation study (200 repetitions). 
    The upper panel is for true $\tau_0=3$; lower for true $\tau_0=4.$} \label{fig:simulation_mixture} 
    \end{figure}

We use simulation studies to further demonstrate the performance of our proposed method, where the simulation settings mimic that of the real SLC data. Specifically, the true parameters of our simulation 
models are those  in Table~\ref{tab:SLC}(a)  (from fitting SLC data) with $\tau =3$ and $4$, respectively. In each setting of  $\tau_0=3$ or $4$, we simulate $n = 190$ data points using the corresponding parameters.
For each simulated data, we apply the proposed repro samples method. 
The experiments are repeated~200~times~in~each~setting of $\tau_0=3$~or~$4$.

The bar plots in Figure~\ref{fig:simulation_mixture} summarize the 200 level-95\% confidence sets of $\tau_0$ obtained~in the 200 repetitions using the repro samples method for $\tau_0 =3$ and $4$, respectively. The results  
demonstrate good empirical performances of the proposed confidence set in both settings, with the $\tau_0$ coverage rate 99\% and 96.5\%, respectively.\spacingset{1.77}
Over-coverage~is~observed~but~also expected, since the parameter space of $\tau_0$ is discrete and it is often 
impossible to achieve the exact level of 95\%. 
In the case of $\tau_0=3,$ the average size of the proposed confidence~set is about $3$ (standard error $0.05$), with  
the set $\{2, 3, 4\}$  being the majority among the $200$ simulations.  
As for $\tau_0=4,$ more than 80\% of the times, a confidence set is either $\{2, 3, 4\}$~or $\{2, 3, 4, 5\}$, with the average size of the confidence set around $3.65$ (standard error~$0.05$).

\subsection{Comparisons to existing approaches and additional inference on $\mu_j^{(0)}$ and~$\sigma_j^{(0)}$}
\label{sec:compare_numerical}

We have conducted a comprehensive study to compare the numerical performance of our method with that of several existing frequentist and Bayesian approaches using both the SLC real data and the same $2 \times 200$ simulated data sets. Due to space limit, we only~state our conclusion here and the detailed results are reported in  Appendix~\ref{sec:bayes_comp}. First, point estimators perform poorly in our simulation studies that mimic the SLC data (with engaged 
overlapping components).
For instance, 
the BIC point estimator, although consistent theoretically, rarely recovers the true $\tau_0$: It correctly identifies the true value of $\tau_0$ only 6.5\% of the time for the setting of $\tau_0 = 3$ and never for the setting of $\tau_0=4$. This underscores the critical importance of quantifying the estimation uncertainty through a confidence set~for~$\tau_0$. 
Second, the approaches developed under the classical test framework using (modified) penalized likelihood ratio tests \citep[e.g.,][]{roeder_graphical_1994, chen_inference_2012} perform only a one-sided test; thus it is often proposed to adopt ``the smallest $\tau_0$ that can not be rejected.'' We find 
in our studies  
the method by \cite{chen_inference_2012} often inaccurately favors smaller models, incorrectly maintaining $H_0: \tau_0=2$ 74\% of the time when actually $\tau_0 = 3$, and failing to reject $H_0: \tau_0=3$ more than 93\% of the time when actually $\tau_0 = 4$.  The results highlight the tendency of these one-sided tests to erroneously prefer a wrong $\tau$ over the correct one. Third, the performance of Bayesian methods \citep[e.g.,][]{richardson_bayesian_1997} is highly sensitive to the choices of priors on $\tau$, $\*\mu$, and $\*\sigma$,  and the credible sets do not have frequentist coverage rates.  
See Appendix~\ref{sec:bayes_comp} for more~details.

We also apply the three-step procedure in Section~\ref{sec:ThreeSteps} to conduct inference for the location and scale parameters $\mu_j^{(0)}$ and $\sigma_j^{(0)}$. Since Step~1 has already been completed in Section~\ref{sec:tau-numerical}, we proceed directly to Steps~2–3. In the real data analysis, the proposed method yields stable and interpretable confidence intervals for both parameters, providing a coherent joint quantification of uncertainty across discrete and continuous components. The simulation studies further show that our method is the only approach that consistently achieves~the desired empirical coverage for both means and variances, highlighting its robustness and practical reliability, particularly in finite samples. Detailed results are~reported~in~Appendix~\ref{sec:simu_location_scale}.

Based on the numerical results and comparisons in Sections~\ref{sec:tau-numerical}--\ref{sec:compare_numerical}, we can see that~the repro samples method is the only method that gives us a two-sided confidence set for~$\tau_0$~with the desired performance. The repro samples method can also effectively quantify uncertainty and makes joint inferences for both the discrete $\tau_0$ and the continuous  $\mu_j^{(0)}$ and $\sigma_j^{(0)}$, $j = 1, \ldots, \tau_0$. In summary, the repro samples method gives us an effective approach for analyzing Gaussian mixture data. It provides a performance-guaranteed comprehensive solution for a set of complex irregular inference problems, while the existing procedures~can~not.

\section{Concluding remark and further discussions}
\label{sec:6}

We have developed an effective and flexible new  framework for inference by utilizing, explicitly or implicitly, synthetic data that mimic the observed sample. A notable strength~is~that~it directly uses inversion techniques and does not rely on large-sample theory or likelihood functions, making it particularly effective for some previously unaddressed challenging  
problems, including those involving discrete or non-numerical parameters or non-numerical~data. By accommodating diverse parameters, models, and data types, it can greatly extend the~scope of statistical inference. We also compare the proposed method with the classical Neyman– Pearson framework. When the same test statistic is used as a nuclear mapping, the repro samples method provides  the same or better inference results than those obtained via Neyman inversion. The case study of Gaussian mixture  illustrates the method’s effectiveness in resolving the long-standing challenging inference problem of the unknown number of components, and  numerical studies further show that the proposed method is the only approach that can achieve the desired coverage rates for both discrete and continuous~parameters.

The generative model specification (\ref{eq:1}) may not be  unique --- the same probabilistic model~(i.e., the same likelihood) can possibly be represented through different generative models \citep[Remark~6]{Hannig2009}. For example, the uniform model $Y_i \sim U(\theta,\theta^2)$, $\theta>1$,  may be expressed in two different forms of (\ref{eq:1}), one using the full data and the other using the minimal sufficient statistics (the sample minimum and maximum) \citep[Example~4]{Hannig2016}. Within the repro samples framework, these different model formulations still lead to identical inference on $\theta$ if the same nuclear mapping is used; see Example~\ref{ex:fiducial} in Appendix~\ref{sec:App-G-unique-example}. This invariance 
contrasts with other Fisher inversion-based GFI and IM approaches, in which the resulting inference depends on the distributional form of the latent $U$'s and thus on the chosen generative model; cf., \cite{Hannig2016,martin2013inferential}.

Our primary developments are under the generative model~(\ref{eq:1}), but can be extended to more general settings, such as~(\ref{eq:invertedH}) that only requires  the target (true or oracle) parameter be recovered by a given algorithm. Nevertheless, all models are approximations, making it important to consider misspecification.
We have begun extending the repro samples method to settings with model misspecification, where a family of working models is used to analyze data generated from a completely unknown model. For example, in \cite{hou2025repro} on high-dimensional data binary
classification,  although the true model is completely unknown, inference is conducted using misspecified sparse working models, targeting oracle parameters that minimize an appropriate loss function and represent the best working-model  (closest to observed data). The development is a further extension of (\ref{eq:invertedH}) and aligns with several relevant research in model calibration, variational Bayes,   robust universal inference, robust~SBI,~among others \citep[e.g.,][]{smith2007calibration,  wang2019variational, park2023robust, tomaselli2025robustsimulationbasedinference}.

Finally, despite the two techniques in Section~\ref{sec:guide}, the computational cost can remain high for complex inference problems. For example, obtaining candidate sets for discrete parameters may require repeated runs of the inversion algorithm, with a cost comparable to bootstrap. This poses further challenges when an effective inversion algorithm is unavailable. Nevertheless, with continued growth in computing power, we view this as a characteristic shared by many modern simulation-based approaches rather than a fundamental limitation.

\section*{Acknowledgment} 
This is a substantially revised version of  https://arxiv.org/abs/2402.15004 and \\ https://arxiv.org/abs/2206.06421. 
We sincerely thank the editors, reviewers, and many colleagues in the community for their constructive comments and suggestions that have helped us greatly improve the development, presentation, and quality of the paper.



\appendix

\newpage

\spacingset{1.8}

\setcounter{example}{0}
\renewcommand{\theexample}{A\arabic{example}}

\setcounter{figure}{0}
\renewcommand{\thefigure}{A\arabic{figure}}

\setcounter{table}{0}
\renewcommand{\thetable}{A\arabic{table}}

\renewcommand{\theequation}{A\arabic{equation}}
\setcounter{equation}{0}

\setcounter{lemma}{0}
\renewcommand{\thelemma}{A\arabic{lemma}} 

\setcounter{theorem}{0}
\renewcommand{\thetheorem}{A\arabic{theorem}} 

\setcounter{corollary}{0}
\renewcommand{\thecorollary}{A\arabic{corollary}}

\setcounter{page}{1}
\renewcommand{\thepage}{A\arabic{page}} 


\spacingset{1.4} 

\centerline{\LARGE {\sc Appendices A - E}}

\centerline{\large (Supplementary Materials)}

\vspace{6mm}\noindent
The appendices are organized in five sections:

\begin{itemize}
[leftmargin=1em, labelsep=0.3em, 
align=parleft,   itemsep=1pt,
  parsep=0pt,
  topsep=1pt]
    \item[] {\bf Appendix A} Proofs, additional materials and examples in Section 2
    \item[] {\bf Appendix B} Proofs, additional materials and examples in Section 3
    \item[] {\bf Appendix C} Proofs, additional materials and examples in Section 4
     \item[] {\bf Appendix D} Additional implementation details and numerical results in Section 5
 \item[] {\bf Appendix E} A brief review and comparisons with existing and relevant inference procedures.
\end{itemize}

\section{Proofs, additional materials and examples in Section~\ref{sec:general}}
\subsection{Proofs of theoretical results in proof~\ref{sec:general}}
\label{sec:proof_sec2}

\begin{proof}[Proof of Theorem~\ref{thm:1}]

Since $\*y_{obs} = G({\btheta}_0, {\*u}^{rel})$ always holds, if 
$T({\bf u}^{rel}, \btheta_0)  \in B_{1 - \alpha}(\btheta_0)$, then $\btheta_0 \in \Gamma_{1-\alpha}(\*y_{obs}) \not = \emptyset$. That is, 
$\{\btheta_0 \in \Gamma_{1-\alpha}(\*y_{obs})\} \supseteq \{T({\bf u}^{rel}, \btheta_0)  \in B_{1 - \alpha}(\btheta_0)\}$. 
Similarly, for the random version~$\*Y = $ $ G({\btheta}_0, {\bf U})$, we have $\left\{\btheta_0 \in \Gamma_{1-\alpha}({\*Y})\right\} \supseteq \left\{T({\bf U}, \*\theta_0) \in B_{1 -\alpha}(\*\theta_0)\right\}$. Thus, 
$\P\{\btheta_0 \in \Gamma_{1-\alpha}(\*Y)\} \ge \P\{T(\*U, \btheta_0) \in B_{1 -\alpha}(\btheta_0)\} \ge 1- \alpha$, where the last equation is by \eqref{eq:B}. The same proof applies in the situation when \eqref{eq:B} holds approximately.
\end{proof}

\begin{proof}[Proof of Corollary 1]

 Under $H_0$, we have $\*\theta_0 \in \Theta_0$ and thus
    \begin{align*} \P\{p(\*Y) \leq \gamma\}
    \leq  \P \left[\inf\left\{1-\alpha': \*\theta_0 \in \Gamma_{1-\alpha'}(\*Y)\right\} \geq 1 - \gamma \right]  
    = \P \left\{ \*\theta_0 \not\in \Gamma_{1-\alpha' }(\*Y),  
    \forall \, \alpha' <  \gamma \right\} \leq \gamma,
    \end{align*}
    for any significance level  $\gamma.$ Thus, the Type I error is controlled. 
\end{proof}

\begin{proof}[Proof of Theorem \ref{thm:NP}]
\begin{eqnarray}
\Gamma_{1-\alpha}(\*y_{obs}) 
& = & \big\{\btheta: \exists \*u^* \in {\mathcal U} \mbox{ s.t. }  \*y_{obs} = G({\btheta}, \*u^*), T(\*u^*, \btheta)  \in B_{1 - \alpha}(\btheta) \big\}  \nonumber \\
& = &\big\{\btheta: \exists \*u^* \in {\mathcal U} \mbox{ s.t. }  \*y_{obs} = G({\btheta}, \*u^*), \widetilde T(G({\btheta}, \*u^*), \btheta)  \in B_{1 - \alpha}(\btheta) \big\} \nonumber
\\
& = &\big\{\btheta: \exists \*u^* \in {\mathcal U} \mbox{ s.t. }  \*y_{obs} = G({\btheta}, \*u^*), \widetilde T(\*y_{obs}, \btheta)  \in B_{1 - \alpha}(\btheta),  \big\} \nonumber 
\\
 & = &\big\{\btheta:  \*y_{obs} = G({\btheta}, \*u^*),  \exists \, \*u^* \in {\mathcal U} \big\} 
 \cap \big\{\btheta:  \widetilde T(\*y_{obs}, \btheta)  \in B_{1 - \alpha}(\btheta) \big\}
\nonumber \\
& = &\big\{\btheta:  \*y_{obs} = G({\btheta}, \*u^*),  \exists \, \*u^* \in {\mathcal U} \big\} 
 \cap \widetilde \Gamma_{1-\alpha}(\*y_{obs})
\nonumber
\\
& \subseteq & \widetilde \Gamma_{1-\alpha}(\*y_{obs}), \nonumber 
\end{eqnarray}
The two sets  
$\Gamma_{1-\alpha}(\*y_{obs}) = \widetilde \Gamma_{1-\alpha}(\*y_{obs})$, when $\widetilde \Gamma_{1-\alpha}(\*y_{obs}) \subseteq \big\{\btheta:  \*y_{obs} = G({\btheta}, \*u^*), 
\exists \, \*u^* \in {\mathcal U} \big\}$.
\end{proof}

\begin{proof}[Proof of Corollary \ref{cor:UMA-NP}]

 Let $\btheta_0$ be the true parameter with $\*y_{obs} = G(\btheta_0, \*u^{rel})$ (realized version) and $\*Z = G(\btheta_0, \*U)$ (random version). Let $\btheta' \not = \btheta_0$ be a false parameter value. 

(a) Since $\widetilde
\Gamma_{1-\alpha}(\*Y)$ is an UMA confidence set, by the definition of UMA \citep[][Section 9.2.1]{tCAS90a}, we have 
$$
\P\left\{\btheta' \in \widetilde
\Gamma_{1-\alpha}(\*Y) \right\} \leq \P\left\{ \btheta' \in C_{1-\alpha}(\*Y) \right\},
$$
where $C_{1-\alpha}(\*y_{obs})$ is any level $1 - \alpha$ confidence set. However, by Theorem~\ref{thm:NP}, we have $\Gamma_{1-\alpha}(\*Y) \subseteq \widetilde \Gamma_{1-\alpha}(\*Y)$. It follows that
$$
\P\left\{\btheta' \in 
\Gamma_{1-\alpha}(\*Y) \right\} \leq 
\P\big\{\btheta' \in \widetilde
\Gamma_{1-\alpha}(\*Y) \big\} \leq \P\left\{ \btheta' \in C_{1-\alpha}(\*Y) \right\}.
$$
So, the repro sample level $1 - \alpha$ confidence set $\Gamma_{1-\alpha}(\*y_{obs})$ is also UMA. 

(b) Since $\widetilde \Gamma_{1-\alpha}(\*Y)$ is unbiased, we have $\P\left\{\btheta' \in \widetilde \Gamma_{1-\alpha}(\*Y)\right\} \leq 1-\alpha $, for any $\btheta' \not = \btheta_0$. By Theorem~\ref{thm:NP}, we have $\Gamma_{1-\alpha}(\*Y) \subseteq \widetilde \Gamma_{1-\alpha}(\*Y)$, it follows that $\P\left\{\btheta' \in \Gamma_{1-\alpha}(\*Y)\right\} \leq \P\left\{\btheta' \in \widetilde \Gamma_{1-\alpha}(\*Y)\right\} \leq 1-\alpha$, for any $\btheta' \not = \btheta_0$. So $\widetilde \Gamma_{1-\alpha}(\*Y)$ is also unbiased. The UMA part of proof is the same as in~(a). 
\end{proof}

\subsection{A theorem on the confidence set in \eqref{eq:G2} and (\ref{eq:G2h})}
\label{sec:the_G2}
The following theorem states that the set $\Gamma_{1-\alpha}(\*y_{obs})$ defined in (\ref{eq:G2}) and (\ref{eq:G2h}) are level $1 -\alpha$ confidence sets for $\btheta_0$. 

\begin{theorem}\label{thm:2} Assume Eq. (\ref{eq:AA}) or Eq. (\ref{eq:invertedH}) holds with $\*\theta = \*\theta_0$. If the inequality (\ref{eq:B}) holds exactly for any fixed $\*\theta$, then for $\Gamma_{1-\alpha}(\*y_{obs})$  in (\ref{eq:G2}) or (\ref{eq:G2h}) the following inequality holds exactly, 
\begin{equation}
\label{eq:V2}
\P\left\{ {\btheta}_0 \in \Gamma_{1-\alpha}(\*Y)\right\} \ge 1- \alpha 
\quad
\hbox{for $0< \alpha <1$.}
\end{equation}
Furthermore, if (\ref{eq:B}) holds approximately 
with $\P \left\{T(\*U, \btheta)  \in B_{1 -\alpha}(\btheta) \right\}\ge (1-\alpha)\{1 + o(\delta^{'})\}$, for~a~small $\delta^{'} > 0$, then (\ref{eq:V2}) holds approximately 
with $\P\left\{ {\btheta}_0 \in \Gamma_{1-\alpha}(\*Y)\right\} \ge (1-\alpha)\{1 + o(\delta^{'})\}$, for $0< \alpha <1$. 
\end{theorem}

\begin{proof}[Proof of Theorem~\ref{thm:2}] The proof is similar to that of Theorem~\ref{thm:1}. Specifically, since $g(\*Y, \btheta, $ $ \*U)$ $ =0$ holds for $\*\theta = \*\theta_0$, 
we have 
$\big\{\btheta_0 \in \Gamma_{1-\alpha}({\*Y})\big\} \subseteq \big\{T({\bf U}, \*\theta_0) \in B_{1 -\alpha}(\*\theta_0)\big\}$ for the set $\Gamma_{1-\alpha}({\*Y})$ defined in (\ref{eq:G2}). Thus, 
The coverage of $\Gamma_{1-\alpha}(\*y_{obs})$ in \eqref{eq:G2} follows from  
$\P\{\btheta_0 \in \Gamma_{1-\alpha}(\*Y)\}  \geq \P\{T(\*U, \btheta_0) \in B_{1 -\alpha}(\btheta_0)\} \ge 1 - \alpha,$ where the last inequality is by \eqref{eq:B}.  The same proof applies in the situation when \eqref{eq:B} holds approximately. 

Finally, it follows immediately by making $g(\*Y, \btheta, \*U) = \btheta - H(\*Y, \*U)$ that \eqref{eq:V2} and its approximation also holds for $\Gamma_{1-\alpha}(\*y_{obs})$ in \eqref{eq:G2h}. 
\end{proof}

\subsection{Derivation of confidence set in \eqref{eq:BinExample} and its simulation results}
\label{sec:bin_example}
From~equation (\ref{eq:G1}), we have
\begin{eqnarray*}
\Gamma_{1-\alpha}(y_{obs}) &=& \bigg\{ \theta \big |  \exists \, \*u^* \in {\mathcal U} \, \hbox{s.t.} \,\, y_{obs} =  \sum\nolimits_{i =1}^r {\bf 1}{(u_i^* \leq \theta)},   
 \sum\nolimits_{i =1}^r {\bf 1}{(u_i^* \leq \theta)}
\in [a_L(\theta), a_U(\theta)] \bigg\}  \nonumber \\  
& = & \left\{ \theta \big |y_{obs} = \sum\nolimits_{i =1}^r {\bf 1}{(u_{i}^* \leq \theta)}, y_{obs}\in [a_L(
\theta), a_U(\theta)],  \exists \, \*u^* \in {\mathcal U} \right\}
\nonumber \\ 
& = & \left\{ \theta \big |y_{obs} = \sum\nolimits_{i =1}^r {\bf 1}{(u_{i}^* \leq \theta)},  \exists \, \*u^* \in {\mathcal U} \right\} \cap \left\{ 
\theta\big | y_{obs} \in [a_L(\theta), a_U(\theta)]\right\}
\nonumber \\ 
& = & \left\{ \theta \big |a_L(\theta) \leq y_{obs} \leq  a_U(\theta) \right\}. \nonumber 
\end{eqnarray*}
The last equation holds, because for any given $\theta \in \Theta = (0,1)$ there always exists at least a $\*u^* \in {\mathcal U}$ such that $y_{obs} = \sum_{i =1}^r {\bf 1}{(u_{i}^* \leq \theta)}$, i.e., $\{ \theta \big |y_{obs} = \sum_{i =1}^r {\bf 1}{(u_{i}^* \leq \theta)}, \exists \, \*u^* \in {\mathcal U} \} = \Theta$.
\hfill $\square$

Table~\ref{tab:simulation_binary_example}  provides a numerical study comparing the empirical performance of $\Gamma_{.95}(y_{obs})$ (Repro) against 95\% confidence intervals obtained using the traditional Wald method and the fiducial approach (GFI) of \cite{Hannig2009}, both of which are asymptotic methods that can ensure coverage only when $n$ is large. The repro sampling method is an exact method that improves the performance of existing methods, especially when $n \theta_0 < 5$.  We also apply two other exact methods to compute the confidence intervals, Clopper-Pearson and Stern. The Clopper-Pearson intervals are in general wider than those from the repro samples approach, and the Stern intervals have a similar performance to the repro samples. {Moreover, we also applied the quantile regression approach to compute the Borel set $B_{1-\alpha}(\theta)$, as discussed in Section~\ref{sec:guide} and Appendix~\ref{sec:quantile}. The results are close to the exact repro samples approach in Table~\ref{tab:simulation_binary_example}, implying that the quantile regression approach is effective in obtaining $B_{1-\alpha}(\theta)$ and the repro samples confidence sets. }

\begin{table}[!t]
   \centering 
\resizebox{\textwidth}{!}{\begin{tabular}{cccccccccc}
\hline\hline
 && \multicolumn{2}{c}{$n=20, \theta_0=0.1$} && \multicolumn{2}{c}{$n=20, \theta_0=0.4$} && \multicolumn{2}{c}{$n=20, \theta_0=0.8$}\\ \cline{3-4} \cline{6-7} \cline{9-10}
  && Coverage           &    Width      & & Coverage &Width   & &  Coverage    & Width         \\ \hline
 Repro & &     0.949(0.007)      &  0.281(0.059)   &      & 0.963(0.006)           &       0.408(0.026)   & &    0.959(0.006)       &    0.342(0.045)  \\ \hline
 Wald & &        0.877(0.010)  &     0.236(0.052)      & &  0.927(0.008)         &    0.418(0.028)       &  &  0.915(0.009)       &0.332(0.071)\\ \hline
 GFI & &   0.988(0.003)       &  0.293(0.065)        & &          0.963(0.006) &      0.438(0.022)     &  &    0.973(0.004)     &      0.365(0.056)     \\ \hline
Clopper-Pearson && 0.988(0.003) &  0.293(0.065) & & 0.963(0.006) &      0.438(0.022)  & &   0.973(0.004)     &      0.365(0.056)    \\ \hline
 Stern && 0.949(0.007)& 0.283(0.058) & &   0.963(0.006)   &       0.408(0.026) &&  0.959(0.006)       &    0.343(0.046)\\ \hline \hline
\end{tabular}}
 \captionsetup{font= small}
    \caption{Comparison of $95\%$ confidence intervals by the repro, Wald and GFI methods in Binomial$(n, \theta_0)$ data; repetitions $= 1000$; standard errors (sd's) are enclosed in brackets.}
    \label{tab:simulation_binary_example}
\end{table}

\subsection{Additional Examples in Section~\ref{sec:general}}
\label{sec:example_sec2}

In this appendix, we provide four additional illustrative examples that are relevant to the derivations in Sections~\ref{sec:general}, and Example~\ref{ex:crq_robust} is also relevant to Appendix~\ref{sec:mu-sigma}. Specifically, 
Example~\ref{ex:crq} provides an illustration of using (\ref{eq:G2}) 
under the extended framework
to construct confidence intervals for nonparametric quantiles. 
The repro samples method works well and the intervals by the repro samples method 
are compared much favorably to those obtained using large-sample 
bootstrap methods.  
Example~\ref{ex:crq_robust} extends  Example~\ref{ex:crq} to develop a robust method for analysis of 
contaminated Gaussian data, such as those in a Gaussian mixture model where the membership assignments are often difficult to recover accurately. Example~\ref{ex:quantile_privacy} studies the same problem as Example~\ref{ex:crq}, but with all the observations privatized. It presents a case where the nuclear mapping is not test statistics. Finally, in Example A4, which is a continuation of Example~\ref{ex:3}, we show that the LRT confidence interval is not optimal, and the proposed repro sample procedure is more efficient. 

\begin{example}[Nonparametric inference for population quantiles] \label{ex:crq} 
Assume $\*y_{obs} = (y_1, \ldots,$ $ y_n)^\top$ are from an unknown distribution $F$. Our goal is to make inference about the population quantile $\theta_0 = F^{-1}(\zeta)$, for a given $0<\zeta <1$. 
In the form of (8), 
we have 
$$
g(\*Y, \theta, \*U) = \sum_{i = 1}^n I(Y_i \leq \theta) - \sum_{i =1}^n U_i = 0,$$ 
where $Y_i \sim F$ and $U_i \sim \text{Bernoulli}(\zeta)$. 
Let us define $T(\*U, \theta) =  T(\*U) = \sum_{i =1}^n U_i$. It follows that $T(\*U) \sim \text{Binomial}(n, \zeta)$ and thus $B_{1 -\alpha} = [a_L(\zeta), a_U(\zeta)]$, where $a_L(\cdot)$ and $a_U(\cdot)$ are defined 
(6) 
with $r = n$. By (9), 
a finite-sample level $1 - \alpha$ confidence set is
\begin{align*}
\Gamma_{1-\alpha}(\*y_{obs}) & = \big\{\theta: \sum_{i = 1}^n I(y_i \leq \theta) - \sum_{i =1}^n u_i = 0, 
\sum_{i =1}^n u_i \in [a_L(\zeta), a_U(\zeta)]
\big\} 
\\ & = \big\{\theta: a_L(\zeta) \leq \sum_{i = 1}^n I(y_i \leq \theta) \leq a_U(\zeta))
\big\} \\ & = \big[y_{(a_L(\zeta))}, y_{(a_U(\zeta)+1)}\big),
\end{align*}
where 
$y_{(k)}$ is the $k$th order statistic
of the sample $\{y_1, \ldots, y_n\}$,  
$y_{(0)}$ and $y_{(n+1)}$ are the infimum and supremum of support of $F.$ 

Figure~\ref{fig:quantile} provides a 
summary of
a numerical study that  compares the empirical performance of $\Gamma_{.95}(y_{obs})$ 
against 95\% intervals obtained by the conventional bootstrap method under two types of distributions $F$: (a) $F$ is $\text{Cauchy}(0,1)$ and (b) $F$ is negative binomial  $\text{NB}(2, 0.1)$.
The repro samples method works well in terms of coverage rate and interval length even for $\zeta$'s close to $0$ or $1$, whereas the bootstrap method has coverage issues under these settings.

See also Example 2B of \cite{XieWang2022} for related work on much more complex finite-sample semi-parametric inference development on a censored regression model \citep{powell_censored_1986}.

\begin{figure}[ht]
    \centering
    \begin{subfigure}[b]{\textwidth}
    \centering
      \includegraphics[width=\textwidth, height= 0.23\textheight]{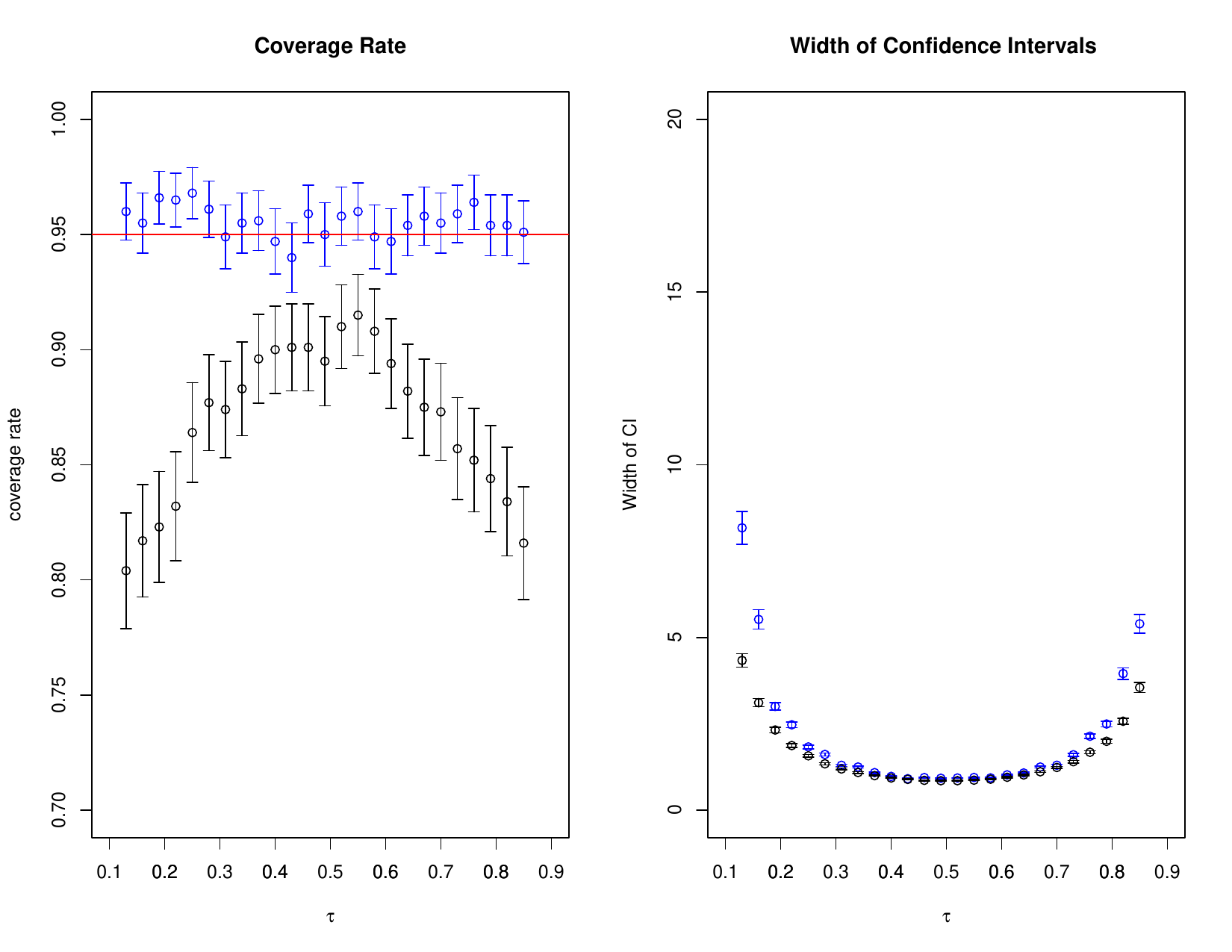}
      \caption{Cauchy Distribution}
    \end{subfigure}
   \begin{subfigure}[b]{\textwidth}
   \centering
\includegraphics[width=\textwidth,  height= 0.23\textheight]{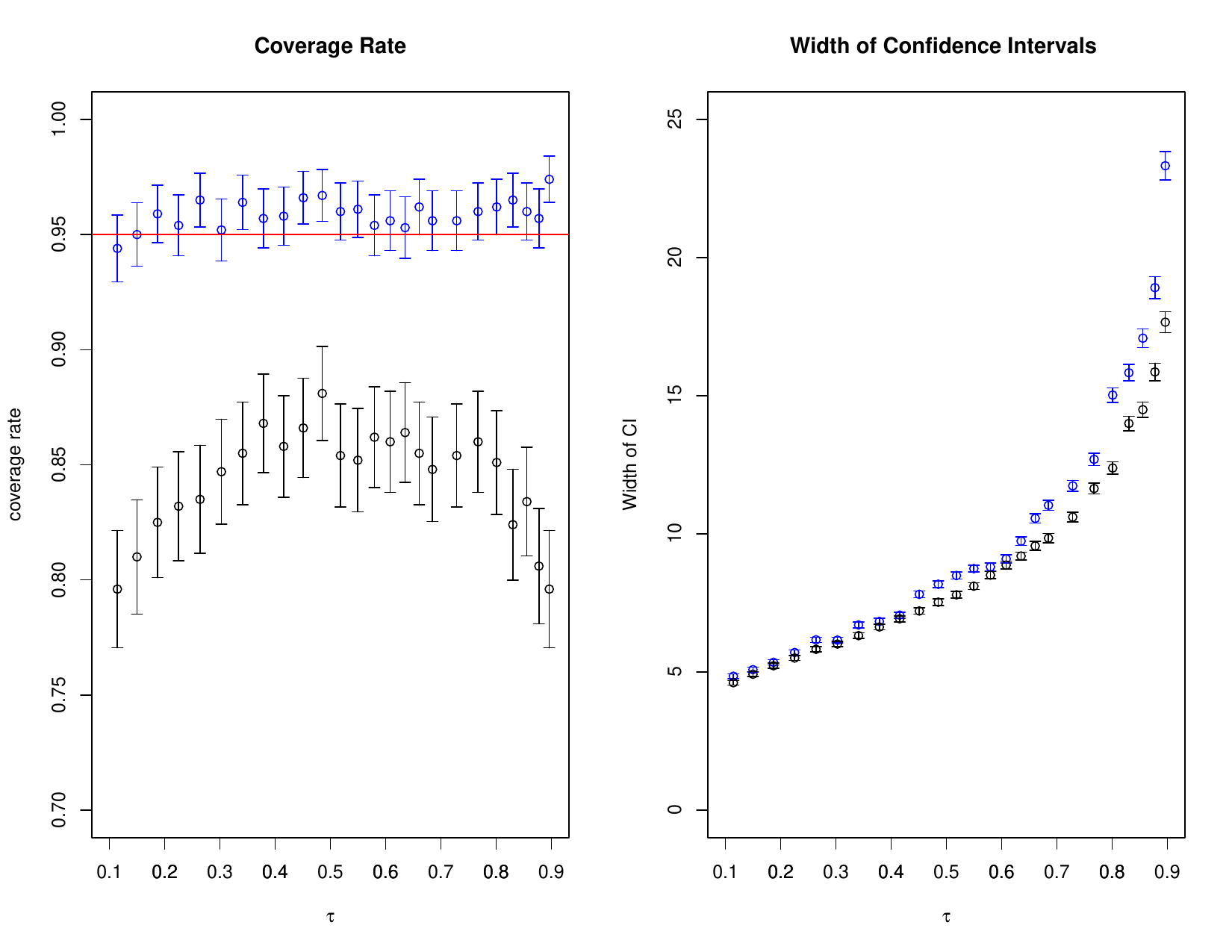}
    \caption{Negative Binomial Distribution}
   \end{subfigure}
    \caption{\small Coverage rates and widths of 95\% confidence intervals for $\theta_0 = F^{-1}(\zeta)$ of an (unknown) distribution $F$ for a range of percentile $\zeta$ values. Bars and dots colored in blue are produced by the repro samples method and those in black are 
    by the conventional bootstrap method. In (a) $F$ is $\text{Cauchy}(0,1)$ and in (b) $F$ is $\text{NB}(2, 0.1)$. Both cases have a sample size of $n=60.$ 
    The simulations are replicated for $1000$ times. The error bars display the mean~$\pm 2$~standard~errors.}
  \label{fig:quantile}
\end{figure}

\end{example}

\begin{example}[Robust method for contaminated Gaussian data]
\label{ex:crq_robust}
Suppose the observations $\*y_{obs} =(y_1, \ldots,$ $ y_n)^\top$ in Example~\ref{ex:crq} are from
$N(\mu_0, \sigma_0^2)$, a small percentage of which, however, are contaminated. If $I(Y_i \leq \mu_0) 
\sim {\rm Bernoulli}(.5)$ holds for the contaminated $Y_i$'s, the approach in Example~\ref{ex:crq} can be directly applied to obtain a confidence set 
for $\mu_0$ with $\zeta = .5$.  That is, a level $1 - \alpha$ confidence interval for $\mu_0$ is
\begin{align}
\big\{\mu: a_L(.5) \leq \sum_{i = 1}^n I(y_i \leq \mu) \leq a_U(.5))
\big\} = \big(y_{(a_L(.5))}, y_{(a_U(.5)+1)}\big)
\label{eq:quant_ci}
\end{align}
where $a_L(\cdot)$ and $a_U(\cdot)$ are defined  in \eqref{eq:B-bounds} with $r = n$; i.e.,
$$
\left(a_L(.5), a_U(.5)\right) = 
{\arg\min}_{\left\{(i,j): \sum^j_{k=i} {n \choose k} \left(\frac12\right)^n \geq 1- \alpha \right\}}
    |j-i|. 
$$
It can be shown that this interval in \eqref{eq:quant_ci} corresponds to conducting the sign test on median $H_0: \mu = \mu_0$ versus $H_a: \mu \not= \mu_0$. Since the sample median has a high breakdown point $BP = 0.5$, the method has a high breakdown point as well.

In cases that 
the assumption of $I(Y_i \leq \mu_0) 
\sim {\rm Bernoulli}(.5)$ does not hold, 
we typically have 
$I(Y_i \leq \mu_0 + \delta) 
\sim {\rm Bernoulli}(.5)$ instead, where $\delta = \mu_c - \mu_0$ with $\mu_c$ being the (population) median of the contaminated data $Y_i$.  
In this case, 
we may 
adjust the confidence interval in \eqref{eq:quant_ci} by  $\big\{\mu: a_L(.5) \leq \sum_{i = 1}^n I(y_i \leq \mu + \delta) \leq a_U(.5))
\big\} = \big(y_{(a_L(.5))} - \delta, y_{(a_U(.5)+1)} -\delta\big)$. 
Practically, $\delta$ is often unknown, however. We need some empirical knowledge of $\delta$ or a way to estimate 
$\delta$. 
In our numerical studies in Section~5  
for the Gaussian mixture model, for each given Gaussian component we estimate the unknown $\delta$ by $\hat \delta = M - \widetilde M,$ where $M$ is the sample median, and $\widetilde M$ is the weighted sample median.  
Although a more sophisticated weighting scheme could be developed, for simplicity we simply assigned a weight of $1.5$ for all observations overlapping with other components. Here, we say an observation overlaps with another component if it falls in the span of any other components given the (estimated) membership matrix, where the span of a  component refers to the interval from the minimum to the maximum of 
members in
the component assigned by the membership matrix.
Our empirical studies suggest that this simple weighting scheme works well in the settings of Gaussian mixture models.

To obtain a robust confidence interval for $\sigma_0$, we utilized a robust and consistent estimator of $\sigma_0$: $\hat \sigma = {\it MAD}/\Psi_n^{-1}(.5)$, 
where ${\it MAD} = {\it med}\{|y_i - M|: i = 1, \ldots, n\}$ is the median absolute deviation, $M = {\it med}\{y_i: i = 1, \ldots, n\}$ and $\Psi_n(t) = P(Z_i - M_z < t)$ is the cumulative distribution function of $Z_i - M_z$, $i = 1, \ldots, n$. Here, $Z_i \overset{iid}{\sim} N(0,1)$ and $M_z = {\it med}\{Z_i: i = 1, \ldots, n\}.$ 
We have a generalized data generating model~in~the~form~of~(\ref{eq:quantile}), 
$$\sum_{i = 1}^n I(|Y_i -M| \leq \Psi_n^{-1}(.5)\sigma ) - \sum_{i =1}^n U_i' = 0,$$ 
where $U'_i = I\{|Z_i - M_z| \leq {\small \Psi_n^{-1}(.5)}\}$ are dependent Bernoulli$(.5)$ variables. Since the distribution of $\sum\nolimits_{i = 1}^n U'_i$ can be simulated, we can get 
a Borel interval  $B_{1 -\alpha} = [a'_L, a'_U]$, similarly as in (\ref{eq:B-bounds}); i.e., 
$\left[a_L', a_U'\right] = 
     \arg \min_{\left\{(l,r): 
  P( l < \sum_{i = 1}^n U'_i < r )
     \geq 1-\alpha\right\}} |r - l|$. A level $1 - \alpha$ confidence set of $\sigma_0$ is then $$
     \big\{\sigma: a'_L \leq \sum_{i =1}^n I(|y_i - M| \leq {\small \Psi_n^{-1}(.5)} \sigma ) \leq a'_U \big\}.
     $$
     Since $MAD$ has a high breakdown point of 50\%, the above method is expected to have a high breakdown point as well.  

\end{example}

\begin{example} \label{ex:quantile_privacy}
(Nonparameteric quantile example where the observed sample is privacy-tized) In the setup of Example~\ref{ex:crq}, suppose the realized $y_i$'s are not given to us. Instead, due to privacy consideration, we are given 
$
\widetilde {\*y}_{obs} = (\widetilde y_1, \ldots, \widetilde y_n)^\top
$
where 
$$
\widetilde y_i^{obs} = y_i + z_i^{rel}
$$
and $z_i^{rel}$ is the realized (privacy) noise from $N(0, \delta)$. Here, $\delta$ is given and the $N(0, \delta)$-distributed $Z$ noise is independent of $Y \sim F$, but $z_i^{rel}$ is not given to us. We still want to make inference for the same quantile parameter of $F$, $\theta_0 = F^{-1}(\zeta)$, for a given $0<\zeta <1$. In this case, we have 
$$
g(\widetilde{\*Y}, \theta, \*U) = \sum_{i = 1}^n I(\widetilde Y_i - Z_i \leq \theta) - \sum_{i =1}^n B_i = 0,
$$ 
where $\widetilde Y_i = Y_i + Z_i$, $Y_i \sim F$, $Z_i \sim N(0,\delta)$ and $B_i \sim \text{Bernoulli}(\zeta)$. Also, $U_i = (Z_i, B_i)$. We would like to make inference for $\theta_0$ for the given privatized sample $\widetilde {\*y}_{obs}$. When $F$ is unknown, it appears there no good way to obtain a good test statistic for this inference problem.

Here we define the nuclear mapping as 
\begin{align*}
     T(\*u, \theta)  = \left(\sum_{i=1}^n B_i, \sum z_i^2/\delta  \right),
\end{align*}
and the Borel set $B_{1 -\alpha} = [a'_L(\eta), a'_U(\eta)) \times (0, F^{-1}_{\chi^2_n}(1+\alpha'-\alpha)),$ where $\alpha' < \alpha <1,$
\begin{equation}
\left(a'_L(\theta), a'_U(\theta)\right) = 
\arg\min\nolimits_{\big\{(i,j): \sum^j_{k=i} {r \atopwithdelims( ) k} \theta^k (1 -  \theta)^{(r-k)} \geq 1- \alpha' \big\}}
    |j-i|, \nonumber
\end{equation}
following \eqref{eq:B-bounds}, but with $\sum^j_{k=i} {r \atopwithdelims( ) k} \theta^k (1 -  \theta)^{(r-k)} \geq 1-\alpha'$ instead of $1 - \alpha.$ In our implementation, we set $\alpha' = \alpha - 0.01$ and $1+\alpha' - \alpha = 0.99.$

See also \cite{Awan03072025} for an application of the repro samples method under a more general privacy model setup.

\begin{table}[]
    \centering
    \caption{Simulation Results for 95\% confidence intervals for $\eta$-quantile from privatized data, generated by $Y \sim Gamma(2,4) + N(0, \delta)$}
    \begin{tabular}{c|c c|cc}
    & \multicolumn{2}{c|}{$\eta = 0.7, n = 50$} & \multicolumn{2}{c}{$\eta = 0.75, n= 150$}\\
         $\delta$ & Coverage & Width & Coverage & Width \\ \hline
         $\delta = 4$ & 0.985 & 11.389  & 1.00 & 9.383  \\
         $\delta = 2.25$  & 0.985 & 10.189 & 0.995 & 8.262   \\ 
         $\delta = 1$ & 0.980 &  8.902 & 0.990 & 7.064 \\
         $\delta = 0.25$ & 0.975 & 7.414 &  0.990 & 5.620\\
         $\delta  =0 $ & 0.970 & 5.214  & 0.960 & 3.284\\
         \hline
    \end{tabular}
    \label{tab:my_label}
\end{table}

\begin{example}\label{ex:3a}
[Likelihood inference; Example~\ref{ex:3} continues] 
The likelihood function under the setup of Example~\ref{ex:3} is
$L(\theta|\*y^{obs}) = \prod\nolimits_{i =1}^n
    \big[ I\{-1 < y_i - \theta \leq 1\}/2\big] 
    \propto I\big\{ \max\nolimits_{1\leq i \leq n}|y_i - \theta| < 1 \big\}
$. The LRT statistic for testing $H_0: \btheta_0 = \btheta$ vs $H_1: \btheta_0 \not = \btheta$ is 
$\lambda_{\theta}(\*y_{obs}) 
= I\big\{ \max\nolimits_{1 \leq i \leq n}|y_i - \theta| < 1 \big\}$, but the regularity conditions do not hold and $-2 \log\{\lambda_{\theta}(\*Y)\}$ 
does not converge to a $\chi^2$ distribution. 
Nevertheless, since $\lambda_{\theta}(\*y_{obs})$ is 
decreasing in $\widetilde T(\*y_{obs}, \theta) = \max\nolimits_{1 \leq i \leq n}|y_i - \theta|$, the LRT rejects $H_0$ with a large $\widetilde T(\*y_{obs}, \theta)$, which leads to 
a level $1 - \alpha$ confidence set $$\widetilde \Gamma_{1-\alpha}(\*y_{obs}) = \{\theta: \max\nolimits_{1 \leq i \leq n} |y_i - \theta| < (1-\alpha)^{1/n}\} 
=  (y_{(n)} - (1-\alpha)^{1/n}, y_{(1)} + (1-\alpha)^{1/n}).$$ 
We get the same confidence set if we 
directly define our nuclear mapping function through $\widetilde T(\*y_{obs}, \theta) =  \max\nolimits_{1 \leq i \leq n} |y_i - \theta|$. Alternatively, 
we consider another choice 
of 
a vector nuclear mapping function $T(\*u, \theta) = (u_{(1)}, u_{(n)})$,
where $u_{(k)}$ is the $k$th order statistic of a sample from $U(-1,1)$. 
Let $c_{1-\alpha} \in (0,1)$ be a solution of
$({c_{1-\alpha}+1})^n - 2^{n-1}c_{1-\alpha}^n = 2^{n-1}\alpha$. We can show that $\P\{(U_{(1)}, U_{(n)}) \in B_{1 -\alpha}\} = 1- \alpha$, for $B_{1 -\alpha} = (-1, -c_{1-\alpha}) \times (c_{1-\alpha}, 1)$. Thus, a level $1 - \alpha$ confidence set is $\Gamma_{1-\alpha}(\*y_{obs})$ $ = \{\theta: y_{(n)} - 1 < \theta < y_{(1)} + 1; 
 y_{(1)} - \theta < -c_{1-\alpha}, y_{(n)} - \theta > c_{1-\alpha}\},
 $
 which is simplified~to $$\Gamma_{1-\alpha}(\*y_{obs})
= \big(\max\big\{y_{(1)} + c_{1-\alpha}, y_{(n)} - 1\big\}, \min\big\{y_{(n)} - c_{1-\alpha}, y_{(1)} + 1\big\}\big).$$ 
Note that, if we consider the space of $(U_{(1)} , U_{(n)})$, the
Borel set for the LRT confidence set $\widetilde \Gamma_{1-\alpha}(\*y_{obs})$ is $\widetilde B_{1 -\alpha} = (- (1-\alpha)^{1/n}, (1-\alpha)^{1/n}) \times (- (1-\alpha)^{1/n}, (1-\alpha)^{1/n})$, which is much bigger than the Borel set $B_{1 -\alpha} = (-1, -c_{1-\alpha}) \times (c_{1-\alpha}, 1)$ of the alternative method. 

We have conducted a numerical study with true $\theta_0 = 0$ and $n = 5, 20, 200$, respectively. In all cases, the coverage rates of both confidence intervals are right on target around $95\%$ in 3000 repetitions.  
The intervals by the repro sample method are on average consistently shorter than those obtained using the LRT method across all sample sizes, as evidenced empirically in Figure~\ref{fig:lrt} as well. 
Although a LRT is uniformly most powerful for a simple-versus-simple test by the Neyman-Pearson Lemma, it is not the case for the two-sided test in this example.  Here, we can explore and use the repro samples method to obtain a better confidence interval.   
\end{example}

\begin{figure}
    \centering
      \includegraphics[width=\textwidth, height= 0.18\textheight]{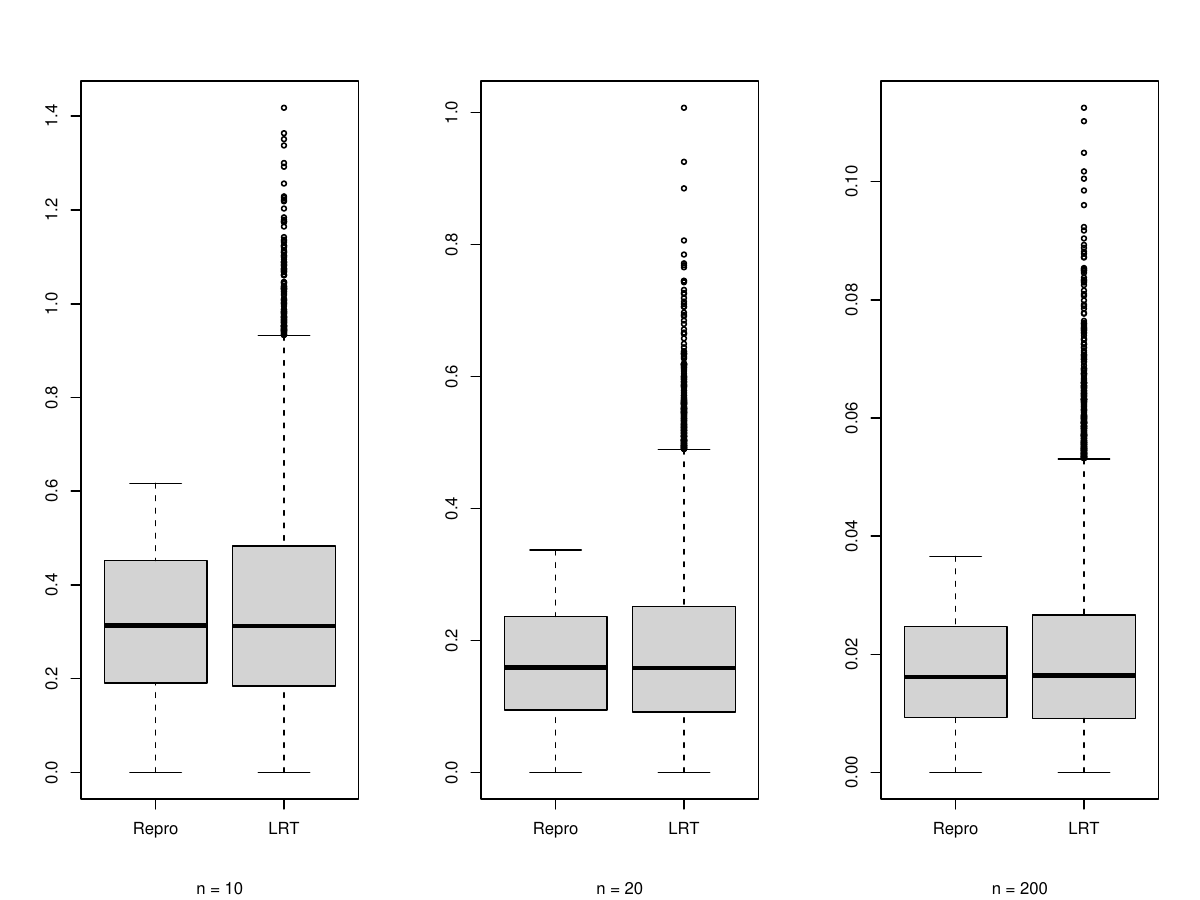}
    \caption{{\small Side-by-side box plots comparing the widths of the two 95\% confidence intervals discussed in Example 4, for $n = 10, 20$ and $200$, respectively; 
    The number of repetitions $=5000$.     
    The~empirical coverage rates are 
    $(.952,  .954,  .949)$ for the repro method and $(.950,  .952, .950)$ for the LRT method, respectively, all on the target. The average interval lengths are $(.320, .166, .017)$, respectively, for the repro method, all smaller than the corresponding lengths $(.351, .184,  .020)$ for the LRT method.
    }}

    \label{fig:lrt}
\end{figure}

\end{example}

\section{Proofs, additional materials and examples in Section~\ref{sec:3}}

\subsection{Proofs of theoretical results in Section~\ref{sec:3} }
\label{sec:proof_sec3}

\begin{proof}[Proof of Lemma~\ref{lemma:nuisance}] 

It follows from the definitions in \eqref{eq:nu} and \eqref{eq:T_p} that
$\P\left\{T_p(\*U, \btheta) \leq 1-\alpha \right\} \ge \P\left\{\nu (\*U, \eta, \*\beta,\*\beta) \ \leq 1-\alpha \right\} \geq 
\P\{\btheta \in \Gamma_{1-\alpha}(G(\*U, \btheta))\} \ge 1-\alpha,$ 
where $\Gamma_{1-\alpha}(G(\*u, \btheta))$ is that used in \eqref{eq:nu}. 
\end{proof}

\begin{proof}[Proof of Theorem~\ref{the:nuisance}] 
Similar to the proof of Theorem~\ref{thm:1}, we can show that $\{\eta_0 \in \Xi_{1-\alpha}(\*Y)\} 
\supseteq
\{T_p(\*U, \*\theta_0) $ $\leq 1-\alpha\}$.
The coverage probability is therefore bounded by
$\P\left\{\eta_0 \in \Xi_{1-\alpha}(\*Y)\right\} \geq \P\{T_p(\*U, \*\theta_0)\leq 1-\alpha\}.$ Thus,  Theorem~\ref{the:nuisance} holds by Lemma~\ref{lemma:nuisance}.
\end{proof}

\begin{proof}[Proof of Lemma~\ref{lemma:p_bound_C_D_general}]
Let $\P_{\*U, \mathcal V}$ denote the joint probability function with respect to $\*U$ and $\mathcal V,$ where $\mathcal V$  is the set of a collection of independent samples of $\*U^s \sim \*U,$ 
then
\begin{align*} 
& \P_{\*U, \mathcal V}\big\{\eta_0 \not\in \widehat{\it \Upsilon}_{{\mathcal V}}(\*Y)\big\}  = \P_{\*U, \mathcal V}\left\{\eta_0 \not\in \widehat{\it \Upsilon}_{{\mathcal V}}(\*Y), \bigcap_{\*U^s \in \mathcal V}\{\*U^s \not\in S_{nb}(\*U)\}\right\} \\ & \qquad \qquad  
 +  \P_{\*U, \mathcal V}\left\{\eta_0 \not\in \widehat{\it \Upsilon}_{{\mathcal V}}(\*Y), \bigcup_{\*U^s \in \mathcal V}\{\*U^s \in S_{nb}(\*U)\}\right\}\\ & \qquad   = \P_{\*U, \mathcal V}\left\{\eta_0 \not\in \widehat{\it \Upsilon}_{{\mathcal V}}(\*Y), \bigcap_{\*U^s \in \mathcal V}\{\*U^s \not\in S_{nb}(\*U)\}\right\} \\
& \qquad \leq \P_{\*U, \mathcal V}\left\{ \bigcap_{\*U^s \in \mathcal V}\{\*U^s \not\in S_{nb}(\*U)\}\right\}\\
& \qquad = E_{\*U}\left[\P_{\mathcal V|U}\left\{\bigcap_{\*U^s \in \mathcal V}\{\*U^s \not\in S_{nb}(\*U)\}\middle |\*U\right\}\right] =  E_{\*U}\left[\left\{1- \P_{\*U^s|\*U}\left(\*U^s \in S_{nb}(\*U)\middle |\*U\right)\right\}^{|\mathcal V|}\right]\\
&\qquad  \leq (1-c_{nb})^{|\mathcal V|}.
\numberthis \label{eq: 1-P_delta}
\end{align*}
The second equality holds since the set inside the second probability term is empty.  This is because 
$\{\*U^s \in S_{nb}(\*U)\}$ implies $ \{\eta_0 = \arg \min\nolimits_{\eta} \min\nolimits_{\*\beta}L\big(\*Y, $ $G((\eta, \*\beta^\top)^\top, \*U^s)\big)
\}$
by the definition of $S_{nb}(\*U)$, and thus  $\bigcup_{\*U^s \in \mathcal V}\{\*U^s \in S_{nb}(\*U)\}$ holds implies $\{\eta_0 \in  \widehat{\it \Upsilon}_{{\mathcal V}}(\*Y)
\}$ by the definition of $\widehat{\it \Upsilon}_{{\mathcal V}}(\*Y)$, which contradicts the other statement in the set that $\{\eta_0 \not\in  \widehat{\it \Upsilon}_{{\mathcal V}}(\*Y)
\}$. 
\end{proof}

\begin{proof}[Proof of Theorem~\ref{thm:p_bound_C_D_general}]
	By Lemma~\ref{lemma:p_bound_C_D_general} and Markov Inequality, 
	\begin{align*}
	\P_{\mathcal V}\left[\P_{\*U|\mathcal V}\{\eta_0 \not\in \widehat{\it \Upsilon}_{{\mathcal V}}(\*Y)\} \geq (1-P_{nb})^{|\mathcal V|/2}\right] & \leq \frac{E_{\mathcal V}\P_{\*U|\mathcal V}\{\eta_0 \not\in \widehat{\it \Upsilon}_{{\mathcal V}}(\*Y)\}}{(1-P_{nb})^{|\mathcal V|/2}}\\
	& \leq \frac{\P_{\*U,\mathcal V}\{\eta_0 \not\in \widehat{\it \Upsilon}_{{\mathcal V}}(\*Y)\}}{(1-P_{nb})^{|\mathcal V|/2}} = (1-P_{nb})^{|\mathcal V|/2}.	
	\end{align*}
Theorem~\ref{thm:p_bound_C_D_general}(a) then follows by making $c<-\frac{1}{2}\log(1- P_{nb}).$ 

Moreover, we have
\begin{align*}
   \P_{\*U|\mathcal V}\big\{\btheta_0 \in \Gamma_{1-\alpha}(\*Y)\big\} &  = \P_{\*U|\mathcal V}\big\{\btheta_0 \in \Gamma_{1-\alpha}(\*Y), \eta_0 \in\widehat{\it \Upsilon}_{{\mathcal V}}(\*Y)\big\} +  \P_{\*U|\mathcal V}\big\{\btheta_0 \in \Gamma_{1-\alpha}(\*Y),\eta_0 \not\in\widehat{\it \Upsilon}_{{\mathcal V}}(\*Y)\big\}  \\
   &  = \P_{\*U|\mathcal V}\big\{\btheta_0 \in \Gamma_{1-\alpha}'(\*Y)\big\} +  \P_{\*U|\mathcal V}\big\{\btheta_0 \in \Gamma_{1-\alpha}(\*Y), \eta_0 \not\in\widehat{\it \Upsilon}_{{\mathcal V}}(\*Y)\big\} \geq 1 - \alpha.
\end{align*}
It then follows that
   \begin{align}
   \label{eq:coverage_inequality}
       \P_{\*U|\mathcal V}\big\{\btheta_0 \in \Gamma_{1-\alpha}'(\*Y) \big\} \geq 1- \alpha - \P_{\*U|\mathcal V}\big\{\btheta_0 \in \Gamma_{1-\alpha}(\*Y), \btheta_0 \not\in \widehat{\it \Upsilon}_{{\mathcal V}}(\*Y)\big\} \geq 1- \alpha - \P\big\{\theta_0 \not\in \widehat{\it \Upsilon}_{{\mathcal V}}(\*Y)\big\}.
   \end{align}
Theorem ~\ref{thm:p_bound_C_D_general}(b) then follows immediately from Theorem~\ref{thm:p_bound_C_D_general}(a).
\end{proof}

\begin{proof}[Proof of Corollary~\ref{cor:3steps}]
    Following the definition in \eqref{eq:CIeta}, by Theorem~\ref{thm:p_bound_C_D_general},
    \begin{align*}
        \P\{\eta_0 \not\in \Xi_{1-\alpha, 1}(\*Y)\} \leq \P(T_p(\*U, \*\theta_0) > 1- \alpha) + \P\{\eta_0 \not\in \hat\Upsilon_{\mathcal V}(\*Y)\}\leq 1-\alpha + o_p(e^{-c_1|\mathcal V|}),
    \end{align*}
    for some $c_1>0.$ Corollary~\ref{cor:3steps}(a) then follows immediately. 
    Similarly, by definition \eqref{eq:conf_beta_combine-prime}, it follows from the above that
    \begin{align*}
         \P\{\*\beta_0 \not\in \Xi_{1-\alpha, 2}(\*Y)\} & \leq \P(T_p(\*U, \*\theta_0) > 1- \alpha') + \P\{\eta_0 \not\in \Xi_{1-\alpha'', 2}(\*Y)\} \leq \alpha' +  \alpha'' + o_p(e^{-c_2|\mathcal V|}) \\
         & =  \alpha + o_p(e^{-c_2|\mathcal V|}),
    \end{align*}
    for some $c_2>0,$ which proves Corollary~\ref{cor:3steps}(b). Finally by \eqref{eq:conf_beta_combine-prime2}
    \begin{align*}
         \P\{\*\theta_0 \not\in \Xi_{\alpha, 3}(\*Y)\} & \leq \P(T_p(\*U, \*\theta_0) > 1-\alpha') + \P\{\eta_0 \not\in \Xi_{1-\alpha'', 2}(\*Y)\} \leq \alpha' + \alpha'' + o_p(e^{-c_3|\mathcal V|}) \\
         & = \alpha + o_p(e^{-c_3|\mathcal V|}),
    \end{align*}
    for some $c_3>0,$ from which Corollary~\ref{cor:3steps}(c) follows. 
\end{proof}

\subsection{A Monte-Carlo method to compute the profile nuclear mapping in \eqref{eq:T_p} }\label{sec:depth_profile}
The profile nuclear mapping 
$T_p(\*u, \btheta)$ in (\ref{eq:T_p}) is defined through $\nu(\cdot; \eta, \*\beta, \widetilde {\*\beta})$ in (\ref{eq:nu}), which often can be computed either directly or by a Monte-Carlo method.
Specifically, if $\nu(\cdot; \eta, \*\beta, \widetilde {\*\beta})$ has no explicit expression, 
Lemma~\ref{lem:nuisance_depth} below suggests that we can use a Monte-Carlo method to get $\nu(\cdot; \eta, \*\beta, \widetilde {\*\beta})$ for a given $(\eta, \*\beta, \widetilde {\*\beta})$. A proof of  Lemma~\ref{lem:nuisance_depth} is in Appendix~\ref{sec:proof_sec3}.

\begin{lemma}
\label{lem:nuisance_depth}
Let $D_{{\mathcal S}_{\theta}}(\*t)$ be the empirical data depth function \citep{liu_multivariate_1999-1}  computed using the Monte-Carlo points in ${\mathcal S}_{\theta}$, where 
${{\mathcal S}_\theta} = \{T_p(\*u^s, \btheta), \*u^s \in {\mathcal V}\}$ is the Monte-Carlo set 
for a given $\btheta = (\eta, {\*\beta}^\top)^\top$ and ${\mathcal V}$ is a set of simulated $\*u^s \sim \*U$. Denote the empirical cumulative distribution function of $D_{{\mathcal S}_{\theta}}\big(T_p(\*u^s, \btheta)\big)$ as $F_{\mathcal V|D}(s) = \frac{1}{|{\mathcal V}|} \sum_{\*u^s \in {\mathcal V}}I\big\{D_{{\mathcal S}_{\theta}}\big(T_p(\*u^s, \btheta) \big) \leq s \big\}$. Then, for $\widetilde\btheta = (\eta, \widetilde {\*\beta}^\top)^\top,$
$\nu(\*u; \eta, \*\beta, \widetilde {\*\beta}) = \inf\nolimits_{\{\*u^*: G(\widetilde\theta,\*u^*)= G(\theta, \*u)\}} \left\{1- F_{\mathcal V|D} \left(D_{{\mathcal S}_{\widetilde\theta}}\big(T_p(\*u^*, \widetilde\btheta) \big)\right)\right\}.$
\end{lemma}

Using this lemma, we can evaluate  
 $T_p(\*u, (\eta, {\*\beta}^\top)^\top)$ in (\ref{eq:CIeta}). 
Specifically, for a given $(\*u, (\eta, {\*\beta}^\top)^\top)$,  $\nu(\*u; \eta, \*\beta, \widetilde {\*\beta})$ in Lemma~\ref{lem:nuisance_depth}  is a function of $\widetilde {\*\beta}$ only. We can use \eqref{eq:T_p} and a build-in optimization function in R (or another computing package) to obtain $T_p(\*u, (\eta, {\*\beta}^\top)^\top)$.  An illustration of this approach is in the case study example in Section~\ref{sec: numerical}, where we use it to construct a confidence set of \eqref{eq:CIeta} for the unknown number of components $\tau_0$ in a Gaussian mixture model. See, also   \cite{chen2023exact, wang2022highdimenional} for additional examples.

Finally, We provide the proof of Lemma~\ref{lem:nuisance_depth} as follows. 
\begin{proof}[Proof of Lemma~\ref{lem:nuisance_depth}]
Let $B_{1 -\alpha}(\btheta) = \{\*t: F_{\mathcal V|D}\big(D_{{\mathcal S}_{\theta}}(\*t)\big) \geq \alpha\}.$ Then $\Gamma_{1-\alpha}(\*y)$ inside \eqref{eq:nu} is  
\begin{align*}
    \Gamma_{1-\alpha}(\*y) = \left\{\btheta: \exists \*u^*, \mbox{\,s.t. } \*y = G(\*u^*,\btheta),  F_{\mathcal V|D}\left(D_{{\mathcal S}_{\theta}}\big(T_p(\*u^*, \btheta)\big) \right)  \geq \alpha \right\}.
\end{align*}
Therefore, it follows that
\begin{align*}
    \nu (\*u, \eta, \*\beta, \widetilde{\*\beta})& =\inf\big\{1-\alpha': \exists \*u^* \mbox{ s.t. }  G(\*u,\btheta) = G(\*u^*,\widetilde{\btheta}),   1-\alpha' \geq 1-F_{\mathcal V|D}\big(D_{{\mathcal S}_{\widetilde\theta}}\big(T_p(\*u^*, \widetilde{\btheta})\big) \big) \big\} \\
    & = \inf\nolimits_{\{\*u^*: G(\*u^*, \widetilde\theta)= G(\*u, \theta)\}} \big\{1- F_{\mathcal V|D}\big(D_{{\mathcal S}_{\widetilde\theta}}\big(T_p(\*u^*, \widetilde\btheta) \big)\big)\big\}.
\end{align*}  
\end{proof}

\subsection{Techniques to obtain $B_{1 -\alpha}(\btheta)$ and to improve computing efficiency}
\label{sec:computation}

\subsubsection{Using data depth approach to find $B_{1 -\alpha}(\btheta)$} \label{sec:data_depth}
Specifically, let $D_{{\mathcal S}_{\theta}}(\*t)$ be the empirical depth function for any $\*t \in \mathcal T$ that is computed based on the Monte-Carlo points in ${\mathcal S}_{\theta} = \{T(\*u^s, {\btheta})$, $\*u^s \in {\mathcal V}\}$. Then, $F_{{\mathcal V}|D} (t) = \sum_{\*u^s \in {\mathcal V}} I\big(D_{{\mathcal S}_{\theta}}(T(\*u^s, $ ${\btheta})) \leq t \big)\big/|{\mathcal V}|$ is the empirical cumulative distribution function of $D_{{\mathcal S}_{\theta}}(T(\*u^s, {\btheta})).$  By Lemma~1(a) of \cite{Liu2021}, the level $1 - \alpha$ (empirical) central region $B_{1 -\alpha}({\btheta}) = \{\*t: F_{{\mathcal V}|D}(D_{{\mathcal S}_{\theta}}(\*t)) \geq \alpha\}$ is a level $1 - \alpha$ Borel set on ${\mathcal T}$. Here, the empirical depth function $D_{{\mathcal S}_{\theta}}(\cdot)$ can be  computed, for example, by R package {\it ddalpha} \citep{pokotylo_depth_2019}.

\subsubsection{ Use of quantile regression to find $B_{1 -\alpha}(\btheta)$}
\label{sec:quantile}

When the target parameter(s) are discrete, we develop in Sections 3.2 and 4.2 an effective method to significantly reduce the search space of the discrete parameters. Here, we consider the case when the target parameters are continuous. We discuss how to utilize the continuity and quantile regression techniques to side-step the issues of (grid) search of the entire parameter space. 

 Without loss of generality, we first assume that our parameter of interest is $\btheta$ in a continuous parameter space $\Theta$ and we do not have any nuisance
 parameters. 
We further assume that 
$T(\*U, \btheta)$ is a mapping from 
${\mathcal U} \times \Theta \to {\mathcal T} \in \RR$,
and it is continuous in $\btheta$. In this case, we can use the following technique based on quantile regression to improve the computation efficiency of the repro samples method in Algorithm~\ref{alg:Ag}.  
\begin{itemize}
[leftmargin=1.5em, labelsep=0em, 
align=parleft,   itemsep=0pt,
  parsep=0pt,
  topsep=1pt] 
    \item 
Let $\pi(\btheta)$ be a proposal distribution that has no-zero probability over the range of $\Theta$. In addition to $|{\mathcal V}|$ copies of $\*u^s \sim \*U$, we  simulate $|{\mathcal V}|$ copies of $\btheta^s \sim \pi(\btheta)$. We then fit a level $1 - \alpha$ quantile regression model with $t^s = T(\*u^s, {\btheta}^s)$ as response and ${\btheta^s}$ as covariates. We output the conditional level $1 - \alpha$ quantile, say $\widehat C(1-\alpha|\theta))$,  
via regression of $t^s$ on $\btheta^s$. 
The Borel set $B_{1 -\alpha}(\btheta)$ for our repro samples method is either $(-\infty, \widehat C(1-\alpha|\theta))$ or $(\widehat C(\alpha/2|\theta), \widehat C(1 - \alpha/2|\theta))$. 
\end{itemize}
This method is similar to Algorithm 1 of \cite{dalmasso2022} in which the authors use quantile regression to get a level $1 - \alpha$ rejection region for a Neyman-Pearson test problem. The difference is that the $t^s$ in \cite{dalmasso2022} is computed from a test statistic while our $t^s$ is from a nuclear mapping function. Ours is simpler since we do not need to generate any data points $\*y^s= G(\btheta^s, \*u^s)$, especially in cases when $\*y^s$ is expensive to generate. 
\citet[][Theorems~1]{dalmasso2022} showed that the output conditional level $1 - \alpha$ quantile (say, $\widehat C(\alpha|\btheta))$ 
via regression of $t^s$ on $\btheta^s$ is a level $1 - \alpha$ critical value, if the quantile regression estimator is consistent 
\citep{koenker1994quantile,takeuchi2006nonparametric}, as $|{\mathcal V}| \to \infty$. Similarly, in our case,  if the quantile regression estimator is consistent, then $B_{1 -\alpha}(\btheta) = (- \infty, \widehat C(1-\alpha|\theta))$ or $B_{1 -\alpha}(\btheta) = (\widehat C(\alpha/2|\theta), \widehat C(1 - \alpha/2|\theta))$ satisfies (\ref{eq:B}).

Using the quantile regression method, 
we only need to calculate total $|{\mathcal V}|$ copies of $t^s = T(\*u^s, {\btheta^s})$, instead of computing $|{\mathcal V}|$ copies of $T(\*u^s, {\btheta})$ at each grid $\btheta$ value.
The method often significantly improves our computational efficiency.
Here, we require that $T(\*u, {\btheta})$ be a smooth function of ${\btheta}$ to ensure the consistency of quantile regression estimators. 
The method holds for any $\pi(\btheta)$, as long as the proposal distribution $\pi(\btheta) > 0$ for all $\btheta \in \Theta$.
Practically, we often pick a proposal distribution $\pi(\btheta)$ that is easy to simulate from and also that is dispersed over the range of $\Theta$ (so that there is not a region that $\btheta^s$ has a much smaller chance to reach than other regions in $\Theta$). For example, 
if $\Theta$ is a bounded set, $\pi(\btheta)$ can be a uniform distribution over $\Theta$. If $\Theta$ is unbounded, $\pi(\btheta)$ can be a Gaussian distribution with a large variance or a heavy-tailed distribution such as Laplace.

When $\*t^s = T(\*u^s, {\btheta}^s)$ is a vector, we instead fit a multivariate nonparametric quantile regression using  the recent work by  
\cite{delbarrio2022nonparametric}, in which the authors adopt the optimal transport concept to define multivariate center-outward quantiles and develop a quantile regression approach. 
Theoretically for the multivariate case, \citet[][Theorem 3.2]{delbarrio2022nonparametric} shows that the output level $1 - \alpha$
conditional (nested center-outward) quantile regression contour/region, say $\widehat C_{\pm}(1-\alpha|\theta)$,  is a  consistent reconstruction of their population version, when $|{\mathcal V}| \to \infty$ and under some regularity conditions. It follows that, when $|{\mathcal V}| \to \infty$ and by taking $B_n(\btheta) = \widehat C_{\pm}(1-\alpha|\theta))$, we have \eqref{eq:B}.

Finally, suppose  $\btheta = (\eta, {\*\beta}^\top)^\top$, in which $\eta$ is the parameter of interest and $\*\beta$ are nuisance parameters. If the true parameter value $\eta_0$ is inside a continuous space ${\it \Upsilon}$ and $T_p(\*u, \*\theta) = T_p(\*u, (\eta, {\*\beta}^\top)^\top)$ defined in Section 3.1 is continuous in $\eta$, we can use the same quantile regression techniques to reduce the computational burden of the profile nuclear mapping function in  \eqref{eq:T_p}. 
Specifically, suppose we have simulated $|{\mathcal V}|$ copies of $\*u^s \sim \*U$ and $|{\mathcal V}|$ copies of $\eta^s \sim \pi(\eta)$, where $\pi(\eta)$ is a proposal distribution that has non-zero probability over the range of ${\it \Upsilon}$. 
For any $\*\beta$, we can compute $t^s = T_p(\*u^s, (\eta^s, {\*\beta}^\top)^\top)$ and fit a quantile regression model with $t^s$ being the response and $\eta^s$ being the covariate. The output conditional level-$\alpha'$ quantile can give us a level-$\alpha'$ Borel set, based on which we can get $\Gamma_{\alpha'}(\*y)$. We can then define $\nu(\*u; \eta, \*\beta, \widetilde {\*\beta})$ for a given $(\*\beta, \widetilde {\*\beta})$ as in \eqref{eq:nu}
and use an optimization program to get rid of $\widetilde {\*\beta}$ and obtain a profile nuclear mapping function $T_p(\*u, \btheta)$ as defined in  \eqref{eq:T_p}. In the situation when $\eta$ is continuous, 
this method can often reduce the computing effort of the direct Monte-Carlo method in Lemma~\ref{lem:nuisance_depth}.

\subsubsection{Data-dependent candidate set}
\label{sec:data_cand} 

Another technique for reducing the computation cost 
is to reduce the search space of $\Theta$, as we discussed in Section~\ref{sec:candidate_general} for discrete or non-numerical parameters. In this subsection, we state a necessary condition for 
constructing a data-dependent candidate set (a subset of $\Theta$), say 
$\widehat \Theta(\*y_{obs})$, so that we can still maintain inference validity when searching through  $\widehat \Theta(\*y_{obs})$ instead of the entire $\Theta$. 
Here, $\widehat \Theta(\*y_{obs})$ is pre-screened using the observed data $\*y_{obs}$. 
While there can be different ways to obtain this pre-screened candidate set, a key question 
is to overcome the potential problem of ``double usage" of the observed data.
In particular, 
after constructing this candidate set $\widehat \Theta = \widehat \Theta(\*y_{obs})$, we obtain a level $1 - \alpha$ confidence set as
\begin{align}
\label{eq:G1_candidate}
& \Gamma_{1-\alpha}'(\*y_{obs})   = \widehat \Theta(\*y_{obs}) \cap \Gamma_{1-\alpha}(\*y_{obs}) \nonumber \\
& \qquad = \big\{\btheta:  \exists \, \*u^* \in {\mathcal U}, \btheta \in \widehat\Theta,  \, \hbox{s.t.} \, \*y_{obs} = 
G_z({\btheta}, \*u^*),
\, 
T(\*u^*, \btheta)  \in B_{1 - \alpha}(\btheta)  \big\}.
\end{align}
Here, the ``double usage" refers to the fact that we use the observed data in constructing both the candidate set $\widehat \Theta(\*y_{obs})$ and confidence set $\Gamma_{1-\alpha}(\*y_{obs})$.

To achieve our goal, a necessary condition that we impose on our data-dependent candidate set is
\begin{equation} \label{eq:cand-cond}
    \P\big\{ \btheta_0 \in \widehat \Theta(\*Y)\big\} = 1 - o(\delta'),
\end{equation}
where $\delta' > 0$ is a small number as described in Theorem~1. 
The corollary below 
suggests that  $\Gamma_{1-\alpha}'(\*y_{obs})$ in (\ref{eq:G1_candidate})
remains to be an (approximate) level $1 - \alpha$ confidence set. The result applies to both situations when $\theta$ is continuous or discrete (or non-numerical). This corollary (in fact a slight extension to accommodate possible nuisance parameters) has been implicitly used in the proofs of Sections 3.2, 4.2, and 4.3.

\begin{corollary}
\label{cor:cand_cs_coverage}
Let $\*Y= G(\*U, \btheta_0)$ and $\widehat\Theta = \widehat\Theta(\*Y)$ is a candidate set that satisfies (\ref{eq:cand-cond}).  
Then, we have 
$\P\big\{\btheta_0 \in \Gamma_{1-\alpha}'(\*Y)\big\} \geq 1 - \alpha - o(\delta').$
\end{corollary}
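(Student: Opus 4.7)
The plan is to reduce the corollary to Theorem~\ref{thm:1} (or Theorem~\ref{thm:2}) by a simple set-theoretic inclusion-exclusion step, using the fact that $\Gamma_{\alpha}'(\*Y)$ only excludes from $\Gamma_{\alpha}(\*Y)$ those $\btheta$'s outside the data-dependent candidate set $\widehat\Theta(\*Y)$. Since Theorem~\ref{thm:1} already gives us $\P\{\btheta_0 \in \Gamma_{\alpha}(\*Y)\} \geq \alpha$, the only potential loss in coverage comes from the event that $\btheta_0$ falls outside $\widehat\Theta(\*Y)$, which by hypothesis has probability $o(\delta')$.

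Concretely, the first step is to write
\begin{equation*}
\P\{\btheta_0 \in \Gamma_{\alpha}(\*Y)\} = \P\{\btheta_0 \in \Gamma_{\alpha}(\*Y),\, \btheta_0 \in \widehat\Theta(\*Y)\} + \P\{\btheta_0 \in \Gamma_{\alpha}(\*Y),\, \btheta_0 \notin \widehat\Theta(\*Y)\}.
\end{equation*}
By the definition of $\Gamma_{\alpha}'(\*Y) = \Gamma_{\alpha}(\*Y) \cap \widehat\Theta(\*Y)$, the first term on the right is exactly $\P\{\btheta_0 \in \Gamma_{\alpha}'(\*Y)\}$. The second term is bounded above by $\P\{\btheta_0 \notin \widehat\Theta(\*Y)\} = o(\delta')$ by hypothesis~\eqref{eq:cand-cond}. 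Rearranging and invoking Theorem~\ref{thm:1} then yields
\begin{equation*}
\P\{\btheta_0 \in \Gamma_{\alpha}'(\*Y)\} \geq \P\{\btheta_0 \in \Gamma_{\alpha}(\*Y)\} - o(\delta') \geq \alpha - o(\delta'),
\end{equation*}
which is the desired conclusion.

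There is no real obstacle here: the entire argument is a one-line probability bound, and the ``double usage'' concern mentioned in the preceding text is defused automatically by condition~\eqref{eq:cand-cond}, which marginalizes over $\*Y$ and therefore accounts for the fact that $\widehat\Theta$ depends on the same data used to form $\Gamma_{\alpha}$. The only thing worth flagging for the reader is why condition~\eqref{eq:cand-cond} is the right one: it is strong enough that the intersection with $\widehat\Theta$ costs us essentially nothing in coverage, yet weak enough that in the discrete/non-numerical settings of Section~\ref{sec:candidate_general} and Section~\ref{sec:mixture} we can verify it via a Monte-Carlo argument (as in Lemma~\ref{lemma:p_bound_C_D_general} and Theorem~\ref{thm:p_bound_C_D_general}). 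The same argument with a conditional probability $\P_{\*U|{\cal V}}$ in place of $\P$ gives the conditional versions used throughout the paper's proofs.
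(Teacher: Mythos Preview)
Your proof is correct and takes essentially the same approach as the paper: both decompose a probability according to whether $\btheta_0 \in \widehat\Theta(\*Y)$ or not, bound the ``bad'' piece by $\P\{\btheta_0 \notin \widehat\Theta(\*Y)\} = o(\delta')$, and rearrange. The only cosmetic difference is that the paper splits the event $\{T(\*U,\btheta_0)\in B_\alpha(\btheta_0)\}$ (the subset used in the proof of Theorem~\ref{thm:1}) rather than $\{\btheta_0\in\Gamma_\alpha(\*Y)\}$ directly; your version is slightly cleaner since the identity $\{\btheta_0\in\Gamma_\alpha(\*Y),\,\btheta_0\in\widehat\Theta(\*Y)\}=\{\btheta_0\in\Gamma_\alpha'(\*Y)\}$ holds exactly by definition, and you then invoke Theorem~\ref{thm:1} as a black box.
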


\begin{proof}[Proof of Corollary~\ref{cor:cand_cs_coverage}] 
By the fact that $\P\big\{T(\*U, \btheta_0) \in B_{1 - \alpha}(\btheta_0)\big\} \geq 1 - \alpha,$ we have 
\begin{align*}
   \P\big\{T(\*U, \btheta_0) \in B_{1 - \alpha}(\btheta_0)\big\} &  = \P\big\{T(\*U,\btheta_0) \in B_{1 - \alpha}(\btheta_0), \btheta_0 \in \widehat\Theta\big\} +  \P\big\{T(\*U,\btheta_0) \in B_{1 - \alpha}(\btheta_0), \btheta_0 \not\in \widehat\Theta\big\}  \\
   &  = \P\big\{\theta_0 \in \Gamma_{1-\alpha}'(\*Y)\big\} +  \P\big\{T(\*U,\btheta_0) \in B_{1 - \alpha}(\btheta_0), \btheta_0 \not\in \widehat\Theta\big\} \geq 1- \alpha.
\end{align*}
It then follows that
   $\P\big\{\theta_0 \in \Gamma_{1-\alpha}'(\*Y) \big\} \geq 1 - \alpha - \P\big\{T(\*U,\btheta_0) \in B_{1 - \alpha}(\btheta_0), \btheta_0 \not\in \widehat\Theta\big\} \geq 1 - \alpha - \P\big\{\theta_0 \not\in \widehat\Theta\big\}$
and thus Corollary~\ref{cor:cand_cs_coverage}.
\end{proof}

If the size of the candidate set $\widehat \Theta =  \widehat \Theta(\*y_{obs})$ is much smaller than that of $\Theta$, we can significantly lower the computing cost of Algorithm~\ref{alg:Ag} and related computations, by only considering $\btheta \in \widehat \Theta$. This reduction is particularly achievable when the parameter space is discrete. In Sections 3.2 and 4.2, we discuss an inherent ``many-to-one'' mapping in the repro samples method, based on which we can effectively obtain candidate sets that satisfy the condition (\ref{eq:cand-cond}).
These candidate sets can significantly reduce the search space for discrete parameters and also accommodate cases with additional nuisance parameters.

\section{Proofs, additional materials and examples in Section~\ref{sec:mixture}}\label{sec:appendix_mixture}
\subsection{Derivation of \eqref{eq:mix4} and proofs of theoretical results in Section~\ref{sec:mixture} }\label{sec:proof_sec4}

\subsubsection{Derivation of the confidence set in \eqref{eq:mix4}.} \label{sec:27derivation}

 Following Lemma~\ref{lemma:nuisance} and  \eqref{eq:CIeta1}, we have
\begin{align}
\Xi_{1-\alpha}(\*y_{obs})   & = \big\{\tau: \exists \*u^* \mbox{ and } (\*\mu_\tau, \*
    \sigma_\tau, \*M_\tau)   \mbox{ s.t. } \*y_{obs} =\*M_{\tau} \*\mu_\tau + \diag(\*M_\tau  {\*\sigma_\tau}) \*u^*,  T_p(\*u^*,\btheta)\leq 1-\alpha\big\} \nonumber \\
    & = \big\{\tau: \exists \*u^* \mbox{ and } (\*\mu_\tau, \*
    \sigma_\tau, \*M_\tau)   \mbox{ s.t. } \*y_{obs} =\*M_{\tau} \*\mu_\tau + \diag(\*M_\tau  {\*\sigma_\tau}) \*u^*, \widetilde T_p(\*y^*,\tau)\leq 1-\alpha, \nonumber \\
    & \hspace{2cm} \*y^* =\*M_{\tau} \*\mu_\tau + \diag(\*M_\tau  {\*\sigma_\tau}) \*u^*\big\} \nonumber \\
     &  = \resizebox{0.87\hsize}{!}{$ \big\{\tau: \exists \*u^* \mbox{ and } (\*\mu_\tau, \*
    \sigma_\tau, \*M_\tau)   \mbox{ s.t. } \*y_{obs} =\*M_{\tau} \*\mu_\tau + \diag(\*M_\tau  {\*\sigma_\tau}) \*u^*, \widetilde T_p(\*y_{obs},\tau)\leq 1-\alpha\big\}$} \nonumber  \\
    & =  \big\{\tau: \widetilde T_p(\*y_{obs},\tau)\leq 1-\alpha \big\}. \nonumber
\end{align}
The second equality follows from \eqref{eq:nuclear_final_mixed},  and the fourth equality 
holds since we can always find a $\*u^*$ that matches $\*y_{obs}$ for any $\btheta$ by setting $\*u^* = (\diag(\*M_\tau  {\*\sigma_\tau}))^{-1} (\*y_{obs} -\*M_{\tau} \*\mu_\tau).$
\hfill $\square$.

\subsubsection{Proofs of Corollary~\ref{cor:mix4_new} and Theorem~\ref{the: nuc_new_unif}}
\label{sec:B3-1}

\begin{proof}[Proof of Corollary~\ref{cor:mix4_new}]
By \eqref{eq:nuclear_final_mixed}, \eqref{eq:mix4} and that the condition  $  \P(\widetilde T(\*Y, (\tau, \*M_\tau)) \in B_{1 -\alpha}) = \P(T_p(\*U, \*\theta) \in B_{1 -\alpha}) \geq 1- \alpha$  holds for any $\btheta,$ we have
    \begin{align*}
          \P(\tau_0 \in \Xi_{1-\alpha}(\*Y))   \geq   \P(\widetilde T_p(\*Y,\tau_0)\leq 1-\alpha )  \geq   \P(\widetilde T(\*Y,(\tau_0, \*M_{0}))\leq 1-\alpha ) \geq 1- \alpha.  
    \end{align*}
\end{proof}

\begin{proof}[Proof of Theorem~\ref{the: nuc_new_unif}]
Let $\btheta = (\tau,{\*M_\tau}, \*\mu, \*\sigma)$ be a given set of parameters, $\*Y =\*M_{\tau} \*\mu + \diag(\*M_\tau  {\*\sigma}) \*U$, and $\*y$ be its realization with a realized $\*u$ of $\*U$. 
For the clarity of the proof, we slightly modified the notations here by rewriting $\A_{\*M_\tau}(\*y) = \A_\tau(\*y) $, $\B_{\*M_\tau}(\*y) = \B_\tau(\*y)$ and $\C_{\*M_\tau}(\*y) = \C_\tau(\*y)$.  

Because, when given $(\tau, {\*M}_\tau)$, the statistics  $(\A_{\*M_\tau}(\*Y), \B_{\*M_\tau}(\*Y))$ are minimal sufficient for $(\*\mu, \*\sigma)$ and $\C_{\*M_\tau}(\*Y) = \C_{\*M_\tau}(\*U) $ is ancillary, it follows from the Basu's Theorem that 
$$\{\A_{\*M_\tau}(\*Y),  \B_{\*M_\tau}(\*Y)\} \perp \C_{\*M_\tau}(\*Y) = \C_{\*M_\tau}(\*U).$$ Also, when given  $\{\A_{\*M_\tau}(\*Y),  \B_{\*M_\tau}(\*Y)\} = (\*a, \*b),$ we have 
$\*Y  
=\*M_\tau \*a+ \diag\{\*M_\tau \*b\}\C_{\*M_\tau}(\*Y) 
    = \*M_\tau \*a+ \diag\{\*M_\tau \*b\}\C_{\*M_\tau}(\*U).$
Therefore,
\begin{align}
  \left\{ \*Y|\A_{\*M_\tau}(\*Y)=\*a,\B_{\*M_\tau}(\*Y)=\*b \right\} 
   \sim \{\*M_\tau \*a+ \diag\{\*M_\tau \*b\}\C_{\*M_\tau}(\*U)\}, \nonumber
\end{align}
which does not involve $(\*\mu, \*\sigma).$ 
It then follows from the above that for any fixed $\*y$
\begin{align}
\label{eq:tilde_y_conditional}
   &  \*Y \big|\A_{\*M_\tau}(\*Y)=\A_{\*M_\tau}(\*y),\B_{\*M_\tau}(\*Y)=\B_{\*M_\tau}(\*y)  \nonumber \\
  & \sim \*M_\tau \A_{\*M_\tau}(\*y)+ \diag\{\*M_\tau \B_{\*M_\tau}(\*y)\}\C_{\*M_\tau}(\*U) \nonumber \\
    & \sim \*M_\tau \A_{\*M_\tau}(\*y)+ \diag\{\*M_\tau \B_{\*M_\tau}(\*y)\}\C_{\*M_\tau}({\*U'})= {\*Y}'.
\end{align}
That is, conditional on $\big\{\A_{\*M_\tau}(\*Y),  \B_{\*M_\tau}(\*Y)\big\} =\big\{\A_{\*M_\tau}(\*y), \B_{\*M_\tau}(\*y)\big\}$, we have 
$${\*Y}' \big| \{\A_{\tau}(\*Y'), \B_{\tau}(\*Y')\}=\{\A_{\tau}(\*y),  \B_{\tau}(\*y)\} \sim \*Y \big| \{\A_{\tau}(\*Y) \B_{\tau}(\*Y)\}=\{\A_{\tau}(\*y),  \B_{\tau}(\*y)\}.$$

Moreover, by definition
\begin{align*}
    \widetilde T(\*y, (\tau, \*M_\tau)) =  \sum_{\{\bar \tau:   P_{{\*Y}'}(\bar \tau) >   P_{{\*Y}'}(\hat \tau(\*y)) \}}  P_{{\*Y}'}(\bar \tau) \leq 1-\alpha,
\end{align*}
if and only if 
\begin{align}
\label{eq:nuclear_set_mixed}
\hat\tau(\*y) \in   \mathcal T^{[1-\alpha]},
\end{align}
where the model set $
    \mathcal T^{[1-\alpha]} = \left\{\tau^*: \sum_{\{\bar \tau:   P_{{\*Y}'}(\bar \tau) >   P_{{\*Y}'}(\tau^*) \}}  P_{{\*Y}'}(\bar \tau)\leq 1-\alpha\right\}$ only depends $(\tau, \*M_\tau)$ and the values of  
$\{\A_{\tau}(\*y),  \B_{\tau}(\*y)\}$.  We can rewrite 
 $$
 \mathcal T^{[1-\alpha]} = 
\{\bar \tau:    P_{{\*Y}'}(\bar \tau) \ge   P_{{\*Y}'}(\tau_{1-\alpha}^*) \}, $$
where
$$\tau^*_{1-\alpha} = \arg \min_{\{\tau^* \in \mathcal T^{[1-\alpha]}\}} P_{{\*Y}'}(\tau^*).$$ 
Here, $\tau^*_{1-\alpha}$ is the ``cut-off" that separates  
$\mathcal T^{[1-\alpha]} = 
\{\bar \tau:    P_{{\*Y}'}(\bar \tau) \ge   P_{{\*Y}'}(\tau_{1-\alpha}^*) \}$ and its complement set $({\mathcal T^{[1-\alpha]}})^c = 
\{\bar \tau:    P_{{\*Y}'}(\bar \tau) <   P_{{\*Y}'}(\tau_{1-\alpha}^*) \}$.

Then, it follows from  \eqref{eq:tilde_y_conditional} that
\begin{align}
\label{eq:cond_bound_nuc}
& \P_{\*Y|\A_{\*M_\tau}(\*Y), \B_{\*M_\tau}(\*Y)}\left(  \hat\tau(\*Y) \in  \mathcal T^{[1-\alpha]} \middle | \big\{\A_{\*M_\tau}(\*Y),  \B_{\*M_\tau}(\*Y)\big\} =\big\{\A_{\*M_\tau}(\*y), \B_{\*M_\tau}(\*y)\big\} \right)   \nonumber \\
& =   \P_{{\*Y}'}\left(  \hat\tau({\*Y}') \in \mathcal T^{[1-\alpha]} \right)  = \sum_{\bar \tau \in \mathcal T^{[1-\alpha]}} P_{{\*Y}'}(\bar \tau) \geq 1-\alpha.
\end{align}
The last inequality of \eqref{eq:cond_bound_nuc} holds,  
 because 
$\sum_{\tau^* \in \mathcal T^{[1-\alpha]}} P_{{\*Y}'}(\tau^*) 
=
\sum_{\{\tau^*:   P_{{\*Y}'}(\tau^*) \geq   P_{{\*Y}'}(\tau^*_{1-\alpha}) \}}  P_{{\*Y}'}(\tau^*)$ is the cumulative probability masses of all models whose probability mass $P_{{\*Y}'}(\tau^*)$ is great than  or equal to
$P_{{\*Y}'}(\tau^*_{1-\alpha})$. This cumulative mass is greater than or equal to $\alpha$.

Finally, following from \eqref{eq:nuclear_set_mixed} and since \eqref{eq:cond_bound_nuc} holds for any $\{\A_{\*M_\tau}(\*y), \B_{\*M_\tau}(\*y)\}$, we have
{
\begin{align*}
     &   \P(T_p(\*U, \*\theta) \leq 1-\alpha) = \P(\widetilde T(\*Y, (\tau, \*M_\tau)) \leq 1-\alpha)  =   \P\left( \hat \tau(\*Y) \in \mathcal T^{[1-\alpha]}\right)\\
     & = E \left[  \P_{\*Y|\A_{\*M_\tau}(\*Y), \B_{\*M_\tau}(\*Y)}\left(  \hat\tau(\*Y) \in \mathcal T^{[1-\alpha]} \middle | \A_{\*M_\tau}(\*Y),\B_{\*M_\tau}(\*Y)\right)\right] \geq 1- \alpha.
\end{align*}
}
\end{proof}

\subsubsection{Additional technical lemmas and proofs of Theorem~\ref{thm:bound_of_C_d_mixture} and Theorem~\ref{the:upper_bound}} \label{sec:B3-2}

To prove Theorem~\ref{thm:bound_of_C_d_mixture}, we first prove four additional technical lemmas. 
The proof of Theorem~\ref{thm:bound_of_C_d_mixture} is at the end of this subsection after these four Lemmas.

Specifically, Lemma~\ref{lem::angle} and \ref{lemma:u_d_sufficent} quantify the probabilities bound of the (angle) distance between   $\*U^*$ and 
$$
\*W = \diag(\*M_0\*\sigma_0) \*U,
$$
and the (angle) distance between $(\*I - \*H_\tau)\*U^* $ and $(I-\*H_{\tau})\*M_0\*\mu_0$. The corresponding realized version of $\*W$ is $\*w^{rel} = \diag(\*M_0\*\sigma_0) \*u^{rel}.$
Here,  
the (angle) distance of two vectors is measured by the square of the cosine of the angle, $$\rho(\*v_1, \*v_2) = \frac{(\*v^\top_1\*v_2)^2}{\|\*v_1\|^2\|\*v_2\|^2}, \quad \hbox{for any two vectors $\*v_1$ and $\*v_2$,}$$  
and we denote  $\*H_{\*S} = \*S (\*S^\top \*S)^{-1} \*S^\top$ as the projection matrix of $\*S$, for any matrix $\*S$ throughout this subsection.   Also, we write $\*H_\tau = \*H_{\*M_\tau}$ for simplicity
and it is the projection matrix of $\*M_\tau$. Finally, we denote 
$$
g_{\*M_\tau}(\*U)= \sqrt{1 - \rho(\*U, \*H_\tau\*U)} = \frac{\|(I-\*H_{\tau})\*U\|}{\|\*U\|}
$$ 
to be the sine distance between $\*U$ and the linear space spanned by $\*M_\tau.$ Similarly, $$
g_{\*M_\tau}(\*U^*) 
= \frac{\|(I-\*H_{\tau})\*U^*\|}{\|\*U^*\|}
\quad \hbox{and}  \quad
g_{\*M_\tau}(\*W)
= 
\frac{\|(I-\*H_{\tau})\*W\|}{\|\*W\|}.
$$  
We calculate these angle distances because
we can recover the truth $\*M_0$ with $\*M^*$ in \eqref{eq:modefied_BIC} when $\*U^*$ is in a neighborhood of $\*w^{rel}$ provided $(\*I - \*H_\tau)\*U^*$ is not too close to 
$(I-\*H_{\tau})\*M_0\*\mu_0$ in terms of the angle distance. 

Lemma \ref{lem:bound_single_ustar} concerns about multiple copies of $\*U^*$'s. It uses the calculations in  Lemma~\ref{lem::angle} and \ref{lemma:u_d_sufficent} (on a single $\*U^*$) and a bridging result in Lemma~\ref{lem: upper_coniditonal}
to derive a finite-sample probability bound that the candidate set $\widehat{\it \Upsilon}_{{\mathcal V}}(\*Y)$ of  \eqref{eq:candidate_M} covers $(\tau_0, \*M_0)$.

\begin{lemma}
\label{lem::angle}  Suppose $\tau < n$. Let $\*W= \diag(\*U)\*M_0 \*\sigma_0  = \diag(\*M_0 \*\sigma_0) \*U.$ Then, for any $0 \leq \gamma \leq 1,$ 
\begin{align}
\label{eq:angle_U_Ustar}
      \P_{(\*U, \*U^*)}\{\rho(\*W, \*U^*) > 1-\gamma^2\} > \frac{\gamma^{n-2}\arcsin (\gamma)}{n-1}. 
\end{align}
Moreover $\rho((\*I - \*H_\tau)\*W, (\*I - \*H_\tau)\*M_0\*\mu_0)$ and $\rho(\*W, \*U^*)$ are independent, $\rho(\*W, \*U^*)$ and $\*W$ are independent, and $\|\*U\|, \*W/\|\*W\|$ and $\*U^*$ are mutually independent. 
\end{lemma}

\begin{proof}[Proof of Lemma~\ref{lem::angle}]

To prove \eqref{eq:angle_U_Ustar}, we first derive the conditional distribution of $\rho(\*W, \*U^*)$,  given $\*W =w$,  
\begin{align}\label{eq:u_epsi}
    & \P_{\*U^*|\*W}\left\{{\|\*W^T \*U^*\|}\big/{(\|\*W\|\| \*U^*\|)} > \sqrt{1 - \gamma^2} \bigg|  \*W=\*w \right\}\nonumber\\
    & \qquad =  \P_{ \psi}\left\{  |\sin(\psi)|  < \gamma  \right\} \nonumber \\
    & \qquad = \frac{2}{c_1}\int_0^{\arcsin (\gamma)} \sin^{n-2}(s) ds  \nonumber \\
    & \qquad >  \frac{2}{c_1}\int_0^{\arcsin (\gamma)} (\frac{s \gamma }{\arcsin \gamma})^{n-2} ds>  \frac{\gamma^{n-2}\arcsin (\gamma)}{n-1},
\end{align}
where the first inequality follows from the fact that $\sin(s)$ is a concave function. Here,  $\psi = \arccos{\sqrt{ \rho(\*w, \*U^*)}}$ is the (positive) angle between $\*U^*$ and $\*w$, whose density function is $\sin^{n-2}(\psi)/c_1$, with a normalizing constant $c_1 = \int_0^{\pi} \sin^{n-2} (\psi) d\psi = 2 \int_0^{\frac \pi 2} \sin^{n-2} (\psi) d\psi \leq 2 \int_0^{\frac \pi 2} \sin(\psi) d\psi = 2 $. This density function is derived using a spherical transformation on $\*U^*$ in $\RR^n$ space, with $\psi$ being the first angular coordinate and a Jacobian equal to $r^{n-1} \sin^{n-2}(\psi) \prod_{ d=2}^{n-1} \sin^{n-d-1}(\psi_d)$, where $r$ is the radius and $\psi_2, \dots, \psi_{n-2}$ are the second to $(n-2)$th angular coordinates. 
Also, $sin(s) < \frac{s \gamma }{\arcsin \gamma}$ for $s \in (0, \arcsin \gamma)$ and a small $\gamma >0$.

Note that \eqref{eq:u_epsi} does not involve $\*U^*$ and $\*w$. We have
\begin{align*}
    \P_{( \*U^*, \*W)}\big\{ \rho(\*W,, \*U^*) > 1- \gamma^2 \big\} & =  \E_{\*W}\left[   \P_{\*U^*|\*W}\left\{{\|\*W^T \*U^*\|}\big/{(\|\*W\|\| \*U^*\|)} > \sqrt{1 - \gamma^2} \bigg| \*W^*\right\}\right]
  \\ & > \frac{\gamma^{n-2}\arcsin \gamma}{n-1}, 
\end{align*} 
Since $\P_{( \*U^*, \*W)}(\cdot)$ is the same as $\P_{( \*U^*, \*U)}(\cdot),$
the first result of the lemma holds.

Furthermore, from the second equation of \eqref{eq:u_epsi}, 
we see that the conditional distribution of $\rho(\*W, \*U^*)$, given $\*W  = \*w$, does not involve $\*w.$ Thus, $\rho(\*W, \*U^*)$ and $\*W$ (and thus $\*U$) are independent.
Hence, $\rho(\*U^*,\*W)$ and   $\rho((\*I - \*H_\tau)\*W, (\*I - \*H_\tau)\*M_0\*\mu_0)$ are also independent.

Finally, by the aforementioned spherical transformation, $\|\*U\|$ is independent with its direction $\*U/\|\*U\|.$ It then follows that  $\|\*U\|,$ $\*U/\|\*U\|$ and $\*U^*$ are mutually independent, since $\*U$ and $\*U^*$ are independent. Therefore because $$\frac{\*W}{\|\*W\|} = \frac{\diag(\*M_0 \*\sigma_0)\*U/\|\*U\|}{\left\|\diag(\*M_0 \*\sigma_0)\*U/\|\*U\|\right\|},$$
is a function of $\*U/\|\*U\|,$ $\|\*U\|, \*W/\|\*W\|$ and $\*U^*$ are mutually independent.   
\end{proof} 

\begin{lemma}\label{lemma:u_d_sufficent} For any $0< \gamma, \gamma_1 \leq 1,$  denote $\widetilde\gamma_1=(1-\sqrt{\gamma})\gamma_1 - \sqrt{2-2\sqrt{1-\gamma^2}}.$ 
 If $g_{\*M_\tau}(\*W) > \sqrt{\gamma}, \rho((\*I - \*H_\tau)\*W, (\*I - \*H_\tau)\*M_0\*\mu_0)< \widetilde \gamma_1^2$ and $\rho(\*W, \*U^*) > 1-\gamma^2,$ then
\begin{align*}
    \rho((\*I - \*H_\tau)\*U^*, (\*I - \*H_\tau)\*M_0\*\mu_0)<  \gamma_1^2.
\end{align*}
 
\end{lemma}

\begin{proof}[Proof of Lemma~\ref{lemma:u_d_sufficent}]
 Since $g_{\*M_\tau}(\*W) > \sqrt{\gamma}, \rho((\*I - \*H_\tau)\*W, (\*I - \*H_\tau)\*M_0\*\mu_0)< \widetilde \gamma_1^2$ and $\rho(\*W, \*U^*) > 1-\gamma^2,$ it follows that
\begin{align*}
     & \frac{1}{\|(\*I - \*H_\tau){\*U^*}\|}{\*U^*}^{T}(\*I - \*H_\tau)\*M_0\*\mu_0\\ 
     & =   \frac{1}{\|\*W\|} \*W^T(\*I - \*H_\tau)\*M_0\*\mu_0 + \left(\frac{{\*U^*}^T}{\|{\*U^*}\|}-\frac{\*W^T}{\|\*W\|}\right)(\*I - \*H_\tau)\*M_0\*\mu_0   \\& \qquad \qquad  +  \left(\frac{1}{\|(\*I - \*H_\tau){\*U^*}\|}-\frac{1}{\|{\*U^*}\|}\right){\*U^*}^T(\*I - \*H_\tau)\*M_0\*\mu_0\\
     & \leq \frac{\|(\*I - \*H_\tau)\*W\|}{\|\*W\|}\frac{1}{\|(\*I - \*H_\tau)\*W\|}\*W^T(\*I - \*H_\tau)\*M_0\*\mu_0   +  \left\|\frac{{\*U^*}^T}{\|{\*U^*}\|}-\frac{\*W^T}{\|\*W\|}\right\|\|(\*I - \*H_\tau)\*M_0\*\mu_0\| \\
     &  \hspace{0.5cm} + \frac{\|{\*U^*}\|-\|(\*I - \*H_\tau){\*U^*}\|}{\|{\*U^*}\|}\frac{1}{\|(\*I - \*H_\tau){\*U^*}\|}{\*U^*}^T(\*I - \*H_\tau)\*M_0\*\mu_0 \\
     &  \leq g_{\*M_\tau}(\*W)\widetilde\gamma_1\|(\*I - \*H_\tau)\*M_0\*\mu_0 \|  +  \sqrt{2-2\frac{{\*U^*}^\top \*W}{\|{\*U^*}^\top\| \|\*W\|}}\|(\*I - \*H_\tau)\*M_0\*\mu_0\| \\ 
     & \hspace{2cm} +(1-g_{\*M_\tau}({\*U^*}))\frac{1}{\|(\*I - \*H_\tau){\*U^*}\|}{\*U^*}^T(\*I - \*H_\tau)\*M_0\*\mu_0\\
     &   \leq g_{\*M_\tau}(\*W) \widetilde\gamma_1\|(\*I - \*H_\tau)\*M_0\*\mu_0 \|  +  \sqrt{2-2\sqrt{1-\gamma^2}}\|(\*I - \*H_\tau)\*M_0\*\mu_0\|\\
     & \hspace{2cm}+ (1-g_{\*M_\tau}({\*U^*}))\frac{1}{\|(\*I - \*H_\tau){\*U^*}\|}{\*U^*}^T(\*I - \*H_\tau)\*M_0\*\mu_0.
\end{align*}
Therefore, after a re-arrangement, we have 
\begin{align}
\label{eq:ABC}
    &  \frac{1}{\|(\*I - \*H_\tau){\*U^*}\|}{\*U^*}^T(\*I - \*H_\tau)\*M_0\*\mu_0 \nonumber
    \\     &  \qquad 
    \leq  \left\{\frac{g_{\*M_\tau}(\*W)}{g_{\*M_\tau}({\*U^*})} \widetilde\gamma_1
    + \frac{1}{g_{\*M_\tau}({\*U^*})} \sqrt{2-2\sqrt{1-\gamma^2}}\right\}\|(\*I - \*H_\tau)\*M_0\*\mu_0\|.
\end{align}
Furthermore, 
since $\|(\*I - \*H_\tau)\*W\|  \leq \|(\*I - \*H_\tau   \*H_{\*U^*}){\*W}\|$ and $\|\*H_{\*U^*}{\*W}\| \leq \|\*W\|$,
we have
\begin{align*}
    g_{\*M_\tau}({\*W})  & = \frac{\|(\*I - \*H_\tau){\*W}\|}{\|\*W\|} \leq \frac{\|(\*I - \*H_\tau   \*H_{\*U^*}){\*W}\||}{\|\*W\|}   
    \\ & \leq \frac{\|(\*I -  \*H_{\*U^*}){\*W}\|}{\|{\*W}\|} + \frac{\|(  \*H_{\*U^*}-\*H_{\tau}   \*H_{\*U^*}){\*W}\|}{\|{\*W}\|} \\
    & \leq \gamma + \frac{\|(\*I - \*H_\tau)  \*H_{\*U^*}{\*W}\|}{\|  \*H_{\*U^*}{\*W}\|} { =}  \gamma +  \frac{\|a(\*I - \*H_\tau)  {\*U^*}\|}{\|  a{\*U^*}\|} = \gamma + g_{\*M_\tau}(\*U^*),
\end{align*}
where $\*H_{\*U^*}\*W$ is a projection on the space expanded by $\*U^*$ and thus $\*H_{\*U^*}\*W = a \*U^*$ for some scalar $a$ (and $a \not = 0$ with probability $1$). 
Then 
\begin{align}
\label{eq:ABC1}
\frac{g_{\*M_\tau}({\*U^*})}{g_{\*M_\tau}(\*W)}\geq 1 - \frac{\gamma}{g_{\*M_\tau}(\*W)}.     
\end{align}
It follows from (\ref{eq:ABC}) that a sufficient condition for $$\frac{1}{\|(\*I - \*H_\tau){\*U^*}\|}{\*U^*}^T(\*I - \*H_\tau)\*M_0\*\mu_0 \leq \gamma_1\|(\*I - \*H_\tau)\*M_0\*\mu_0\|,$$ is 
$\left\{\frac{g_{\*M_\tau}(\*W)}{g_{\*M_\tau}({\*U^*})} \widetilde\gamma_1
    + \frac{1}{g_{\*M_\tau}({\*U^*})} \sqrt{2-2\sqrt{1-\gamma^2}}\right\} \leq \gamma_1$
, which can be implied by 
\begin{align*}
\widetilde\gamma_1\leq \left(1 - \frac{\gamma}{g_{\*M_\tau}(\*W)}\right) \gamma_1 - \sqrt{2-2\sqrt{1-\gamma^2}},   
\end{align*}
since we have 
(\ref{eq:ABC1}) and $g_{\*M_\tau}(\*W)  > \sqrt{\gamma}.$ Here, 
$\widetilde\gamma_1=(1-\sqrt{\gamma})\gamma_1 - \sqrt{2-2\sqrt{1-\gamma^2}}$. 

\end{proof}

\begin{lemma}
\label{lem: upper_coniditonal}
    Consider two events $\mathcal A$ and $\mathcal B$, where $\mathcal B = \bigcup_{1 \leq d \leq |{\mathcal V}|} B_d$ is a union of a sequence of $|{\mathcal V}|$ events. 
    We define the events $D_1 = B_1$ and $D_d = B_d \bigcap\big(\bigcup_{1\leq j \leq d-1} B_j\big)^c
    $, for $d = 2, \ldots, |{\mathcal V}|.$
    Suppose there exists an upper bound for conditional probability $\P(\mathcal A | D_d) \leq \bar p$ for any $1 \leq d \leq |{\mathcal V}|$. Then, the joint probability $\P(\mathcal A \cap \mathcal B) \leq \bar p.$ 
\end{lemma}

\begin{proof}
By the definition, $D_d$'s are disjoint events and
${\mathcal B} = \bigcup_{1 \leq d \leq |\mathcal V|} D_d$. Thus,  $\{D_d:  d= 1, \dots, |{\mathcal V|}\}$ is a partition of $\mathcal B$. It follows that 
\begin{align*}
\P(\mathcal A \cap \mathcal B) = \sum_{d =1}^{|{\mathcal V}|} \P(\mathcal A | D_d) \P(D_d) \leq \bar p \sum_{d = 1}^{|{\mathcal V|}}  \P(D_d)  = \bar p \, \P({\mathcal B}) \leq \bar p.    
\end{align*}

\end{proof}

\begin{lemma}
\label{lem:bound_single_ustar}
Suppose $n-\tau_0>4$. For any positive small $0 < \gamma < (\frac{1}{2.1})^4$ such that 
$\gamma^{\frac 32} < \frac{\log\{0.5C^2_{\min}\gamma + 1\}}{2 \tau_0 } $ , and $\lambda \in [\frac{\gamma^{\frac32}}{\log(n)/n},\frac{\log\{0.5C^2_{\min}\gamma + 1\}}{2 \tau_0 \log(n)/n}],$ we have 
\begin{align}
\label{eq:prob_bound_C_D_simple-1}
      &  \P_{\*U,\mathcal V  }\left\{(\tau_0, \*M_0) \not\in \widehat{\it \Upsilon}_{{\mathcal V}}(\*Y)\right\} 
       \leq \sum^{\tau_{\max}}_{\tau=\tau_0+1}
    \exp\left\{-\frac{2(\tau-\tau_0)}{3\sigma^2_{\max}\sqrt{\gamma}}    + 0.25n +  \tau\log(n) \right\} \nonumber \\
    & \quad \qquad  +  \sum_{\tau=1}^{\tau_0} \exp\left\{-\frac{1}{75\sigma^2_{max}\gamma} C_{\min}\ + 0.25n + \tau\log(n)\right\}
     + p_\gamma + \left(1-\frac{\gamma^{n-1}}{n-1}\right)^{|{\mathcal V}|},
\end{align}
where $\tau_{\max} \geq \tau_0$ is any upper bound for $\tau_0,$ and $p_\gamma \rightarrow 0$ as $\gamma \rightarrow 0.$ 
\end{lemma}

\begin{proof}[Proof of Lemma~\ref{lem:bound_single_ustar}]

We can re-express 
\eqref{eq:modefied_BIC} by 
\begin{align*}
(\tau^*, \*M^*_{\tau^*}) = \arg\min\limits_{(\tau, \*M_\tau)}   \left\{n\log \bigg(\frac{\|(\*I-\*H_{\*M_\tau, \*u^*})\*y_{obs}\|^2+1}{n}\bigg) + 2\lambda\tau\log(n)\right\}.
\end{align*}
where $\*H_{\*M_\tau, \*u^*}$ is a projection matrix to the space expanded by  $(\*M_\tau, \*u^*)$. In the following, we let $\*M^* = \*M^*_{\tau^*}$ for simplicity.  
By \eqref{eq:modefied_BIC},  if $\*M^* = \*M_{\tau},$ then
\begin{align*}
    \|(\*I - \*H_{\*M_{\tau}, \*U^*}) \*y_{obs}\|^2 + 1  - e^{2\lambda(\tau_0-\tau)\log(n)/n} \{\|(\*I - \*H_{\*M_0, \*U^*}) \*y_{obs}\|^2 +1\} \leq 0.
\end{align*} 
Correspondingly, for the random sample version with $\*Y = {\*M_0} {\*\mu_0} + \diag(\*M_0 {\*\sigma_0}) \*U = {\*M_0} {\*\mu_0} + \*W$, it  
is then
\begin{align}
\label{eq:A3_inequality1}
    \|(\*I - \*H_{\*M_{\tau}, \*U^*}) \*Y\|^2 + 1  - e^{2\lambda(\tau_0-\tau)\log(n)/n} \{\|(\*I - \*H_{\*M_0, \*U^*}) \*W\|^2 +1\} \leq 0.
\end{align} 
which implies 
\begin{align}
\label{eq:A3_inequality1a}
     1  - e^{2\lambda(\tau_0-\tau)\log(n)/n} \{\|(\*I - \*H_{\*M_0, \*U^*}) \*W\|^2+1\} \leq 0.
\end{align} 

First, let's consider the case for $\tau > \tau_0.$
Denote by a set 
\begin{align*}
 B^*  =  \bigg\{\rho(\*W, \*U^*) > 1-\gamma^2, \rho((\*I - \*H_\tau)\*U^*, (\*I - \*H_\tau)\*M_0\*\mu_0)<  \gamma_1^2\bigg\}. 
\end{align*}
Also, denote $\P_{(\*U, \*U^*|B^*)}$ as the conditional probability corresponding to the conditional distribution of $(\*U, \*U^*)$ given $B^*$ happens.
  Then, when $\lambda \geq \frac{\gamma^{\frac32}}{\log(n)/n},$ and by \eqref{eq:A3_inequality1a} and Lemma~\ref{lemma:u_d_sufficent}, the conditional probability
\begin{align*}
     & \P_{(\*U, \*U^*|B^*)}\bigg\{{\*M}^* = \*M_\tau \bigg | B^*\bigg\} \\
    & \leq \P_{(\*U, \*U^*|B^*)}\bigg\{ - e^{2\lambda(\tau_0-\tau)\log(n)/n} \|(\*I-\*H_{\*M_0, \*U^*})\*W\|^2+1 -e^{2\lambda(\tau_0-\tau)\log(n)/n} < 0 \bigg| B^*\bigg\}\\
    & \leq \P_{(\*U, \*U^*|B^*)}\bigg\{ - e^{2\lambda(\tau_0-\tau)\log(n)/n} \|(\*I-\*H_{\*U^*})\*W\|^2+1 -e^{2\lambda(\tau_0-\tau)\log(n)/n} < 0 \bigg |  B^*\bigg\} \\
    & \leq \P_{(\*U, \*U^*|B^*)}\bigg\{ - e^{2\lambda(\tau_0-\tau)\log(n)/n}\gamma^2\|\*W\|^2 +1 -e^{2\lambda(\tau_0-\tau)\log(n)/n} < 0 \bigg | B^*
    \bigg\}\\
    & = 
    \P_{(\*U, \*U^*|B^*)}\bigg\{ \|\*W\|^2 > \frac{1 -e^{2\lambda(\tau_0-\tau)\log(n)/n}}{ e^{2\lambda(\tau_0-\tau)\log(n)/n}\gamma^2} \bigg |B^*\bigg\}\\
    & \leq \P_{\*U|B^*}\bigg\{ \sigma^2_{\max}\|\*U\|^2 > \frac{1 -e^{2\lambda(\tau_0-\tau)\log(n)/n}}{ e^{2\lambda(\tau_0-\tau)\log(n)/n}\gamma^2} \bigg | B^*\bigg\}
  \\
    & \leq \P_{\*U}\bigg\{ \sigma^2_{\max}\|\*U\|^2 > \frac{1 -e^{2\lambda(\tau_0-\tau)\log(n)/n}}{ e^{2\lambda(\tau_0-\tau)\log(n)/n}\gamma^2} \bigg\}\\
    & \leq \P_{\*U}\bigg\{ \sigma^2_{\max}\|\*U\|^2 > \frac{ -2\lambda(\tau_0-\tau)\log(n)/n}{ \gamma^2} \bigg\}\\
    & \leq \P_{\*U}\bigg\{ \sigma^2_{\max}\|\*U\|^2 > \frac{ -2(\tau_0-\tau)}{ \sqrt{\gamma}} \bigg\} =  \P_{\*U}\left\{\|\*U\|^2  >  C^I_{\gamma, \tau}/\sigma^2_{\max}\right\}, \numberthis \label{eq:CI_gamma}
 \end{align*}
 where $C^I_{\gamma, \tau} = \frac{ -2(\tau_0-\tau)}{ \sqrt{\gamma}}$, and  
 the second-to-last inequality holds because $\frac{1-x}{x}\geq  -\log(x)$ for any $0 < x \leq 1.$ Also, the 
 drop of conditional distribution is because $\*U/\|\*U\|,$ $\|\*U\|, \*W/\|\*W\|$ and $\*U^*$ are mutually independent by Lemma~\ref{lem::angle}.

Because $\|\*U\|^2$ follows a  chi-square distribution, then for any $c>0$, we can bound the probability
\begin{align}
\label{eq:markov_ineuqality}
      \P_{\*U}(\|\*U\|^2 > c)  \leq E \exp\{t\|\*U\|^2 -tc\} 
    = \left(\frac{1}{1-2t}\right)^{n/2}e^{-tc}.
\end{align}
Make $t = 1/3,$ then for any $\*M_\tau, \tau > \tau_0$, the conditional probability
\begin{align}
\label{eq:bound_tau_large}
      \P_{(\*U, \*U^*|B^*)}\bigg\{{\*M}^* = \*M_\tau \bigg | B^* \bigg\} \leq \exp\left\{-\frac{1}{3\sigma^2_{\max}} C^I_{\gamma, \tau} + 0.25n \right\}.
\end{align}

Now let us consider the case of $\tau \leq \tau_0,$  
 It follows from \eqref{eq:A3_inequality1} that
\begin{align}
\label{eq:A3_inequality2}
    &   \P({\*M}^*  = \*M_\tau)
    \leq  \P\big\{ \|(\*I-\*H_{\*M_{\tau}, \*U^*})\*M_0 \*\mu_0\|^2 +  2 \*W^T (\*I-\*H_{\*M_{\tau}, \*U^*})\*M_0 \*\mu_0  \nonumber \\
    & \qquad\qquad + \|(\*I-\*H_{\*M_{\tau}, \*U^*})\*W\|^2 - e^{2\lambda(\tau_0-\tau)\log(n)/n} \|(\*I-\*H_{\*M_0, \*U^*})\*W\|^2 +1 \\
    & \qquad\qquad -e^{2\lambda(\tau_0-\tau)\log(n)/n} < 0\big\}. \nonumber
\end{align}
Then for any $\*M_\tau \neq \*M_0$ and $ \tau \leq \tau_0$,  
conditional on the event $B^*,$ when 
\begin{align}
\label{eq:lambda_upper}
    \lambda \leq \frac{\log\{C^2_{\min}(1- \gamma^2_1)/2 + 1\}}{2 \tau_0 \log(n)/n},
\end{align}
it follows from  
\eqref{eq:A3_inequality2} that,     
\begin{align*}
    &   \P_{(\*U, \*U^*|B^*)}\left({\*M}^*  = \*M_\tau \middle| B^*\right) \\
    & \leq \P_{(\*U, \*U^*|B^*)}\bigg\{\|(\*I-\*H_{\*M_\tau, \*U^*})\*M_0 \*\mu_0\|^2  + 2 \*W^T (\*I-\*H_{\*M_\tau, \*U^*})\*M_0 \*\mu_0 +1 -e^{2\lambda(\tau_0-\tau)\log(n)/n}  \\
    &   \quad  \quad \quad   - e^{2\lambda(\tau_0-\tau)\log(n)/n} \|(\*I-\*H_{\*M_0, \*U^*})\*W\|^2  < 0   \bigg |  B^*\bigg\}\\
     & \leq \P_{(\*U, \*U^*|B^*)}\bigg\{(1 - \gamma^2_1)\|(\*I-\*H_{\*M_\tau})\*M_0 \*\mu_0\|^2  - 2 \|(\*I-\*H_{\*M_\tau, \*U^*})\*W\| \|(\*I-\*H_{\*M_\tau, \*U^*})\*M_0 \*\mu_0\|  \\ 
    & \quad  \qquad  \qquad \qquad   +1 -e^{2\lambda(\tau_0-\tau)\log(n)/n} - e^{2\lambda(\tau_0-\tau)\log(n)/n} \|(\*I-\*H_{\*M_0, \*U^*})\*W\|^2  < 0 \bigg|  B^*\bigg\}\\
           & \leq \P_{(\*U, \*U^*|B^*)}\bigg\{(1 - \gamma^2_1)\|(\*I-\*H_{\*M_\tau})\*M_0 \*\mu_0\|^2  - 2 \gamma\|\*W\| \|(\*I-\*H_{\*M_\tau})\*M_0 \*\mu_0\|  +1 -e^{2\lambda(\tau_0-\tau)\log(n)/n} \\
    & \quad  \qquad  \qquad \qquad   - e^{2\lambda(\tau_0-\tau)\log(n)/n} \gamma^2\|\*W\|^2  < 0 \bigg| B^*\bigg\}\\
     & \leq \P_{(\*U, \*U^*|B^*)}\bigg\{{\|\*W\| }  > 
     \frac{ \sqrt{ C^2_{\*M_0, \*M_\tau}  +e^{\frac{2\lambda(\tau_0-\tau)\log(n)}{n}} \{(1- e^{\frac{2\lambda(\tau_0-\tau)\log(n)}{n}} )+ (1- \gamma_1^2) C^2_{\*M_0, \*M_\tau}\}   } -C_{\*M_0, \*M_\tau} }{\gamma e^{2\lambda(\tau_0-\tau)\log(n)/n} }\bigg | \\
        & \quad  \qquad  \qquad B^* \bigg\}\\
      &  \leq \P_{(\*U|B^*)}\bigg\{{\sigma_{\max} \|\*U\| }  > 
     \frac{ \sqrt{ C^2_{\*M_0, \*M_\tau}  +e^{\frac{2\lambda(\tau_0-\tau)\log(n)}{n}} \{(1- e^{\frac{2\lambda(\tau_0-\tau)\log(n)}{n}} )+ (1- \gamma_1^2) C^2_{\*M_0, \*M_\tau}\}   } -C_{\*M_0, \*M_\tau} }{\gamma e^{2\lambda(\tau_0-\tau)\log(n)/n} }\bigg | \\
        & \quad  \qquad  \qquad B^* \bigg\}\\
        & =  \P_{\*U}\bigg\{{\|\*U\|}  >  \frac{ \sqrt{ C^2_{\*M_0, \*M_\tau}  +e^{\frac{2\lambda(\tau_0-\tau)\log(n)}{n}} \{(1- e^{\frac{2\lambda(\tau_0-\tau)\log(n)}{n}} )+ (1- \gamma_1^2) C^2_{\*M_0, \*M_\tau}\}   } -C_{\*M_0, \*M_\tau} }{\sigma_{\max}\gamma e^{2\lambda(\tau_0-\tau)\log(n)/n} } \bigg\}\\
     & \leq \P_{\*U}\bigg\{\|\*U\| > C_{\min} \frac{(1-\gamma_1^2)}{5\gamma\sigma_{\
     \max}}\bigg\} = \P_{\*U}\left\{\|\*U\|  >  \sqrt{C^{II}_{\gamma,\gamma_1,\tau}}/\sigma_{\max}\right\}\\
     & \leq \exp\left\{-\frac{1}{3\sigma^2_{\max}} C^{II}_{\gamma, \gamma_1, \tau}\ + 0.25n \right\}, 
\end{align*}
where $C_{\*M_0, \*M_\tau} = \|(\*I - \*M_\tau)\*M_0\*\mu_0\|$  and $ C^{II}_{\gamma, \gamma_1 \tau} =  C^2_{\min} \frac{(1-\gamma_1^2)^2}{25\gamma^2}$. Again, the 
 drop of conditional distribution is because $\*U/\|\*U\|,$ $\|\*U\|, \*W/\|\*W\|$ and $\*U^*$ are mutually independent by Lemma~\ref{lem::angle}. 
  Also, the last inequality follows from \eqref{eq:markov_ineuqality} and making $t = 1/3.$ The second to last inequality holds because $-1+\sqrt{1+x} \geq 0.4x$ for any $0 \leq x \leq 1.$

Since $\{\*M^* \neq \*M_0\} =  \{\*M^* = \*M_\tau, \tau > \tau_0\}  \bigcup \{\*M^* \neq \*M_\tau, \*M_\tau \neq \*M_0, \tau \leq \tau_0\}$, it follows from the above and 
\eqref{eq:bound_tau_large}  that the conditional probability
\begin{align}\label{eq:aaa_1}
& \P_{(\*U, \*U^*|B^*)}\bigg\{{\*M}^* \neq \*M_0 \bigg|B^* \bigg\} \nonumber \\ 
& \leq \sum_{\{\*M_\tau: \*M_\tau \neq \*M_0, \tau \leq \tau_0\}}\P_{(\*U, \*U^*|B^*)}\bigg\{{\*M}^* = \*M_\tau \bigg| B^* \bigg\}\nonumber +  \sum_{\{\*M_\tau:  \tau > \tau_0\}}\P_{(\*U, \*U^*|B^*)}\bigg\{{\*M}^* = \*M_\tau \bigg |B^* \bigg\} \nonumber \\ 
&  \leq  \sum_{\tau=1}^{\tau_0} n^{\tau} \exp\left(-\frac{1}{3\sigma^2_{\max}}C^{II}_{\gamma, \gamma_1, \tau}\ + 0.25n \right) +  \sum_{\tau=\tau_0 + 1}^{\tau_{\max}} n^{\tau} \exp\left(-\frac{1}{3\sigma^2_{\max}}C^{I}_{\gamma,  \tau}\ + 0.25n \right) \nonumber\\
    & =  \sum_{\tau=1}^{\tau_0} \exp\left\{-\frac{1}{3\sigma^2_{\max}} C^{II}_{\gamma, \gamma_1, \tau}\ + \frac{n}4 + \tau\log(n)\right\} + \sum^{\tau_{\max}}_{\tau=\tau_0+1}
    \exp\left\{-\frac{1}{3\sigma^2_{\max}} C^I_{\gamma, \tau} + \frac{n}4 +  \tau\log(n) \right\}.
\end{align}

Suppose now we have many copies of $\*U^*$'s with
$\mathcal V   = \{\*U^s_1, \dots \*U^s_{|\mathcal V|}\},$ where $\*U^s_d \sim \*U^* \sim \*U, d= 1, \dots, |\mathcal V|$ are the random copies of $\*U^*$ in $\mathcal V,$ and let $B_d =  \bigg\{\rho(\*W, \*U^s_d) > 1-\gamma^2, \rho((\*I - \*H_\tau)\*U^s_d, (\*I - \*H_\tau)\*M_0\*\mu_0)<  \gamma_1^2\bigg\}.$ 
Now we bound the probability $\P_{\*U,\mathcal V  }\left\{(\tau_0, \*M_0) \not\in \widehat{\it \Upsilon}_{{\mathcal V}}(\*Y)\right\} $. Specifically, we  write
\begin{align} 
\label{eq:total_bound}
    &  \P_{\*U,\mathcal V  }\left\{(\tau_0, \*M_0) \not\in \widehat{\it \Upsilon}_{{\mathcal V}}(\*Y)\right\} \nonumber \\ & =  \P_{\*U,\mathcal V  }\bigg\{(\tau_0, \*M_0) \not\in \widehat{\it \Upsilon}_{{\mathcal V}}(\*Y), \bigcup_{1 \leq d \leq |\mathcal V|} B_d\bigg\}  +   \P_{\*U,\mathcal V  }\bigg\{(\tau_0, \*M_0) \not\in \widehat{\it \Upsilon}_{{\mathcal V}}(\*Y), \bigcap_{1 \leq d \leq |\mathcal V|}B_d^c \bigg\} \nonumber \\
    &  \leq \P_{\*U,\mathcal V  }\left\{(\tau_0, \*M_0) \not\in \widehat{\it \Upsilon}_{{\mathcal V}}(\*Y), \bigcup_{1 \leq d \leq |\mathcal V|}B_d\right\} + \P_{\*U}\left(\*W \in E^C_{\tau_0}(\gamma, \widetilde \gamma_1)\right) \nonumber \\ & \qquad  \qquad \qquad +  \P_{\*U,\mathcal V} \left(\bigcap_{1 \leq d \leq |\mathcal V|}\{\rho(\*W,\*U^s_d) \leq 1 - \gamma^2 \}\right),
\end{align}
where $$E_{\tau_0}(\gamma,\widetilde\gamma_1 ) = \bigcap_{\{\*M_\tau \neq \*M_0, \, \tau \leq \tau_0\}}\{\*w: g_{\*M_\tau}(\*w) > \sqrt{\gamma}, \rho((\*I - \*H_\tau)\*w, (\*I - \*H_\tau)\*M_0\*\mu_0)< \widetilde \gamma_1^2\},$$
and the last inequality follows from Lemma~\ref{lemma:u_d_sufficent}.

To bound the first term on the right-hand side of (\ref{eq:total_bound}), we use Lemma~\ref{lem: upper_coniditonal} with 
${\mathcal A} =\big\{ (\tau_0, \*M_0) \not\in \widehat{\it \Upsilon}_{{\mathcal V}}(\*Y)\big\}$, ${\mathcal B} = \bigcup_{1 \leq d \leq |\mathcal V|}B_d$. Particularly, we only need to bound the conditional probability 
$\P_{(\*U,\mathcal V) | D_d}\left\{(\tau_0, \*M_0) \not\in \widehat{\it \Upsilon}_{{\mathcal V}}(\*Y), \middle | D_d 
\right\},$ 
for $d =1, \dots, |\mathcal V|,$ where $D_d = B_d \bigcap\big(\bigcup_{1\leq j \leq d-1} B_j\big)^c$.
To this end, since 
$\big\{(\tau_0, \*M_0) \not\in \widehat{\it \Upsilon}_{{\mathcal V}}(\*Y)\big\} \subseteq \{\*M^*_{d}  \neq \*M_0  \}
$, we have 
\begin{align*}
    & \P_{(\*U,\mathcal V) | (D_d, \, \*U)}\left\{(\tau_0, \*M_0) \not\in \widehat{\it \Upsilon}_{{\mathcal V}}(\*Y) \bigg| D_d,  \*U\right\} \leq \P_{(\*U,\mathcal V) | (D_d, \, \*U)}\left\{ \*M^*_{d}  \neq \*M_0  |  D_d,  \*U\right\} \\
    & \qquad\qquad = 
    \P_{(\*U,\mathcal V) | (D_d, \, \*U)}\left\{ \*M^*_{d}  \neq \*M_0  \bigg| B_d \bigcap\big(\bigcup_{1\leq j \leq d-1} B_j\big)^c,  \*U\right\}
    \\
    & \qquad\qquad = 
    \P_{(\*U,\mathcal V) | (D_d, \, \*U)}\left\{ \*M^*_{d}  \neq \*M_0  \bigg| B_1^c, \ldots, B_{d-1}^c, B_d,  \*U\right\}  \\ & \qquad\qquad = 
    \P_{(\*U,\*U_s^d) | (B_d, \, \*U)}\left\{ \*M^*_{d}  \neq \*M_0  \bigg| B_d,  \*U\right\},
\end{align*}
 where the first equality holds because, given $\*U$, $\*U^s_j$'s remain to be iid, thus both $\*M^*_{d}$ and $B_d$ are indepent of $B_j^c$, $j =1, \ldots, d-1$.
Here, 
$\*M^*_{d}$ is obtained using (\ref{eq:modefied_BIC}) with 
 $\*U_d^s$ replacing $\*u^*$.

Since $\rho(\*W, \*U_d^s)$ is independent of $\*U_d^s$ by Lemma~\ref{lem::angle}, $B_d$ is independent of $\*U.$ It then follows immediately that 
\begin{align}
\label{eq:p_upper_bound}
    & \P_{(\*U,\mathcal V) | D_d}\left\{(\tau_0, \*M_0) \not\in \widehat{\it \Upsilon}_{{\mathcal V}}(\*Y) \middle| D_d\right\} =  E_{\*U}\left[\P_{(\*U,\mathcal V) | (D_d, \, \*U)}\left\{(\tau_0, \*M_0) \not\in \widehat{\it \Upsilon}_{{\mathcal V}}(\*Y) \bigg| D_d,  \*U\right\} \right]\nonumber\\
    & \quad \leq E_{\*U}\bigg[ \P_{(\*U,\*U_s^d) | (B_d, \, \*U)}\left\{ \*M^*_{d}  \neq \*M_0  \bigg| B_d,  \*U\right\} \bigg]  = \P_{(\*U,\*U_s^d) | B_d }\left\{ \*M^*_{d}  \neq \*M_0  \bigg| B_d\right\}\nonumber \\
   & \quad = \P_{(\*U,\*U^* | B^*)} \left\{ \*M_d^*  \neq \*M_0  \bigg| B^*\right\} \nonumber \\
   & \quad  \leq \sum_{\tau=1}^{\tau_0} \exp\left\{-\frac{C^{II}_{\gamma, \gamma_1, \tau}}{3\sigma^2_{\max}}  + \frac{n}4 + \tau\log(n)\right\} + \sum^{\tau_{\max}}_{\tau=\tau_0+1}
    \exp\left\{-\frac{C^I_{\gamma, \tau}}{3\sigma^2_{\max}} + \frac{n}4 +  \tau\log(n) \right\},
\end{align}
where the second equality holds because of (\ref{eq:aaa_1}).

 It then follows from Lemma~\ref{lem: upper_coniditonal} and \eqref{eq:p_upper_bound} that the first term  on the right-hand side of (\ref{eq:total_bound}) is bounded by  
\begin{align}
    & \P_{\*U,\mathcal V  }\left\{(\tau_0, \*M_0) \not\in \widehat{\it \Upsilon}_{{\mathcal V}}(\*Y), \bigcup_{ 1 \leq d \leq |\mathcal V|}
    B_d \right\}  \nonumber\\
     &\leq \sum_{\tau=1}^{\tau_0} \exp\left\{-\frac{C^{II}_{\gamma, \gamma_1, \tau}}{3\sigma^2_{\max}}  + \frac{n}4 + \tau\log(n)\right\} + \sum^{\tau_{\max}}_{\tau=\tau_0+1}
    \exp\left\{-\frac{C^I_{\gamma, \tau}}{3\sigma^2_{\max}} + \frac{n}4 +  \tau\log(n) \right\}.
\label{eq:inequality_single}
\end{align}

To bound the second term of on the right hand side of \eqref{eq:total_bound}, we  make $\gamma_1= \sqrt{1- \gamma^{1/2}},$ then $\widetilde\gamma_1=(1-\sqrt{\gamma})\sqrt{1- \gamma} - \sqrt{2-2\sqrt{1-\gamma^2}} \geq 1 - 2\sqrt{\gamma} \geq 0$ for $0 \leq \gamma \leq 0.25.$ Thus 
\begin{align} \label{eq:A4-1} & \P_{\*U}\left(\*W \in E^C_{\tau_0}( \gamma, \widetilde \gamma_1)\right) \nonumber
 \\
 & = \P_{\*U}\left( \*W \in \bigcup_{\{\*M_\tau \neq \*M_0, \, \tau \leq \tau_0\}}\{\*w: g_{\*M_\tau}(\*w) \leq  \sqrt{\gamma} \mbox{ or }  \rho((\*I - \*H_\tau)\*w, (\*I - \*H_\tau)\*M_0\*\mu_0) \geq  \widetilde \gamma_1^2\}\right) \nonumber
 \\
  & \leq    \sum_{\{\*M_\tau \neq \*M_0, \tau \leq \tau_0\}}\bigg[\P\left\{g_{\*M_\tau}(\*W) \leq \sqrt{\gamma}\right\} +  \P\left\{\rho((\*I - \*H_\tau)\*W, (\*I - \*H_\tau)\*M_0\*\mu_0)\geq  \widetilde \gamma_1^2\right\}\bigg] \nonumber \\
   & \leq    \sum_{\{\*M_\tau \neq \*M_0, \tau \leq \tau_0\}}\P\left\{g_{\*M_\tau}(\*W) \leq \sqrt{\gamma}\right\} \nonumber \\ & \qquad\qquad\qquad + \sum_{\{\*M_\tau \neq \*M_0, \tau \leq \tau_0\}} \P\left\{\rho((\*I - \*H_\tau)\*W, (\*I - \*H_\tau)\*M_0\*\mu_0)\geq  (1 - 2\sqrt{\gamma})^2\right\}. 
\end{align}
Since $g_{\*M_\tau}(\*W)  = \frac{\|(\*I - \*H_\tau)\*W\|}{\|\*W\|} >0$, for $|\tau| < n$,  $\P(g_{\*M_\tau}(\*W)  \leq 0) = 0$.  Furthermore, since $(\*I - \*H_\tau)\*W$ is a linear transformation of $\*W$ and $g_{\*M_\tau}(\*W)$ is a continuous random variable, it follows immediately that 
\begin{align} \label{eq:A4-2}
\P(g_{\*M_\tau}(\*W)  \leq \sqrt{\gamma}) \rightarrow 0, 
\hbox{ \,\, as $\gamma \rightarrow 0.$}  
\end{align}
Moreover, unless $(\*I-\*H_\tau)\*W = a (\*I - \*H_\tau)\*M_0\*\mu_0$ for a scalar constant $a,$
$\P\big\{\rho((\*I - \*H_\tau)\*W, (\*I - \*H_\tau)\*M_0\*\mu_0)\geq  (1 - 2\sqrt{\gamma})^2\big\} = \frac{|\*W^T(\*I - \*H_\tau) \*M_0\*\mu_0|}{\|(I - H_\tau)\*W\|\|(\*I - \*H_\tau) \*M_0\*\mu_0\|} < 1$
and $\P\big\{\rho((\*I - \*H_\tau)\*W, (\*I - \*H_\tau)\*M_0\*\mu_0)\geq  1\big\} = 0$. 
It follows that  
\begin{align} \label{eq:A4-3}
\P\left\{\rho((\*I - \*H_\tau)\*W, (\*I - \*H_\tau)\*M_0\*\mu_0)\geq  (1 - 2\sqrt{\gamma})^2\right\} \rightarrow 0, \hbox{\,\, as $\gamma \rightarrow 0.$}   
\end{align}
Together, with (\ref{eq:A4-1}) - (\ref{eq:A4-3}), we have $$\P_{\*U}\left(\*W \in E^C_{\tau_0}( \gamma, \widetilde \gamma_1)\right) \rightarrow 0, \quad \hbox{as $\gamma \rightarrow 0.$} $$

For the third term of on the right hand side of (\ref{eq:total_bound}), we follow the proof of \eqref{eq: 1-P_delta} in the proof of  Lemma~\ref{lemma:p_bound_C_D_general}. Specifically, 
\begin{align*}
   & \P_{\*U,\mathcal V} \left(\bigcap_{1 \leq d \leq |\mathcal V|}\{\rho(\*W,\*U^s_d) \leq 1 - \gamma^2 \}\right) = E_{\*U}\left[ \P_{\mathcal V|\*U} \left(\bigcap_{1 \leq d \leq |\mathcal V|}\{\rho(\*W,\*U^s_d) \leq 1 - \gamma^2 \} \middle | \*U\right)\right] \\
  &  =  E_{\*U}\left[ \left\{1 - \P_{\*U^*|\*U} \left(\rho(\*W,\*U^*) \leq 1 - \gamma^2  \middle | \*U\right)\right\}^{|\mathcal V|}\right] \\
  & \leq \left\{1 - \P_{\*U,\*U^*} \left(\rho(\*W,\*U^*) \leq 1 - \gamma^2  \right)\right\}^{|\mathcal V|} \leq \left(1-\frac{\gamma^{n-2}\arcsin (\gamma)}{n-1}\right)^{|{\mathcal V}|},
\end{align*}
by Lemma~\ref{lem::angle} and Jensesn's inequality.

With all three terms on the right-hand side of \eqref{eq:total_bound} together, we have an upper bound
\begin{align}
\label{eq:finite_p_bound_C_D}
     \P\{(\tau_0, \*M_0) \not\in \widehat{\it \Upsilon}_{{\mathcal V}}(\*Y)\} & \leq 
     \sum^{\tau_{\max}}_{\tau=\tau_0+1}
    \exp\left\{-\frac{1}{3\sigma^2_{\max}} C^I_{\gamma, \tau} + 0.25n +  \tau\log(n) \right\} \nonumber \\
    & \quad  +  \sum_{\tau=1}^{\tau_0} \exp\left\{-\frac{1}{3\sigma^2_{\max}} C^{II}_{\gamma_1,\gamma, \tau}\ + 0.25n + \tau\log(n)\right\}\nonumber\\
     & \quad 
     + p_\gamma  +  \left(1-\frac{\gamma^{n-2}\arcsin (\gamma)}{n-1}\right)^{|{\mathcal V}|},
\end{align}
where $p_\gamma = \P_{\*U}\left(\*W \in E^C_{\tau_0}( \gamma, \widetilde \gamma_1)\right) \rightarrow 0$ as $\gamma \rightarrow 0.$

 Finally, 
 $C^{II}_{\gamma_1,\gamma, \tau} = \left(\frac{1-\gamma_1^2}{5\gamma}C_{\min}\right)^2 = \frac{1}{25\gamma} C^2_{\min},$  so the bound for $\lambda$ in \eqref{eq:lambda_upper} is reduced to   $\lambda \leq \frac{\log\{C^2_{\min}(1- \gamma^2_1)/2 + 1\}}{2 \tau_0 \log(n)/n} = \frac{\log\{0.5C^2_{\min}\gamma^{1/2} + 1\}}{2 \tau_0 \log(n)/n} .$
Moreover, because $\arcsin x \geq x$, we have
\begin{align*}
    1-\frac{\gamma^{n-2}\arcsin (\gamma)}{n-1} \leq 1-\frac{\gamma^{n-1}}{n-1}.
\end{align*}
Then the result of the lemma  
follows immediately from the above and \eqref{eq:finite_p_bound_C_D}.

\end{proof}

 \begin{proof}[Proof of Theorem~\ref{thm:bound_of_C_d_mixture}]
Because the first three terms of \eqref{eq:prob_bound_C_D_simple-1} converges to 0 as $\gamma$ goes to 0, for any $\delta >0$, there exist a $\gamma_{\delta}$ that the probability bound in  \eqref{eq:prob_bound_C_D_simple-1} reduces to 
\begin{align}
\label{eq: prob_bound_candidate_2}
    \P_{\*U, {\mathcal V}}\{(\tau_0,\*M_0)\not\in \widehat{\it \Upsilon}_{{\mathcal V}}(\*Y)\} \leq \delta + \left[1- \P\left\{\mathcal \rho(\*U, \*U^*) \leq \gamma_{\delta}\right\}\right]^{|{\mathcal V}|}  \leq\delta + \left(1-\frac{\gamma_\delta^{n-1}}{n-1}\right)^{|{\mathcal V}|}.
\end{align} 
Moreover, the range for $\lambda$ in Lemma~\ref{lem:bound_single_ustar} reduces to $\Lambda_{\delta} =[\frac{\gamma_{\delta}^{1.5}}{\log(n)/n},\frac{\log\{0.5C^2_{\min}\gamma_{\delta} + 1\}}{2 \tau_0 \log(n)/n}].$  
Here the interval $\Lambda_{\delta}$ is of positive length because $x^{1.5} \leq \frac{1}{\tau_0}\log(cx+1)$ in a neighbourhood of 0 for any constant $c>0.$

 Finally, let $  P_\delta = \left(1-\frac{\gamma_\delta^{n-1}}{n-1}\right)$. By \eqref{eq: prob_bound_candidate_2} and Markov Inequality,
 \begin{align*}
      &   \P_{{\mathcal V}}\left[\P_{\*U| {\mathcal V}}\{(\tau_0,\*M_0)\not\in \widehat{\it \Upsilon}_{{\mathcal V}}(\*Y)\} -\delta \geq (1-    P_{\delta})^{|{\mathcal V}|/2}\right]\\
      & \leq  \frac{E_{{\mathcal V}}\P_{\*U|{\mathcal V}}\{(\tau_0,\*M_0)\not\in \widehat{\it \Upsilon}_{{\mathcal V}}(\*Y)\}}{(1-  P_{\delta })^{|{\mathcal V}|/2}} \\
      & =  \frac{\P_{\*U,{\mathcal V}}\{(\tau_0,\*M_0)\not\in \widehat{\it \Upsilon}_{{\mathcal V}}(\*Y)\}}{(1-  P_{\delta })^{|{\mathcal V}|/2}} 
      \leq {(1-  P_{\delta })^{|{\mathcal V}|/2}},
 \end{align*}
from which (a) follows by making $c<-\frac{1}{2}\log(1-   P_\delta).$ Then (b) follows from (a) and \eqref{eq:coverage_inequality}. 
\end{proof}

\begin{proof}[Proof of Theorem~\ref{the:upper_bound}]
With slight abuse of notation, let $\tau^*$ be defined as \eqref{eq:modefied_BIC} except we replace $\*u^*$ with a random $\*U^*,$ and we first bound $\P(\tau^* \geq \tau ).$

By \ref{eq:A3_inequality1a},   
\begin{align*}
     \P(\tau^* \geq \tau) \leq \P(1  - e^{2\lambda(\tau_0-\tau)\log(n)/n} \{\|(\*I - \*H_{\*M_0, \*U^*}) \*W\|^2+1\} \leq 0).
\end{align*} 
Make $(\tau - \tau_0)/n = c,$ then for a $\tau > \tau_0$, the above reduces to 
\begin{align*}
  \|(\*I - \*H_{\*M_0, \*U^*}) \*W\|^2 \geq  n^{2\lambda c } -1 .
\end{align*}
Then when {$0  < \sigma_{\max} < \infty,$} it follows from \cite{vu2015random} that there exists constant $a$ and $a',$ such that 
\begin{align*}
     \P\left\{ \|(\*I - \*H_{\*M_0, \*U^*}) \*W\|^2 \geq n^{2\lambda c} - 1  \right\}  \leq a \exp\left\{-a' \left(n^{2\lambda c} -1  - (n - \tau_0 -1)\right) /\sqrt{n - \tau_0 -1}\right\}\\
      = a \exp\left\{-a' \left(n^{2\lambda c} - n + \tau_0\right) /\sqrt{n - \tau_0 -1}\right\}.
\end{align*}
When $\lambda  = O(n/\log(n)),$ as in Lemma~\ref{lem:bound_single_ustar}, it then follows that the probability bound 
\begin{align*}
     \P(\tau^* \geq \tau)  \leq a \exp\{-a'(n^{O((\tau-\tau_0)/\log(n)) -1/2} - n^{1/2})\},
\end{align*} 
from which the result in Theorem~\ref{the:upper_bound} 
follows immediately.

\end{proof}

\subsection{Remark on the use of (\ref{eq:modefied_BIC}).} \label{sec:BIC}
Suppose we use a BIC-based penalized criterion, so the corresponding equation (\ref{eq:tau0}) of Section~\ref{sec:candidate_general} for 
the Gaussian mixture model in Section~\ref{sec:mixture}
is 
\begin{align}\label{eq:B3-tau0}
\resizebox{.92\hsize}{!}{$(\tau_0, \*M_0) =\arg\min\limits_{(\tau, \*M_\tau)} \min\limits_{(\*\mu_\tau, \*\sigma_\tau)}  \left\{n\log \bigg(\frac{\|\*y_{obs} - {\*M}_{\tau} {\*\mu}_{\tau} - \diag(\*M_{\tau}  {\*\sigma_\tau}) \*u^{rel}\|^2+1}{n}\bigg) + 2\lambda\tau\log(n)\right\}.$}
\end{align}
{The above recovers $\tau_0$ with a small enough $\lambda$ when $\tau_0 = O(\log(n)) < n.$}
Directly applying \eqref{eq:tau-star} then leads to the following objective function
\begin{align}\label{eq:B3-star}
\resizebox{.92\hsize}{!}{$(\tau^*, \*M^*_{\tau^*}) =\arg\min\limits_{(\tau, \*M_\tau)} \min\limits_{(\*\mu_\tau, \*\sigma_\tau)}  \left\{n\log \bigg(\frac{\|\*y_{obs} - {\*M}_{\tau} {\*\mu}_{\tau} - \diag(\*M_{\tau}  {\*\sigma_\tau}) \*u^*\|^2+1}{n}\bigg) + 2\lambda\tau\log(n)\right\}.$}
\end{align}
If we use this objective function (\ref{eq:B3-star}), 
we can also recover $(\tau_0, \*M_0)$ when $\*u^*$ approximates $\*u^{rel}$;  however, it requires us to  estimate each variance component $\*\sigma_\tau.$To avoid 
 estimating these individual variance components in  $\*\sigma_\tau$, 
we have instead used \eqref{eq:modefied_BIC}
in our development in the main paper. The use of the criterion  \eqref{eq:modefied_BIC} significantly reduces both computational and theoretical complexities, yet we can still obtain our desired developments both methodologically and theoretically.

Specifically, if we denote $\*w^{rel} = \diag({\bm M}_{0} \*\sigma_0)\*u^{rel}$, $\bar \sigma_0^2 = \|{\bm M}_{0} \*\sigma_0\|^2/n = \frac 1n \sum_{j = 1}^\tau m_{\cdot j}(\sigma_j^{(0)})^2$ and $\bar {\*u}^{rel} = \*w^{rel}/\bar \sigma_0$ with $m_{\cdot j} = \sum_{i =1}^n m_{ij}$ being the number of elements in the $j$th component, then we have $\diag({\bm M}_{0} \*\sigma_0)\*u^{rel} = \bar \sigma_0\bar {\*u}^{rel}$ and 
we 
can rewrite (\ref{eq:B3-tau0}) as
\begin{align}
\label{eq:B3-tau0-mod}
\resizebox{.86\hsize}{!}{$(\tau_0, \*M_0) =\arg\min\limits_{(\tau, \*M_\tau)} \min\limits_{(\*\mu_\tau, \*\sigma_\tau)}  \left\{n\log \bigg(\frac{\|\*y_{obs} - {\*M}_{\tau} {\*\mu}_{\tau} - \bar \sigma \bar{\*u}^{rel}\|^2+1}{n}\bigg) + 2\lambda\tau\log(n)\right\}.$}
\end{align}
Here, the fixed $\bar {\*u}^{rel} = (\bar u_1^{rel}, \ldots, \bar u_n^{rel})^\top$ and $\bar u_i^{rel}$ is a realization from $N(0, (\sigma_{j}^{(0)}/\bar \sigma_0)^2)$ for $i$ in the $j$th component. 
When we use $\*u^*$ from $N(0,I)$ to replace $\bar u_i^{rel}$ in (\ref{eq:B3-tau0-mod}), it leads to the use of the modified criterion in \eqref{eq:modefied_BIC}. 
Although $\*u^*$ is not exactly a repro copy of $\bar u_i^{rel}$, 
this modification does not affect our results on obtaining a desirable candidate set. This is because we can always find $\*u^*$ in the neighborhood (direction) of $\bar u_i^{rel}$ to retain $(\tau_0, \*M_0)$, as we show in the proofs in A-III-1.  
The numerical studies in Section~\ref{sec: numerical} also demonstrate that we can use the modified criterion \eqref{eq:modefied_BIC} to get our desired results. 

The tuning parameter $\lambda$ here plays an analogous role to the user-specified penalty parameter in the commonly used generalized information criteria \citep{konishi1996gic,konishi2008information}, that is controlling the weight of the penalization on model complexities. Under mild conditions on this penalty parameter, such GIC-type criteria are known to be consistent for selecting the true model among a fixed set of candidate models. Similarly, in our setting, we add this additional tuning parameter $\lambda$ to achieve the nominal coverage rate theoretically. {Moreover, the first part of \eqref{eq:modefied_BIC} is modified from the typical BIC in which the lowest bound of the $\log$ term is $\log(1/n)$, not $\log(0) = - \infty$.} Specifically, Theorem~\ref{thm:bound_of_C_d_mixture} explicitly characterizes the  range of \(\lambda\) for which the proposed confidence sets achieve the desired coverage for any given $C_{\min}$. And as long as $C_{\min}$ is not extremely small, the range allows \(\lambda = 1\), which corresponds exactly to the standard BIC penalty. For this reason, and to avoid introducing additional tuning complexity, we set \(\lambda = 1\) in all empirical implementations reported in the paper.

\subsection{\hspace{-3mm}Inference for cluster mean $\mu_k^{(0)}$ and standard deviation $\sigma_k^{(0)}$ }\label{sec:mu-sigma}

The repro samples method also offers a new approach to make inference for the 
location and scale parameters, without requiring the knowledge $\tau_0.$ 
To the best of our knowledge, this inference problem is not fully addressed even when $\tau_0$ is known, let alone the situation when $\tau_0$ is unknown. Here, 
we face several 
challenges. 
First, the lengths of the parameters ${\*\mu}_\tau$ and $\*\sigma_\tau$ vary with $\tau$. When $\tau \not = \tau_0$, the elements of the vectors  ${\*\mu}_\tau$ and ${\*\sigma}_\tau$ often do not necessarily lineup with those of the target parameters ${\*\mu}_0$ and $\*\sigma_0$ at all. 
Thus, to make an inference for $\*\mu_{0}$ or $\*\sigma_{0}$ when $\tau_0$ is unknown, we need to consider the task jointly with the unknown discrete parameter $\tau_0$. 
Furthermore, even if $\tau_0$ is given, the membership assignment matrix $\*M_0$ is a collection of a large number of discrete nuisance parameters, and this matrix often cannot be fully well estimated, particularly for those at the overlapping ends of neighboring components. 
Here,
we provide an inference method under the repro samples framework to address these problems. 
To simplify our presentation, our development focuses only on making inference 
for each mixture component's mean $\mu_k^{(0)}$ and standard deviation $\sigma_k^{(0)}$. Or, more specifically, we make a joint inference for each $(\tau_0, \mu_k^{(0)})$ and   $(\tau_0, \sigma_k^{(0)})$, for $k=1 \dots \tau_0$.  Other inference problems concerning the location and scale parameters $\*\mu_0$ and $\*\sigma_0$ can be handled in a  similar fashion.

We follow the three-step procedure of Section~\ref{sec:ThreeSteps} to tackle this problem. Since the first step is already done in Sections~\ref{sec:components} and \ref{sec:finding_candidates}, we directly go to~Steps~2 and~3.~Specifically, for each unique $(\tau^*, \*M^*) \in \widehat{\it \Upsilon}_{{\mathcal V}}(\*y_{obs})$, we first obtain a level $1 - \alpha$ representative interval~for $\mu_k$ and $\sigma_k$, say, $(a_k^{M^*}, b_k^{M^*})$ and $(q_k^{M^*}, r_k^{M^*})$, for $k = 1, $ $\ldots, \tau^*$, respectively. 
We then~follow the discussion in Section~\ref{sec:ThreeSteps} to construct a confidence~set~for~$(\tau_0, \mu_k^{(0)})$~and~$(\tau_0, \sigma_k^{(0)})$,~respectively: 
\begin{equation}
\label{eq:cross_mix} 
\resizebox{.929\hsize}{!}{$\Xi^{(\mu_k)}(\*y_{obs}) = \bigcup\nolimits_{\tau^* \in \widehat{\it \Upsilon}^{(\tau)}_{{\mathcal V}}(\*y_{obs})}\{\tau^*\} \times (a^*_k, b^*_k) \hbox{ and }
\Xi^{(\sigma_k)}(\*y_{obs}) = \bigcup\nolimits_{\tau^* \in \widehat{\it \Upsilon}^{(\tau)}_{{\mathcal V}}(\*y_{obs})}\{\tau^*\} \times (q^*_k, r^*_k),$}
\end{equation}
where $(a_k^{*}, b_k^{*})$=$\big( \inf_{\{\*M^*: (\tau^*, \*M^*) 
 \in \widehat{\it \Upsilon}_{{\mathcal V}}(\*y_{obs}) \}}  a_k^{M^*},\allowbreak
\sup_{\{\*M^*: (\tau^*, \*M^*) 
\in \widehat{\it \Upsilon}_{{\mathcal V}}(\*y_{obs}) \}}b_k^{M^*}\big)$, $(q_k^{*}, r_k^{*}) = $ $\big(\inf_{\{\*M^*:}$ $_{(\tau^*, \*M^*) 
 \in \widehat{\it \Upsilon}_{{\mathcal V}}(\*y_{obs}) \}}  q_k^{M^*},  \, \sup_{\{\*M^*: (\tau^*, \*M^*) 
 \in \widehat{\it \Upsilon}_{{\mathcal V}}(\*y_{obs}) \}}r_k^{M^*}\big)$,
 and $\widehat{\it \Upsilon}^{(\tau)}_{{\mathcal V}}(\*y_{obs}) $=$\{\tau^*: (\tau^*, \*M^*) \in \widehat{\it \Upsilon}_{{\mathcal V}}(\*y_{obs})$ $ \mbox{for some }  \*M^*\}$ is the collection of unique $\tau^*$'s in $\widehat{\it \Upsilon}_{{\mathcal V}}(\*y_{obs}).$ 
 By Corollary~\ref{cor:3steps} in Section~\ref{sec:ThreeSteps}, if  
 $(a_k^{*}, b_k^{*})$ is a valid level $1 - \alpha$ confidence set for $\mu_k^{(0)}$ when $\tau^*=\tau_0$, 
then $\Xi^{(\mu_k)}(\*y_{obs})$~is a level $1 - \alpha$ confidence set for $(\tau_0,\mu_k^{(0)} ).$ Similarly,  $\Xi^{(\sigma_k)}(\*y_{obs})$ is a level $1 - \alpha$ confidence set for~$(\tau^0, \sigma_k^{(0)})$. 

The remaining task is to obtain the representative sets $(a_k^{M^*}, b_k^{M^*})$ and $(q_k^{M^*}, r_k^{M^*})$, given $(\tau^*, \*M^*) \in \widehat{\it \Upsilon}_{{\mathcal V}}(\*y_{obs}).$ If $\tau^* = \tau_0$ and one set of these $(a_k^{M^*}, b_k^{M^*})$ and $(q_k^{M^*}, r_k^{M^*)})$ cover $\mu_k^{(0)}$ and $\sigma_k^{(0)}$ with probability $\alpha,$ 
then $(a^*_k, b^*_k)$ and $(q_k^{*}, r_k^{*})$ are valid confidence sets for $\mu_k^{(0)}$ and $\sigma_k^{(0)}$ respectively.
By Theorem~\ref{thm:p_bound_C_D_general}, with enough computing power allowing $|{\mathcal V}| \to \infty$, we can always capture $(\tau_0, \*M_0)$ in the candidate set $\widehat{\it \Upsilon}_{{\mathcal V}}(\*y_{obs})$ thus $(a^*_k, b^*_k)$ and $(q_k^{*}, r_k^{*})$ are valid confidence sets. In practice with limited computing resources,  we can well recover $\tau_0$ in the candidate set $\widehat{\it \Upsilon}_{{\mathcal V}}(\*y_{obs})$ (as evidenced in our empirical study in Section~\ref{sec: numerical}) but it might be hard to exactly recover $\*M_0$.  
To counter this issue, 
 we propose a robust version of $(a_k^{M^*}, b_k^{M^*})$ and $(q_k^{M^*}, r_k^{M^*})$  that only requires $\*M^*$ to be reasonably close to $\*M_0$ for $(a_k^{M^*}, b_k^{M^*})$ 
to cover $\mu_{k}^{(0)}$ and  $(q_k^{M^*}, r_k^{M^*})$ to cover $\sigma_{k}^{(0)}$ with the desirable rate. 
There are many $\*M^*$'s that are close to $\*M_0$, so in practice we can use a  far smaller $|\mathcal V|$ to achieve the desirable coverage~rate.

 Specifically, for a given $(\tau^*, \*M^*) \in \widehat{\it \Upsilon}_{{\mathcal V}}(\*y_{obs})$, let ${\mathcal D}_k^* = \{y_i: m^*_{ik}=1, i=1, \dots, n_k\}$ be the data points assigned to the $k$ component, $1 \leq k \leq \tau^*$. 
We may use the classical sign test for the population median to get a representative set for the $k$th component's 
location parameter $\mu_k$. It happens to be equivalent to a special case of the repro sample approach developed for nonparametric quantile inference in {Examples~\ref{ex:crq} and \ref{ex:crq_robust}
 of Appendix~\ref{sec:example_sec2}}:  
 $
 (a_k^{\*M^*}, b_k^{\*M^*}) =  \big\{\mu: a_L(.5) \leq \sum\nolimits_{y_i \in {\mathcal D}_k^*} I(y_i \leq \mu ) \leq a_U(.5) \big\}. $ 
 Here, $a_L(.5)$ and $a_U(.5)$ are level $1 - \alpha$ lower and upper bounds defined in Example~\ref{ex:bin} with $r$ $ = |{\mathcal D}_k^*|$ and the quantile $\zeta = \frac12$. 
See {Appendix~\ref{sec:bin_example}} for further details.

For the scale parameter $\sigma_k$, under the assumption  
$y_i \in {\mathcal D}_k^*$ are from $N(\mu_k, \sigma_k^2)$,~a~consistent robust estimator is $\hat \sigma_k =  {\it MAD}/\Psi^{-1}(.5)$, where the median absolute deviation~${\it MAD} =$ $ {\it med}\{|y_i - M_k|: y_i \in {\mathcal D}_k^*\}$, $M_k = {\it med}\{y_i: y_i \in {\mathcal D}_k^*\}$ and $\Psi(t) = P(|Z_j - M_z| < t)$~is~the cumulative distribution function of $|Z_j - M_z|$, $j = 1, \ldots, |{\mathcal D}_k^*|$. Here,  
$Z_j \overset{iid}{\sim} N(0,1)$ and $M_z =$ ${\it med}
\{Z_1,  \ldots, Z_{|{\mathcal D}_k^*|}\}$.
In the form of generalized generative equation (\ref{eq:AA}), we~have
$$ \sum\nolimits_{y_i \in {\mathcal D}_k^*} I\{|y_i - M_k| \leq {\small \Psi^{-1}(.5)} \sigma \} - 
\sum\nolimits_{j = 1}^{|{\mathcal D}_k^*|} U'_j = 0, 
$$
where $U'_j = I\{|Z_j - M_z| \leq {\small \Psi^{-1}(.5)}\}$ are dependent Bernoulli$(\frac12)$ variables. Since~the~distribution of $\sum\nolimits_{j = 1}^{|{\mathcal D}_k^*|} U'_j$ can be simulated, we can get 
a Borel interval  $B_{1 -\alpha} = [a'_L, a'_U]$,  
with
$(a_L',$ $ a_U') = 
     \arg \min_{\{(l,r): 
  P( l < \sum\nolimits_{j = 1}^{|{\mathcal D}_k^*|} U'_j < r )
     \geq 1- \alpha\}} |r - l|$. Then, given $(\tau^*, \*M^*)$, a  representative~set~of~$\sigma_k$~is: 
    $$(q_k^{\*M^*}, r_k^{\*M^*}) =  \big\{\sigma_k: a'_L \leq \sum\nolimits_{y_i \in {\mathcal D}_k^*} I(|y_i - M_k| \leq {\small \Psi^{-1}(.5))} \sigma_k ) \leq a'_U \big\}.$$ 
 Further details and discussions are provided in  
{Appendix~\ref{sec:example_sec2}.}

\section{Additional implementation details and numerical results in Section~\ref{sec: numerical}}\label{sec:numerical-details}

\subsection{Existing frequentist and Bayesian approaches: real data analysis and simulation studies} \label{sec:bayes_comp}

\subsubsection{Analysis of SLC data: existing frequentist and Bayesian approaches}

The SLC data set consists of red blood cell sodium-lithium countertransport (SLC) activity measurements from 190 individuals \citep{dudley_assessing_1991}. 
The SLC measurement is known to be correlated with blood pressure, and  is considered as an essential cause of hypertension by some researchers. 
Moreover, SLC is usually easier to study than blood pressure that could be influenced by numerous environmental and genetic factors  \citep{dudley_assessing_1991}.
The SLC data have been analyzed using the classical hypothesis testing approach and Gaussian mixture models by \citep{chen_inference_2012, roeder_graphical_1994}, both of which focus on making inference for the unknown number of components $\tau_0$. In this subsection, we summarize their conclusions, and in addition provide two point estimates using Bayesian information criterion (BIC) and  Akaike information criterion (AIC),  respectively, and also provide a set of Bayesian inference results for $\tau_0$.

{\it Point estimates of $\tau_0$:}  We use R function {\it GMM} in {\it CluterR} package to analyze the SLC data and obtained two point estimates: $\hat \tau_{BIC} = 2$ and $\hat \tau_{AIC}  = 7$, with the first one using the BIC and the second using AIC criterion.
These point estimates do not provide a quantification of the estimation uncertainty.

{\it Results from classical hypothesis testing methods:}  
\cite{roeder_graphical_1994} and \cite{chen_inference_2012} analyzed the same SLC data set 
of 190 individuals using a Gaussian mixture model 
under the classical Neyman-Pearson framework. Both of their problem setups are to test the unknown number of components $\tau_0$ with the hypothesis $H_0: \tau_0 = k$ vs $H_1: \tau_0>k$ for a small integer $k$; and both used modified penalized likelihood ratio testing methods that are justified by large-sample theories. 
\cite{roeder_graphical_1994} concluded that $\tau_0=3$ is the smallest value of $\tau_0$ not rejected by the data with a further assumption that the variance of each component is 
the same.  
\cite{chen_inference_2012} dropped the assumption of equal variances and 
concluded that the hypothesis of $\tau_0=2$ could not be rejected, 
and thus $\tau_0=2$ a good fit for the data. By inverting the two tests in these two papers, we get two one-sided confidence sets $[3, \infty)$ and  $[2, \infty)$. Without further developments, the existing approaches of \cite{roeder_graphical_1994} and \cite{chen_inference_2012} cannot provide two-sided confidence intervals for the smallest possible $\tau_0$.  

{\it Bayesian analysis results:} Bayesian procedures can also be used to analyze the unknown number of components $\tau_0$ in a Gaussian mixture model; see, e.g.,   \cite{richardson_bayesian_1997}. 
We conduct a Bayesian analysis of the SLC data to obtain credible sets for $\tau_0$, using the method proposed in \cite{richardson_bayesian_1997} and based on R package {\it mixAK} \citep{komarek2014capabilities}. 
In our analysis, we have tried four different priors on $\tau$: uniform, Poisson(0.4),  Poisson(1) and Poisson(5), all of which are truncated outside the set $\{1, $ $\dots, 10\}$. The priors used on $(\*\mu_\tau,\*\sigma_\tau)$, given $\tau$, are the default priors (i.e., Gaussian and inverse Gamma priors) in {\it mixAK}. 
The posterior distributions for the four priors are plotted in Figure~\ref{fig:SLCposteriors}, corresponding to which the 95\% credible sets are $\{2,3\}$, $\{2\}$, $\{2,3\}$, $\{2, $ $3,4\}$, respectively. 
It turns out 
the four posterior distributions and the four 95\% credible sets of $\tau$ are quite different for different priors of $\tau$. Even in the two cases (i.e., the cases of the uniform and Poisson(1) priors) that produce the same $95\%$ credible set, 
their posterior distributions are quite different. 
It is clear that the
Bayesian inference for $\tau$ is very sensitive to the choice of its prior. A simulation study next in Appendix~\ref{sec:App_Simu} affirms this observation. 
The simulation study further suggests that the Bayesian procedure is not suited for the task of recovering the true $\tau_0$ (under repeated experiments), and its outcomes are affected greatly by the default priors on $\*\mu$~and~$\*\sigma$ as~well.  
\begin{figure}[ht]
    \centering
\includegraphics[width=\textwidth, height= 8cm]{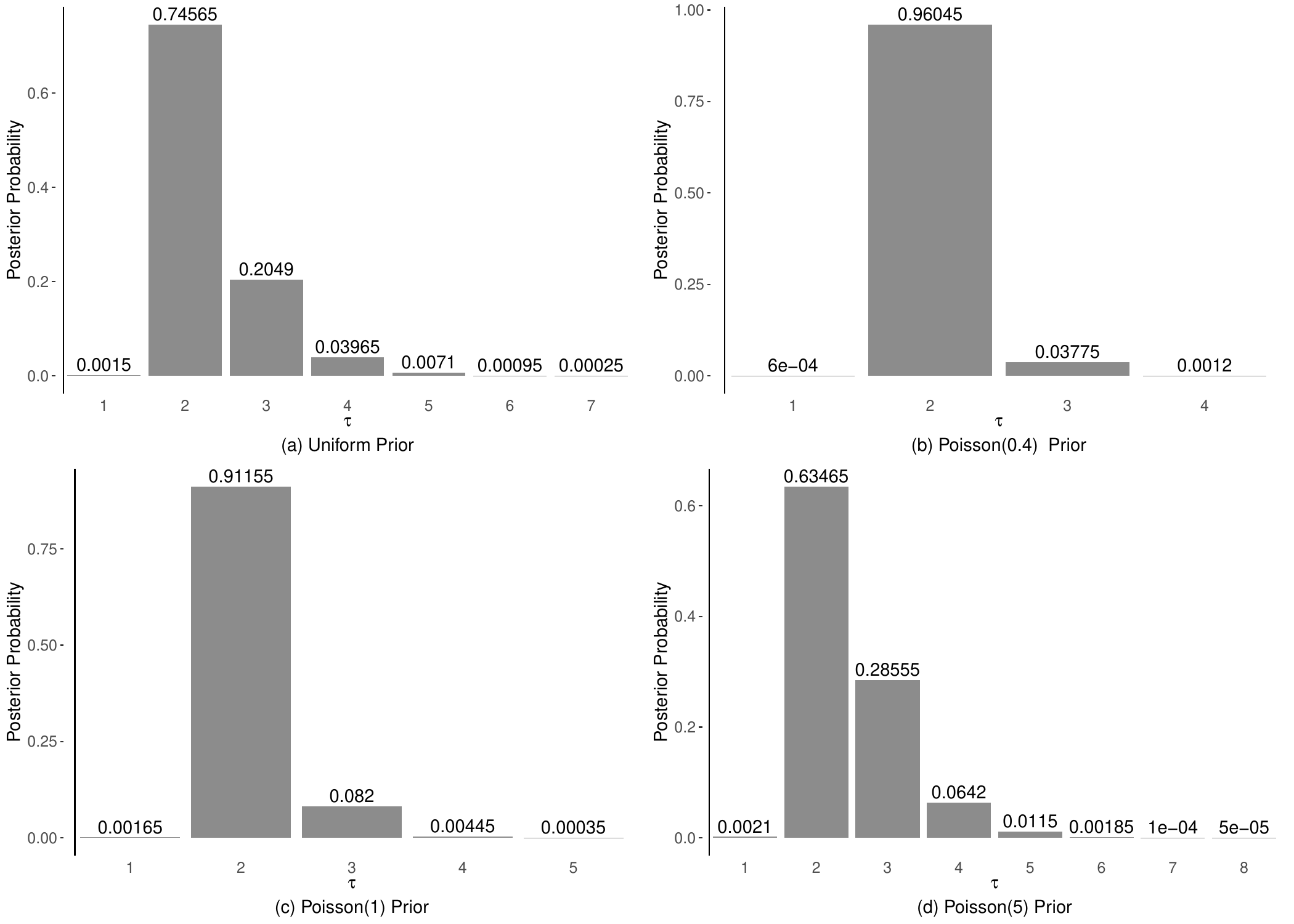}
    \caption{\small Posterior distributions of $\tau$ under four different priors of $\tau$: (a) uniform; (b) Poisson($\mu = 0.4$); (c) Poisson($\mu = 1$); (d) Poisson($\mu = 5$); All priors are truncated outside the set $\{\tau: \tau = 1, \dots, 10\}$.
    }
\label{fig:SLCposteriors}
\end{figure}

\subsubsection{Simulation studies: performance of existing frequentist and Bayesian approaches}\label{sec:App_Simu}

In this simulation study, we 
use the same $2 \times 200$ simulated data sets used in 
Section~\ref{sec: numerical}.

{\it BIC point estimate of $\tau_0$:} The numerical evidence in our simulation studies indicates the inferior performance of using a point estimate to estimate $\tau_0$:
The standard BIC point estimator recovers the true value of $\tau_0 = 3$ only 13 times (6.5\%) out of the 200 repetitions.  
The number of times 
drops to $0$ (out of 200 repetitions) when the true $\tau_0=4.$ The results are not surprising, since with a strong penalty term the BIC point estimator is often biased towards smaller $\tau$ values in practice, even though the BIC estimator is known to be a consistent estimator. 
These two extremely low rates of correctly estimating the true $\tau_0$ highlight the challenge of the traditional point estimation method in our simulation examples, and more generally the risk of using just a point estimation in a Gaussian mixture model in general. Indeed, the result shows that it is critically important to quantify the estimation uncertainty through a confidence set.

{\it Classical hypothesis testing approaches:}
Our simulation studies can also illustrate the advantage of our proposed confidence set over the existing one-sided hypothesis testing approaches in \cite{roeder_graphical_1994} and \cite{chen_inference_2012},  both of which 
do not provide an upper bound for $\tau_0$ and 
advocate for adopting 
the use of the smallest $\tau_0$ that can not be rejected. 
However, we observe in our study that such a practice could very well mislead us to an overly small model as well, resulting in possibly incorrect scientific interpretation. In particular, we find that, when the true $\tau_0=3$ and we perform the PLR test of \cite{chen_inference_2012} on $H_0: \tau_0=2$ versus 
$H_1: \tau_0 = 3$,
only 74\% times we fail to reject $H_0: \tau_0=2$. When the true $\tau_0=4$ and we perform the PLR test on $H_0: \tau_0=3$ versus $H_1: \tau_0 = 4$,
we fail 93\% times to reject $H_0: \tau_0=3$. These results suggest that most of the time, the PLR test prefers a wrong $\tau$ over the true one. Furthermore, inverting either of the one-sided tests does not produce an as useful confidence set, since the upper bound of the set is $\infty$. Comparing with the results in Section~\ref{sec:tau-numerical}, 
we can see that the repro samples method is a more intact approach to conducting inference on $\tau_0$ compared to the one that uses only a testing method under the classical Neyman-Pearson hypothesis test framework.

{\it Bayesian method:}
We have also studied 
the empirical performance of the Bayesian procedure described in \cite{richardson_bayesian_1997} 
when the number of components is unknown. Using the same 200 data sets that have produced 
Figure~\ref{fig:simulation_mixture} with $\tau_0=4$,   
Figure~\ref{fig:bayes_simulation} summarizes the results from the Bayesian procedure by R package {\it mixAK}. Here, we display the credible sets produced using (a) uniform prior on $\tau = 1, \dots, 10$; (b) Poisson(1) prior,  which prefers smaller $\tau$; and (c) Poisson(5) prior, whose mode is $4$ and matches with the true $\tau_0$.
We see that the Bayesian level-$95\%$ credible sets consistently underestimate and fail to cover $\tau_0$, with the coverage rate equal to 21.5\%, $0\%$ and $50.5\%$, respectively.
In addition, the outcomes of the Bayesian procedure are sensitive to the specification of the prior distribution of $\tau,$ as evidenced by the drastic differences among the three panels of Figure~\ref{fig:bayes_simulation}.
At first glance, the underestimation of $\tau_0$ observed in Figure~\ref{fig:bayes_simulation}(a) with the uniform prior is somewhat surprising. However, further investigation reveals that the priors on $\*\mu$ and~$\*\sigma$ also have a significant impact on estimating $\tau$, and 
the underestimation is attributable to the shrinkage effect of implementing a multilevel hierarchical model  \citep{richardson_bayesian_1997}. Indeed, it is well known that shrinkage is ubiquitous when parameters are modeled hierarchically. 
Here both the means and variances of the components would share a common prior, thus over-promoting smaller $\tau$ as observed across all three plots in Figure~\ref{fig:bayes_simulation}.

\begin{figure}[ht]
    \centering    \includegraphics[width=\textwidth, height= 5cm]{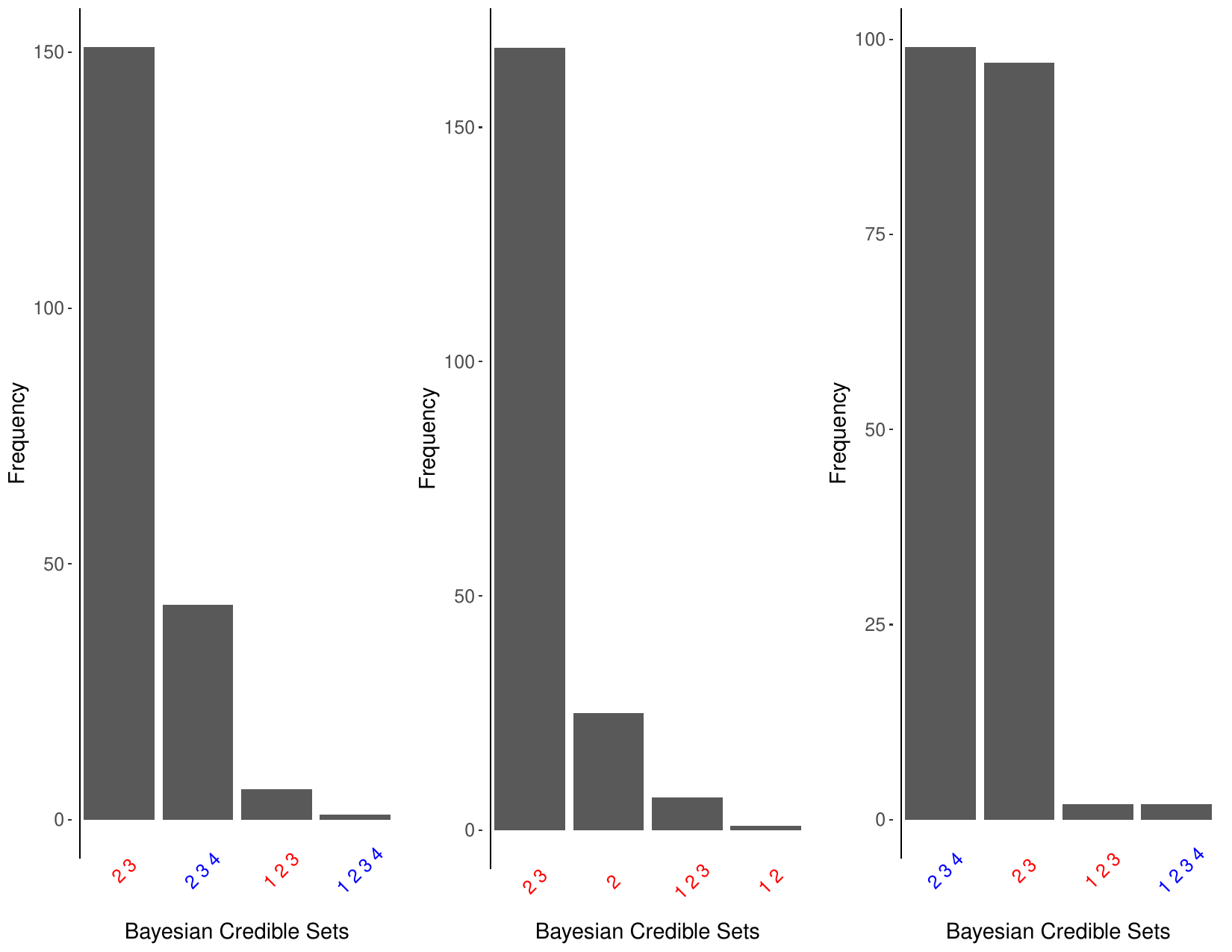}
    (a)\hspace{4.8cm}  (b)  \hspace{4.8cm} (c)
    \caption{Bar plots of the 200  95\% Bayesian credible sets obtained in the simulation study (200 repetitions) with $\tau_0 = 4$. Prior choices are (a) uniform; (b) Poisson($\mu = 1$); (c) Poisson($\mu = 5$); All priors truncated over $\{\tau=1, \dots, 10\}$.}
    \label{fig:bayes_simulation}
\end{figure}

\subsection{Numerical studies on inference for $\mu_j^{(0)}$ and $\sigma_j^{(0)}$ using the repro samples approach}
\label{sec:simu_location_scale}

\subsubsection{Analysis of SLC data}
We further provide  inference for the location and scale parameters of the mixture model, 
$\mu_k^{(0)}$ and $\sigma_k^{(0)}$, for $k = 1, \ldots, \tau_0$, using the approach proposed in Section~\ref{sec:mu-sigma}.
Table~\ref{tab:SLC}(a) reports the corresponding estimated mean, standard deviation and weight of each component.  
Table~\ref{tab:SLC}(b) reports the level-$95\%$ representative intervals $(a_k^*, b_k^*)$ and $(q_k^*, r_k^*)$ of 
$\mu_k$ and $\sigma_k$, 
for $1 \leq k \leq \tau$ and $\tau = 2, 3, 4$, respectively; See Sections~\ref{sec:ThreeSteps} and~\ref{sec:mu-sigma} for the definition of representative intervals. 
If $\tau_0$ is indeed one of the $\tau = 2, 3$ or $4$, then
the representative intervals $(a_k^*, b_k^*)$ and $(q_k^*, r_k^*)$ of the corresponding row are level-$95\%$ confidence intervals for
$\mu_k^{(0)}$ and $\sigma_k^{(0)}$, 
for $1 \leq k \leq \tau_0$, respectively.
By (\ref{eq:cross_mix}),  
our theoretically guaranteed $90\%$ confidence sets of  $(\tau_0,\mu_k^{(0)})$ and $(\tau_0,\sigma_k^{(0)})$, for a given  $1 \leq k \leq \tau_0$, are,  respectively, 
$$
\bigcup\nolimits_{\tau \in \{2,3,4\}} \{\tau\} \times (a_k^*, b_k^*) \quad \hbox{and} \quad \bigcup\nolimits_{\tau \in \{2,3,4\}} \{\tau\} \times (q_k^*, r_k^*),
$$ where $(a_k^*, b_k^*)$ and $(q_k^*, r_k^*)$ are the corresponding $\mu_k^{(0)}$ and $\sigma_k^{(0)}$ intervals reported in Table~\ref{tab:SLC}(b). 
This set of location and scale confidence sets 
have accommodated the uncertainty associated with the unknown $\tau_0$. 
Together, our results provide a full picture of the uncertainty in estimating parameters, across both the discrete parameter of the unknown number of components and the continuous location and scale parameters.   

 \begin{table}[!t]
 \centering
 \resizebox{\textwidth}{!}{\begin{tabular}{llll}
 \hline\hline
 {\bf (a)} & Estimate of location $\mu_k^{(0)}$, $k = 1, \ldots, \tau$ & Estimate of scale $\sigma_k^{(0)}$, $k = 1, \ldots, \tau$ & Estimate of weight $p_k^{(0)}$, $k = 1, \ldots, \tau$ \\ 
   \hline
 $\tau =2$  & 0.2206, 0.3654 & 0.0571, 0.1012 & 0.7057, 0.2943 \\ 
   $\tau =3$ & 0.1887, 0.2809, 0.4199 & 0.0414, 0.0474, 0.0886 & 0.4453, 0.3866, 0.168 \\ 
   $\tau = 4$  & 0.1804, 0.2556, 0.3351, 0.4403 & 0.0362, 0.0268, 0.0359, 0.086 & 0.4018, 0.2941, 0.1742, 0.1299 \\ 
    \hline
\end{tabular}}
 \centering
 \resizebox{\textwidth}{!}{\begin{tabular}{lll}
 \hline
  {\bf (b) }& Confidence/Representative Intervals of $\mu_k^{(0)}$, $k = 1, \ldots, \tau$ & Confidence/Representative Intervals of $\sigma_k^{(0)}$, $k = 1, \ldots, \tau$  \\ 
   \hline
1 & (0.233, 0.264) & (0.077, 0.104) \\ 
   $\tau =2$ & (0.212, 0.258),  (0.379, 0.495) & (0.055, 0.087), (0.019, 0.114) \\ 
   $\tau = 3$ & (0.193, 0.245), (0.219, 0.439), (0.398, 0.463) & (0.039, 0.075), (0.013, 0.321), (0.025, 0.078)\\
  $\tau =4$ & (0.168, 0.188), (0.212, 0.242), (0.321, 0.349), (0.404, 0.508) & (0.021, 0.062), (0.035, 0.053), (0.024, 0.063), (0.017, 0.161)\\
      \hline\hline
 \end{tabular}}
 \captionsetup{font= footnotesize}
 \caption{ Analysis of SLC data: (a) Estimated mean, standard deviation and weight of each component; (b) Corresponding level-$95\%$ representative intervals for $\mu_k$ and $\sigma_k$, $k = 1, \ldots, \tau$. 
 If
 $\tau_0$ is one of those in the level-$95\%$  confidence set of $\tau_0$ $\{2, 3, 4\}$, then the representative intervals in 
 the corresponding row are level-$95\%$ confidence intervals for
 $\mu_k^{(0)}$ and $\sigma_k^{(0)}$, 
 for~$1 \leq k \leq \tau_0$,~respectively.
 }
 \label{tab:SLC}
 \end{table}

\subsubsection{Simulation Results}
We
also evaluate the performance of the confidence sets proposed  for location and scale parameters $\mu_j^{(0)}$ and $\sigma_j^{(0)}$, $j = 1, \ldots, \tau_0$, as described 
in Section~\ref{sec:mu-sigma}. 
Table~\ref{tab:Inference_mu_sigma} reports the coverage rates and average lengths of the confidence sets, from $200$ repetitions, 
where the values in the brackets are the corresponding standard errors.   
For $\tau_0 =3,$ the proposed confidence sets have high coverage rates across the board for both $\mu_j^{(0)}$'s and $\sigma_j^{(0)}$'s. For $\tau_0 = 4$, the coverage rates for the scale parameters $\sigma_j^{(0)}$'s still meet or exceed the 95\% confidence level 
for three of all four clusters, with the coverage rate for the one remaining slightly low at 92.5\%.  As for location parameters $\mu_j^{(0)}$'s, two  clusters are around $95\%$ and the other two slightly undercover at 91.0\% and 92.5\%, respectively. The under coverage might be because more overlapping between the clusters has complicated our computations. 
Another contributing factor might be the smaller sample size of each cluster for $\tau_0 = 4$
(compared to $\tau_0 =3$), which leads to greater variability in the resulting confidence sets.  
Nevertheless, this set of results indicate that the proposed robust confidence sets generally achieve desired coverage rates for both $\mu^{(0)}$'s and $\sigma^{(0)}$'s.
To the best of our knowledge (and also from conversations with experts in the field), this set of empirical coverage rates on means and variances (even if $\tau_0$ is given) are among the best in simulation studies mimicking~the~setup of the SLC~data.

\begin{table}[!t]
\centering
\resizebox{0.8\textwidth}{!}{\begin{tabular}{lr|c|c}\hline \hline
& & Coverage & Average Width (SD) \\ \hline
\multirow{2}{*}{$\tau_0 =3$} & $\mu_j^{(0)}$ & .985, .965, .975 & .136(.004), .202(.006), .330(.004) \\\cline{2-4}
& $\sigma_j^{(0)}$ &  .995, .985, .995  & 
.029(.002), .010(.006), .042(.003) 
\\
 \hline
\multirow{2}{*}{$\tau_0 = 4$} 
& $\mu_j^{(0)}$ &  .910, .926, .940, .955
&  .100(.004), .192(.008), .256(.011), .215(.006) \\\cline{2-4}
& $\sigma_j^{(0)}$ &  .950, .925, .965, .960  & .015(.002), .060(.005), .076(.008), .048(.004)\\ \hline
\hline
\end{tabular}}
 \captionsetup{font= small}
\caption{ Simulation Results: Coverage and average width of $\mu_j^{(0)}$ and  $\sigma_j^{(0)}$, $j = 1, \ldots, \tau_0$, $\tau_0 = 3$ and $4$; SD is for standard deviation (reported in round brackets);  Repetitions is $200$. 
}
\label{tab:Inference_mu_sigma}
\end{table}

\section{Brief review and comparison with existing and relevant inference procedures}\label{sec:BFFcomprison}

The repro samples method is a frequentist approach, but
its development inherits several key ideas from several 
existing inferential procedures
across Bayesian, fiducial, and frequentist (BFF) paradigms \citep{Berger2020}.  
For readers' convenience, we first provide in Appendix~\ref{sec:comp_review} a brief review of several  inferential approaches across BFF paradigms that are relevant to the repro samples method . We then provide in Appendix~\ref{sec:comp_smry} a comparative discussion on  the differences, in which we  also highlight some possible improvements of the proposed repro samples method over the existing approaches. 
Appendix~\ref{sec:comp_additional_example} includes a couple of additional  illustrative examples for some further and specific comparisons.

\subsection{A brief review of existing and 
relevant BFF procedures}\label{sec:comp_review}

The repro samples method  utilizes artificial sample sets for inference. 
We review below several relevant artificial sample-based inferential approaches across the BFF paradigms. Many of these approaches have helped shape the various developments of the repro samples~method. 

\begin{itemize}
[leftmargin=1.5em, labelsep=0em, 
align=parleft,   itemsep=0pt,
  parsep=0pt,
  topsep=1pt]    \item 
{\bf  Approximate Bayesian Computation (ABC)}
refers to a family of techniques used to approximate posterior densities of $\btheta$ by bypassing direct likelihood evaluations \citep[cf.][]{Rubin1984,Tavare1997, csillery2010approximate, 
Peters2012}.  
A basic version of the so-called {\it rejection ABC algorithm} has the following [a] - [c] steps:
\begin{itemize}
[leftmargin=1.5em, labelsep=0em, 
align=parleft,   itemsep=0pt,
  parsep=0pt,
  topsep=1pt] 
\item[] {\it 
   {\rm [Step a]} Simulate an artificial model parameter $\btheta^*$ from a prior distribution $\pi(\btheta)$ and, given the simulated $\btheta^*$, simulate an artificial data set ${\bf y}^*$ 
(e.g., set ${\bf y}^* = G(\btheta^*, {\bf u}^*)$ for a simulated ${\bf u}^*$); 
{\rm [Step b]} If 
$ {\bf y}^* \approx {\bf y}_{obs}$,
 we collect the artificial parameter $\btheta^*$; 
 {\rm [Step c]} Repeat Steps [a] and [b] to obtain a large set of $\btheta^*$.}
 \end{itemize}
ABC is in fact 
a Bayesian inversion method,
since keeping $\btheta^*$ that has the match ${\bf y}^* \approx {\bf y}_{obs}$ is equivalent to keep the $\btheta^*$ that solves $G(\btheta, {\bf u}^*) \approx {\bf y}_{obs}$.

Operationally,
because it is difficult to match ${\bf y}^* \approx {\bf y}_{obs}$, the ABC method instead suggests to match
$|S({\bf y}_{obs}) - S({\bf y}^*)|\leq \epsilon$ for a pre-chosen summary statistic $S(\cdot)$ and tolerance $\epsilon >0$. The kept $\btheta^*$ form  
{\it an ABC posterior},  $p_{\epsilon}(\btheta|\*y_{obs})$.  
If $S(\cdot)$
is a {\it sufficient statistic},  then the ABC posterior $p_{\epsilon}(\btheta | \*y_{obs}) \to p(\btheta | \*y_{obs})$,  
the target posterior distribution,  as $\epsilon \to 0$ at a fast rate \citep{Barber2015,Li2016}.
To improve its computing efficiency, this basic ABC algorithm has been extended to, for instance, more complex ABC-Markov Chain Monte-Carlo (ABC-MCMC) type of algorithms, but the key matching and Bayesian inversion ideas remain. 
Note that, the ABC method seeks to match each single artificial sample copy $\*y^*$ with the observed data $\*y_{obs}$ through a summary statistic and a preset tolerance level $\epsilon$ to form a rejection~rule. 

There are two remaining issues unsolved in the ABC method~\citep[cf.,][]{Li2016,Thornton2018}. First, when $S(\cdot)$
is not sufficient, $p_{\epsilon}(\btheta | \*y_{obs}) \not\to p(\btheta | \*y_{obs})$.  
In this case,
there is no guarantee that an ABC posterior is a Bayesian posterior, thus 
the interpretation of the inferential result by the ABC posterior is unclear.  
However, it is not possible to
derive a sufficient statistic when the likelihood is intractable. Thus, the requirement that $S(\cdot)$ is sufficient places a limitation on the practical use of the ABC method (as a likelihood-free inference approach). 
Second, the ABC method requires the preset threshold value $\epsilon \to 0$ at a fast rate, leading to a degeneracy of computing efficiency. This is because the acceptance probability in Step [b] goes to $0$ for a very small $\epsilon \to 0$. 
In practice, there is no clear-cut choice for an appropriate $\epsilon$ to balance the procedure' computational efficiency and inference validity. It 
remains to be an unsolved question in the literature \cite[e.g.,][]{Li2016}.

\item {\bf  Generalized fiducial inference (GFI)} is a generalization of Fisher's fiducial method, which is understood in contemporary statistics as {an inversion method} to solve a pivot equation for model parameter $\btheta$ \citep{Zabell1992, Hannig2016, Thornton2022}. 
The GFI method extends the inversion approach from a pivot to actual data; i.e., it solves for $\btheta^*$ from
${\bf y}_{obs} = G(\btheta, {\bf u}^*)$ for artificially generated ${\bf u}^*$'s. However, since a solution is not always possible, the GFI method proposes to 
consider an optimization under an $\epsilon$-approximation: 
\begin{equation}
 \btheta_\epsilon^* =  {\rm argmin}_{\theta \in \{\theta: \,\, || {\bf y}_{obs} - G(\theta, {\bf u}^*)||^2 \leq \epsilon \}}  || {\bf y}_{obs} - G({\theta}, {\bf u}^*)||^2
\label{eq:GFI}
\end{equation}
and let $\epsilon \to 0$ at a fast rate~\citep{Hannig2016}.
A GFI method can be
viewed as a constrained optimization method that limits artificial data ${\bf y}^* = G({\theta}, {\bf u}^*)$ within an $\epsilon$-neighborhood of ${\bf y}_{obs}$. 
Eq. (\ref{eq:GFI}) can also be viewed as an inversion operation on the model equation ${\bf y}^* = G({\theta}, {\bf u}^*)$, which we refer to as a {\it Fisher Inversion} in this paper. 
The GFI development  relies on large-sample and the so-called {\it fiducial Berstein-von-Mises theorem} to justify its inference validity \citep{Hannig2016}. So often times it has a guaranteed statistical (frequentist) performance, provided that the sample size $n \to \infty$ and $G$ is smooth/differentiable in $\btheta$~\citep{Hannig2009, Hannig2016}. Different than the ABC method that uses the rejection sampling approach, the GFI uses an optimization to approximately solve the equation $\*y_{obs} = G(\btheta, \*u^*)$ within a $\epsilon$-constrained set in $\Theta$.
Operationally, the 
GFI method has the same computational issue as in the ABC method on the choice of the pre-specified $\epsilon$, since, 
as $\epsilon \to \infty$ at a fast rate, equation (\ref{eq:GFI}) may not have a solution. 

\item[(iii)] {\bf Efron's bootstrap and other related artificial sampling methods}, in which many copies of artificial data are generated, are a popular way to help quantify uncertainty in complicated estimation problems in statistics. 
They are the most successful and broadly-used artificial-sample-based inference approach 
 to date. 
Let $\hat {\*\theta}({\*y}_{obs})$ 
be a point estimator of the true parameter $\btheta_0$ and $\btheta^* = \hat{\btheta}(\*y^*)$ be the corresponding bootstrap estimator calculated using a bootstrap sample $\*y^*$. 
The key validity justification of bootstrap methods relies on the so-called {\it bootstrap central limit theorem} (CLT) \citep[e.g.,][]{Singh1981,Bickel1981}: 
$$\hat{\btheta}(\*Y^*) - \hat {\*\theta}({\*y}_{obs}) | \hat {\*\theta}({\*y}_{obs}) \sim \hat {\*\theta}(\*Y) -\btheta_0 | \btheta_0,$$  
in which the (multinomial distributed) uncertainty in resampled artificial data $\*Y^*$ asymptotically matches the  (often non-multinomial distributed) uncertainty in $\hat \btheta(\*Y)$ inherited from the target random sample $\*Y$. 
When the CLT does not apply (e.g., the parameter space $\Theta$ is a discrete set), the bootstrap methods are not supported by theory and often perform poorly. Through the lenses of confidence distribution, the bootstrap is closely connected with other BFF inference approaches \cite[cf., e.g.,][]{Xie2013, Thornton2022}.

\item[(iv)] {\bf Inferential model (IM)}  is an attempt to develop a general framework for ``prior-free exact  probabilistic inference'' \citep{Martin2015}. As described in \cite{Martin2015} and under the model assumption (\ref{eq:1}), an IM procedure include three steps: 
\begin{itemize}
[leftmargin=1.5em, labelsep=0em, 
align=parleft,   itemsep=0pt,
  parsep=0pt,
  topsep=1pt] 
    \item[] 
{\it {\rm [A-step]} Associate $(\*Y, \btheta)$ with the unobserved auxiliary statistic $\*U$, i.e., $\*Y = G({\btheta}, {\*U})$;
{\rm [P-Step]} Predict the unobserved auxiliary statistic $\*U$ with a random set ${\mathcal S}$ with distribution~${\mathbb P}_{\mathcal S}$;
{\rm [C-step]} Combine observed $\*y_{obs}$ with random set ${\mathcal S}$  into a new data-dependent random set 
$\Theta_{\*y_{obs}}({\mathcal S}) = \bigcup_{\*u \in {\mathcal S}} \left\{\btheta: \*y_{obs} = G(\btheta, \*u)\right\}$.} 
\end{itemize}
A novel aspect of the IM development that is different than a typical fiducial procedure is that the uncertainty quantification of $\*Y$ is through the parameter-free $\*U$ using a random set in [P-step]. This separation removes the impact of parameter $\*\theta$ and makes the task of uncertainty quantification easier. Since ${\mathcal S}$ is a random set, the operation in the third [C-step] is essentially an inversion operation of on the random set ${\mathcal S}$; this operation is referred to as 
{\it Fisher-Dempster}
inversion in this paper. 
In order to fully express the outcomes in a probabilistic form, the IM development needs to use a system of imprecise probability known as {\it Dempster-Shafer Calculus}. 
The outcomes of an IM algorithm are the so-called {\it plausible} and {\it belief functions}  (i.e., lower and upper probability functions) for $\btheta_0$, which can be used for frequentist inference. The plausible and belief functions are akin to the so-called {\it upper or lower confidence distributions} \citep[e.g.,][]{Thornton2022}, although the IM method is not considered by most (including Martin and Liu) as a frequentist development.  
\end{itemize}
In addition to the four reviewed methods, the repro samples also has some connections to the {\bf universal inference} \citep{wasserman2020universal} and {\bf conformal prediction} \citep{shafer2008tutorial} approaches, both of which will also be compared in the next subsection. 

\subsection{Repro samples method and connections to BFF inferential approaches}\label{sec:comp_smry}

\quad The development of a repro samples method borrows two important ideas from the aforementioned BFF procedures. First, the repro samples method utilizes artificial data samples to help quantify the uncertainty of statistical inference. This idea has been utilized in the approaches of bootstrap, ABC, GFI, and Monte-Carlo hypothesis testing, as well as more recent {\it simulation-based inference} (SBI) methods \citep{dalmasso2022,tomaselli2025robustsimulationbasedinference}, among others.  In these approaches, a key aspect of using artificial data samples for inference 
is the common attempt 
to match (in a certain way) the artificial data with the observed sample. This matching is done by directly comparing either the value or distribution of the artificial $\*y^*$ (or its summary statistic) with that of $\*y_{obs}$ (or its summary statistic). See \cite{Thornton2022} for a more elaborated discussion. The proposed repro samples method fully utilizes artificial samples and the matching idea to both develop inference approaches (e.g., Sections~\ref{sec:general} and \ref{sec: nuisance_control}) and solve computational questions (e.g., Section \ref{sec:finding_candidates}).
In addition, the repro samples method also borrows an idea from the IM development to quantify the inference uncertainty by first assessing the uncertainty of the unobserved $\*u^{rel}$, although the repro samples method goes further to emphasize the use of a nuclear mapping function, which makes the approach more flexible and effective for many complex inference problems. 
Here, by using the nuclear mapping function, our comparison is a little more flexible too, not limiting to only the direct comparison of the value or distribution of the artificial $\*y^*$ (or its summary statistic) with that of $\*y_{obs}$ (or its summary statistic).
The use of a nuclear mapping function is also closely connected to a test statistic used in a classical Neyman-Pearson test, although it is even more flexible than the notion of test statistic (see more details in Section~\ref{sec:NeymanInversion}).

We consider that the repro samples method is a further development (generalization and refinements) of these aforementioned existing procedures across the BFF paradigms. 
The two important ideas described in the previous paragraph have guided our developments of both finite and asymptotic inference procedures under the frequentist framework without the need to use any likelihood or decision theoretical criteria.
In the remaining of this subsection, we provide direct discussions, comparing the repro samples method with: (i) the ABC and GFI methods; (ii) the IM method; (iii) the bootstrap method
(iv) the universal inference method; (v) the conformal prediction approach; and (vi) the Monte-Carlo hypothesis testing procedure under the classical Neyman-Pearson framework. 
To conclude the subsection, we further describe the role that the repro samples method can play in bridging across BFF inference approaches.

\subsubsection{Repro samples method versus ABC and GFI methods}
Both the GFI and repro samples approach highlight the matching of equation $\*y_{obs} = G(\btheta, \*u^*)$ and the fact that when $\*u^*$ matches or is close to $\*u^{rel}$, the solution of $\btheta$ from the equation is equal or close to the true $\btheta_0$. The difference lies in that the GFI approach compares each single copy of artificial sample $\*y^* = G(\btheta, \*u^*)$ with the single copy of the observed $\*y_{obs}$, so does the ABC method. In the Monte-Carlo implementation of the repro samples method (e.g., using  Algorithm~\ref{alg:Ag} in Section~\ref{sec:guide}), however, we compare the single copy of the observed $\*y_{obs}$ with multiple copies of $\*y^s = G(\btheta, \*u^s)$
(or, more accurately, the realized $T(\btheta, \*u^{rel})$ with multiple copies of $T(\btheta, \*u^s)$'s), for $\*u^s \in {\mathcal V}$, where ${\mathcal V}$ is the collection of many copies of $\*u^s$. By comparing observed sample $\*y_{obs}$ (or realized $\*u^{rel}$) with many copies of artificially generated $\*y^s = G(\btheta, \*u^s)$ (or $\*u^s \in {\mathcal V}$), it allows us to use a confidence level $1 - \alpha$ to calibrate the uncertainty thus side-step the difficult issue in both GFI and ABC methods on how to appropriately preset the
threshold $\epsilon$. Furthermore, the use of the nuclear mapping function adds much flexibility to the repro samples method.

Comparing with the ABC method, the repro samples method sidesteps two issues (on validity and computing) 
encountered in 
an ABC method --- requiring  a) the summary statistic used in an ABC algorithm be a sufficient statistic 
and b) a pre-specified approximation tolerance 
value $\epsilon \to 0$ at a fast rate as sample size $n \to \infty$ \citep{Li2016}. 
Furthermore, the repro samples method is fully developed as
a  frequentist 
method, and it does not need  to assume 
 a prior distribution. 
 
 Comparing with the GFI method, the repro samples method avoids GFI's issue of requiring a pre-specified $\epsilon \to 0$ at a fast rate, as sample size $n \to \infty$, 
 similar to that of the ABC method. 
Furthermore, 
the repro samples method 
has a guaranteed finite-sample frequentist performance and can effectively handle irregular inference problems in which large-sample theorems do not apply, but a GFI method typically cannot. We also avoid the potential philosophical issues that are often associated with a fiducial inference on its interpretations. 
\subsubsection{Repro samples method versus IM method.}
The repro samples and IM  methods both promote first quantifying the uncertainty of $\*U$ and use it to help address the overall uncertainty inherited in the sampling data. To produce a level $1 - \alpha$ confidence set, the repro sample method uses a single fixed Borel set and a potential parameter value $\btheta$-dependent nuclear mapping function to help quantify the uncertainty in $\*U$. The single Borel set is fixed and not a random set, unlike the IM.   Moreover, the repro samples method is a fully frequentist approach
developed using only the standard probability tool. The IM method, on the other hand, attempts to achieve a higher level goal of producing a prior-free probabilistic inference for the unknown parameter $\btheta$. To achieve this goal, it requires to use random sets and a complex imprecise probability system known as {\it Dempster-Shafer calculus}. 

Furthermore, the IM method focuses on finite-sample inference, while the repro samples can be used for both finite and large-sample inferences. Also, the introduction and promotion of the so-called nuclear mapping function in the repro samples method allows it to greatly extend the scope of the inferential framework, 
making it more flexible and effective for many complex inference problems in both statistics and modern data science practice.

Finally, the repro samples method uses the standard Fisher inversion technique (not just the Fisher-Dempster inversion technique) that actually matches of single copies of simulated artificial $\*u^*$ with the (unknown) realized $\*u^{rel}$. The development of candidate set for discrete parameter $\eta$ in Section~\ref{sec:candidate_general} is rooted in the fact that when $\*u^*$ is equal or in a neighborhood of $\*u^{rel}$ can give us the same $\eta_0$ in the many-to-one mapping function (\ref{eq:tau-star}). This matching scheme may not be easily adapted in an IM approach.

\subsubsection{ Repro samples method versus Efron's Bootstrap and related methods.} 
 The key justification for inference methods based on bootstrap is the so-called {\it bootstrap central limit theorem}
\citep{Singh1981, Bickel1981} 
 in which the randomness of resampling (multinomial-distributed) is matched with the sampling randomness of the 
 data (often not-multinomial distributed) using the large-sample 
CLT. This matching is through the asymptotic variances (one conditional and the other unconditional) of the two point estimators (i.e., the bootstrap point estimator computed using  the artificial bootstrap sample and the conventional point estimator computed using  the actual sample data). Instead of resampling, the repro samples method directly generates artificial samples using $\*u^*$. The matching in a repro samples method is  more direct and flexible, not limiting to point estimators. More importantly, a repro samples method does not need to rely on any large-sample theorems. It is more broadly applicable and can effectively address many inference problems 
 that the bootstrap methods cannot, 
 including those in which the target parameters are discrete or non-numerical.  

\subsubsection{ Repro samples method versus universal inference approach.} Universal inference \citep{wasserman2020universal} is a  framework recently developed to provide performance-guaranteed  finite-sample 
    inferences
without regularity conditions. 
The universal inference method typically requires a tractable likelihood function and data splitting. It relies on a Markov inequality to justify its validity (testing size), at some expanse of power \citep{Dunn2022, TseDavison2023, xie2023discussion}. 
The universal inference framework can also handle (designed to handle)  irregular inference problems and is particularly effective for sequential testing problems \citep{ramdas2020admissible,xie2023discussion}. 

The case study example of Gaussian mixture is also studied in \cite{wasserman2020universal} but, unlike the repro samples method, it cannot provide a two-sided confidence set for the unknown number of components $\tau_0$.
The repro samples method
is a likelihood-free approach that is also effective for irregular inference problems, and it does not systemically suffer a loss of power while maintaining inference validity. 
See also Appendix~\ref{sec:comp_additional_example} for a concrete,  illustrative example that dissects the sources of power loss in the universal inference method while the repro samples method does not suffer any power loss.

Finally, the universal inference method has been extended to situations where the model is completely unknown and misspecified \cite{park2023robust}. A similar extension, but under a specific binary classification situation , has also been done within the repro samples framework \cite{hou2025repro}.

\subsubsection{ Repro samples method versus conformal prediction method}\label{sec:conformal}
 Conformal prediction is a popular nonparametric prediction method used to construct a level $1 - \alpha$ confidence set for a future  (unobserved) observation, say $y_{new}$, given observed 
$n$ exchangeable data points $\{y_{1}, \ldots, y_n\}$  \citep{vovk2005algorithmic, shafer2008tutorial}. Particularly, for a potential value $y$ of $y_{new}$, if it is within a $\alpha$-level central region of $\{y_{1}, \ldots, y_n\}$, then the value $y$ is {\it conformal} with $\{y_{1}, \ldots, y_n\}$
 at the $\alpha$-level. The collection of such $y$ values forms a level $1 - \alpha$ predictive set of $y_{new}$. In our case, we are interested in 
 a fixed unknown parameter
$\btheta_0$, instead of the random 
 $y_{new}$. Nevertheless, we can use the same conformal concept to intuitively explain the Monte-Carlo repro samples approach (cf., Algorithm~\ref{alg:Ag} in Section~\ref{sec:guide}): 
 For a potential value $\btheta$ of $\btheta_0$, we 
 generate $|{\mathcal V}|$ sets of artificial data $\*y^s = G(\btheta,\*u^s)$ (or, more accurately,  $|{\mathcal V}|$ copies of $T(\btheta, \*u^s)$'s), for $ \*u^s \in {\mathcal V}$; If the  
 observed data set $\*y_{obs}$ is conformal with the $|{\mathcal V}|$ artificial data sets $\*y_s$'s (or, more accurately, if there exists an $\*u^* \in {\mathcal U}$ such that $\*y_{obs} = G(\btheta, \*u^*)$ and the value $T(\*u^*, \btheta)$
is conformal with multiple copies of $T(\btheta, \*u^s)$'s), for $ \*u^s \in {\mathcal V}$, 
 we collect the $\btheta$ value. The collection of such $\btheta$ values forms a level $1 - \alpha$ confidence set of $\btheta_0$. 
 
 \subsubsection{ Repro samples method versus the classical Monte-Carlo hypothesis testing approach}  As stated in Appendix~\ref{sec:conformal} above, a Monte-Carlo implementation  of the repro samples method in Algorithm~\ref{alg:Ag} of Section~\ref{sec:guide} can be described as follows: for any potential value $\btheta \in \Theta$, we generate multiple copies of $\*u^s$ (denote the set of their collection ${\mathcal V}$) and compute corresponding $T(\*u^s, \btheta)$'s. If there exists an $\*u^* \in {\mathcal U}$ such that $\*y_{obs} = G(\btheta, \*u^*)$ and the value $T(\*u^*, \btheta)$
{\it is  conformal with} the multiple copies of $T(\*u^s, \btheta)$, $\*u^s \in {\mathcal V}$, at the level $1- \alpha$, then we keep the $\btheta$ in  $\Gamma_{1-\alpha}(\*y_{obs})$. Here, the matching of a single copy of $T(\*u^*, \btheta)$ with the multiple copies of $T(\*u^s, \btheta)$'s is evaluated by a conformal measure at level $1 - \alpha$, where the concept of  ``conformal" is 
the same as that in  conformal prediction \citep{vovk2005algorithmic}. 

In the special case when the nuclear mapping is defined through a test statistic $T(\*u, \btheta) = \widetilde T(\*y, \btheta)$ where $\*y = G(\btheta, \*u)$ (as discussed in Section~\ref{sec:NeymanInversion}) and also that we can always find a $\*u^* \in {\mathcal U}$ such that $\*y_{obs} = G(\btheta, \*u^*)$ for any $\btheta$, the Monte-Carlo Algorithm~\ref{alg:Ag} is simplified to: for any potential value $\btheta \in \Theta$, we generate $|{\mathcal V}|$ copies of artificial data 
$\*y^s = G(\btheta, \*u^s)$ and compute the corresponding $\widetilde T(\*y^s, \btheta)$, for $\*u^s \in {\mathcal V}$. If the test statistic $\widetilde T(\*y_{obs}, \btheta)$ is conformal with the many copies of $\widetilde T(\*y^s, \btheta)$'s at level $\alpha$, then we keep this $\btheta$ to form the level $1 - \alpha$ confidence set $\Gamma_{1-\alpha}(\*y_{obs})$. 
It is easy to see that this simplified algorithm is equivalent to a Monte-Carlo implementation of the classical hypothesis test of $H_0: \btheta_0 = \btheta$, where the conformal statement above corresponds to the statement of not rejecting $H_0$. Again, we see that the classical testing approach may be viewed a special case of our repro samples method. In the Monte-Carlo version, we utilize multiple copies of artificial samples $\*y^s$'s and compare them to a single copy of the observed data $\*y_{obs}$.

\subsubsection
{Artificial-sample-based inference and a bridge of BFF paradigms} 

Finally, as an artificial-sample-based inference procedure, the repro samples method plays a role in bridging across BFF inference approaches. 

There have been several recent developments on the foundation of statistical inference across Bayesian, fiducial, and frequentist (BFF) paradigms. \cite{Reid2022} and \cite{Thornton2022} provide comprehensive overviews on several recent BFF research developments at the foundation level. Based on the development of confidence distribution,  \cite{Thornton2022} also explore the aspect of making inferences through matching artificial samples with observed data across BFF procedures. It argues that the matching of simulated artificial randomness with the sampling randomness inherited from a statistical model provides a unified
bridge to connect BFF inference procedures.
Here, the artificial randomness includes, for example, bootstrap randomness in bootstrap, MCMC randomness in a Bayesian analysis, the randomness of $\*U^*$ in fiducial procedures such as the 
GFI and IM methods. 
The repro samples method is another development that seeks to align simulated artificial randomness with the sampling randomness. By doing so we can effectively measure and quantify the uncertainty in our statistical inferential statements. 
The repro samples method can be used to address many difficult inference problems, especially those involving a discrete parameter space and also those where large-sample CLT does not apply. It provides a further advancement on the foundation of statistical inference to meet the growing need for ever-emerging data science.

\subsection{Additional illustrations Examples for Comparison}\label{sec:comp_additional_example}

\subsubsection{Example of systemic power loss ---  universal inference versus repro samples method} 
\label{sec:universal-power-loss}

\cite{TseDavison2023} provides an excellent note dissecting the performance of the universal inference method. The simple normal example they used to create Figure 1 in their paper offers a clear insight that shows the universal inference method systematically suffers a loss of power due to the use of the Markov inequality and sample splitting. We revisit this example below and demonstrate that the repro samples method does not exhibit the same systemic power loss issue.

\begin{example}
\label{ex:universal}
\citep[][Example setting for Figure 1]{TseDavison2023}
  Suppose we have $n = 1{,}000$ samples from $N(\theta_0, 1)$ where the unknown true parameter value $\theta_0 =0$. The goal is to make inference for the unknown $\theta_0$.

 If we use the repro samples method, we have $\bar y^{obs} = \theta_0 + \bar u^{rel}$, where $\bar y^{obs}$ is the sample mean and $\bar u^{rel}$ is the mean of the $n = 1,000$ realized $N(0,1)$ noises. So if in Eq (\ref{eq:B}) we use a nuclear mapping function $T(\mathbf{u}) = \bar u$ with a corresponding Borel set $B_{1-\alpha} = 
 (\Phi^{-1}(\frac \alpha2)/\sqrt{n}, \Phi^{-1}(1 - \frac \alpha2)/\sqrt{n}$), 
 then the interval by the repro samples method $$\Gamma_{1 - \alpha}(\mathbf{y}_{obs}) = \left(\bar y^{obs} + \frac1{\sqrt{n}}\Phi^{-1}(\frac \alpha2), \,\,\, \bar y^{obs} + \frac1{\sqrt{n}} \Phi^{-1}(1 - \frac \alpha2)\right), $$ which is the shortest level $1 - \alpha$ confidence interval. To prepare for comparison next, we re-write $\Gamma_{1 - \alpha}(\mathbf{y}_{obs})$ as 
 \begin{equation}
 \label{eq:universal-repro-example}
 \Gamma_{1 - \alpha}(\mathbf{y}_{obs}) = \left\{ \theta: 
 n(\theta - \bar y^{obs})^2 \leq  \chi_1^2(1 - \alpha) \right\}, 
 \end{equation}
 where $\chi_1^2(1 - \alpha)$ is the $1 - \alpha$ quantile of the $\chi_1^2$ distribution.  In this example, the inequalities ``$\ge$" in both Eq (\ref{eq:B}) and Eq (\ref{eq:G1}) can all be replaced by ``$=$".  There is no loss of any power.

Now if we use a universal inference method, which needs to split the data in half, the level $1 - \alpha$ confidence interval is 
\begin{equation}
  \label{eq:universal-example}
C_{1 - \alpha}(\mathbf{y}_{obs}) = \left\{\theta: \frac{n}2(\theta - \bar y_0)^2 - \frac{n}2(\bar y_1 - \bar y_0)^2 \leq  - 2\log(\alpha)\right\},
 \end{equation}
 where $\bar y_1$ and $\bar y_0$ are the sample mean of $500$ training and testing data, respectively.  By comparing (\ref{eq:universal-repro-example}) and (\ref{eq:universal-example}) and following the breakdown equations (5)-(7)
of \cite{TseDavison2023},  
we can clearly see two major sources of systemic power loss of the universal confidence interval $C_{1 - \alpha}(\mathbf{y}_{obs})$:
\begin{itemize}
[leftmargin=1.5em, labelsep=1.5em, 
align=parleft,   itemsep=0pt,
  parsep=0pt,
  topsep=0pt] 
\item[(a)] by sample splitting, noting that $\frac{n}2(\theta - \bar y_0)^2 - \frac{n}2(\bar y_1 - \bar y_0)^2 < n(\theta - \bar y^{obs})^2$,
\item[(b)] by the use of the
Markov inequality to control type 1 error in all cases, noting that $-2\log(\alpha) >  \chi_1^2(1 - \alpha)$). 
\end{itemize}

\end{example}

\subsubsection{Inference under different model expressions of the sample inference problem -- Repro vs GFI \& IM} \label{sec:App-G-unique-example}

In this discussion, we define the notion of ``{\it the same inference problem}'' as having the same probabilistic model with the same sample likelihood function. \cite{Hannig2009} noted that there may be different versions of a generative model (\ref{eq:1}) for the same inference problem. For instance, \citealp[][Remark~6]{Hannig2009} pointed out that a generative model based on the full sample 
\(\mathbf{Y} = (Y_1, \ldots, Y_n)\) and one based on a sufficient statistic \(S(\mathbf{Y})\) may differ, even though they correspond to the same underlying probabilistic model. This raises a potential issue of \emph{model identifiability} if inference results depend on the form of the generative model used and differ across model formulations. Indeed, both the outcomes (i.e., the generalized fiducial distribution and the plausible function) 
of GFI and IM depend on the formulation, especially on the distribution of \(\mathbf{U}\) used in the model. Their outcomes (distributions) for the same inference problem differ by a Jacobian matrix for difference model forms. However, with the use of a nuclear function, the repro samples method side-steps this issue, and as long as we use the same nuclear mapping function in our inference, we can produce the same outcome for these alternative forms of the generative model. Below is a concrete example that illustrates this discussion.

\begin{example}
\label{ex:fiducial}
\citep[][Example~4]{Hannig2016} Suppose $Y_i \sim U(\theta, \theta^2)$ for $i = 1, \ldots, n$ with $\theta > 1$. Then the generative model based on the full sample is
\begin{equation}  \label{eq:ModelFullY}
Y_i = \theta + \theta(\theta - 1) U_i, \quad U_i \sim U(0,1),
\end{equation}
while the model based on the minimal sufficient statistic $(Y_{(1)}, Y_{(n)})$, the smallest and largest $Y_i$,  is
\begin{equation}  \label{eq:ModelSufficientY} 
Y_{(1)} = \theta + \theta(\theta - 1) U_{(1)} \mbox{ and }
Y_{(n)} = \theta + \theta(\theta - 1) U_{(n)},
\end{equation}
where $U_{(1)}$ and $U_{(n)}$ are the smallest and largest $U_i$, respectively.

In the repro samples framework, such alternative formulations in (\ref{eq:ModelFullY}) and (\ref{eq:ModelSufficientY}) still yield the same inference on $\btheta$, provided the same nuclear mapping function is applied. Note that $\{\theta: 
y_i^{obs} = \theta + \theta(\theta - 1) u^*_i, \exists u^*_i \in (0,1), 
i = 1, \ldots, n, 
\} = \{\theta: y_{(1)}^{obs} = \theta + \theta(\theta - 1) u^*_{(1)}, \,\, 
y_{(n)}^{obs} = \theta + \theta(\theta - 1) u^*_{(n)}, \exists (u^*_{(1)},  u^*_{(n)}), 0<u^*_{(1)}< u^*_{(n)} < 1 \}$. If we adopt the same nuclear mapping function $(U_{(1)}, U_{(n)})$ with the Borel set 
\[
B_{1-\alpha} = \left\{\mathbf{U}^*: U_{(1)}^* > \left(\tfrac{\alpha}{2}\right)^{1/n}, \, U_{(n)}^* < 1 - \left(\tfrac{\alpha}{2}\right)^{1/n} \right\},
\]
 then  the resulting level-$(1 - \alpha)$ confidence set for $\theta$ defined in (\ref{eq:G1}) remains the same whether the generative model is (\ref{eq:ModelFullY}) or (\ref{eq:ModelSufficientY}).
This contrasts with the Generalized Fiducial and Inferential Model (IM) approaches, where results depend on the distributional form of $\mathbf{U}$ in (\ref{eq:ModelFullY}) and $(U_{(1)}, U_{(n)})$ in (\ref{eq:ModelSufficientY}). Particularly, $\mathbf{U}=(U_1,\ldots,U_n)$ and $(U_{(1)}, U_{(n)})$ differ by a Jacobian factor, so the fiducial distributions obtained from the two forms of generative models are not the same \citep[see][Remark~6]{Hannig2016}.    
\end{example}

\bibliographystyle{agsm_nourl}
\bibliography{ref4all}



\end{document}